\title{TensorRight: Automated Verification of Tensor Graph Rewrites}         %% [Short Title] is optional;
\author{Jai Arora}
\affiliation{
  %\position{Position1}
  %\department{Department1}              %% \department is recommended
  \institution{University of Illinois Urbana-Champaign}            %% \institution is required
  %\streetaddress{Street1 Address1}
  %\city{City1}
  %\state{State1}
  %\postcode{Post-Code1}
  \country{USA}                    %% \country is recommended
}
\email{jaia3@illinois.edu}          %% \email is recommended
\author{Sirui Lu}
\affiliation{
  %\position{Position1}
  %\department{Department1}              %% \department is recommended
  \institution{University of Washington}            %% \institution is required
  %\streetaddress{Street1 Address1}
  %\city{City1}
  %\state{State1}
  %\postcode{Post-Code1}
  \country{USA}                    %% \country is recommended
}
\email{siruilu@cs.washington.edu}          %% \email is recommended
\author{Devansh Jain}
\affiliation{
  %\position{Position1}
  %\department{Department1}              %% \department is recommended
  \institution{University of Illinois Urbana-Champaign}            %% \institution is required
  %\streetaddress{Street1 Address1}
  %\city{City1}
  %\state{State1}
  %\postcode{Post-Code1}
  \country{USA}                    %% \country is recommended
}
\email{devansh9@illinois.edu}          %% \email is recommended
\author{Tianfan Xu}
\affiliation{
  %\position{Position1}
  %\department{Department1}              %% \department is recommended
  \institution{University of Illinois Urbana-Champaign}            %% \institution is required
  %\streetaddress{Street1 Address1}
  %\city{City1}
  %\state{State1}
  %\postcode{Post-Code1}
  \country{USA}                    %% \country is recommended
}
\email{tianfan3@illinois.edu}          %% \email is recommended
\author{Farzin Houshmand}
\affiliation{
  %\position{Position1}
  %\department{Department1}              %% \department is recommended
  \institution{Google}            %% \institution is required
  %\streetaddress{Street1 Address1}
  %\city{City1}
  %\state{State1}
  %\postcode{Post-Code1}
  \country{USA}                    %% \country is recommended
}
\email{farzinh@google.com}          %% \email is recommended
\author{Phitchaya Mangpo Phothilimthana}
\affiliation{
  %\position{Position1}
  %\department{Department1}              %% \department is recommended
  \institution{Google DeepMind}            %% \institution is required
  %\streetaddress{Street1 Address1}
  %\city{City1}
  %\state{State1}
  %\postcode{Post-Code1}
  \country{USA}                    %% \country is recommended
}
\email{mangpo@google.com}          %% \email is recommended
\author{Mohsen Lesani}
\affiliation{
  %\position{Position1}
  %\department{Department1}              %% \department is recommended
  \institution{University of California, Santa Cruz}            %% \institution is required
  %\streetaddress{Street1 Address1}
  %\city{City1}
  %\state{State1}
  %\postcode{Post-Code1}
  \country{USA}                    %% \country is recommended
}
\email{mlesani@ucsc.edu}          %% \email is recommended
\author{Praveen Narayanan}
\affiliation{
  %\position{Position1}
  %\department{Department1}              %% \department is recommended
  \institution{Google}            %% \institution is required
  %\streetaddress{Street1 Address1}
  %\city{City1}
  %\state{State1}
  %\postcode{Post-Code1}
  \country{USA}                    %% \country is recommended
}
\email{pravnar@google.com}          %% \email is recommended
\author{Karthik Srinivasa Murthy}
\affiliation{
  %\position{Position1}
  %\department{Department1}              %% \department is recommended
  \institution{Google}            %% \institution is required
  %\streetaddress{Street1 Address1}
  %\city{City1}
  %\state{State1}
  %\postcode{Post-Code1}
  \country{USA}                    %% \country is recommended
}
\email{ksmurthy@google.com}          %% \email is recommended
\author{Rastislav Bodik}
\affiliation{
  %\position{Position1}
  %\department{Department1}              %% \department is recommended
  \institution{Google DeepMind}            %% \institution is required
  %\streetaddress{Street1 Address1}
  %\city{City1}
  %\state{State1}
  %\postcode{Post-Code1}
  \country{USA}                    %% \country is recommended
}
\email{rastislavb@google.com}          %% \email is recommended
\author{Amit Sabne}
\affiliation{
  %\position{Position1}
  %\department{Department1}              %% \department is recommended
  \institution{Google}            %% \institution is required
  %\streetaddress{Street1 Address1}
  %\city{City1}
  %\state{State1}
  %\postcode{Post-Code1}
  \country{USA}                    %% \country is recommended
}
\email{asabne@google.com}          %% \email is recommended
\author{Charith Mendis}
\affiliation{
  %\position{Position1}
  %\department{Department1}              %% \department is recommended
  \institution{University of Illinois Urbana-Champaign}            %% \institution is required
  %\streetaddress{Street1 Address1}
  %\city{City1}
  %\state{State1}
  %\postcode{Post-Code1}
  \country{USA}                    %% \country is recommended
}
\email{charithm@illinois.edu}          %% \email is recommended
\keywords{Unbounded Verification, Tensor Compilers, Denotational Semantics}
\newcommand*{\funname}[1]{\ifmmode\mathit{#1}\else\textit{#1}\fi}
\theoremstyle{definition}
\newtheorem{definition}{Definition}
\newtheorem{theorem}{Theorem}
\newtheorem{lemma}[theorem]{Lemma}
\newcommand{\project}[0]{\textsc{TensorRight}}
\newcommand{\icon}[1]{\raisebox{.5pt}{\textcircled{\raisebox{-.7pt}{#1}}}}
\newcommand{\rulelabelp}[2]{%
%    \protected@write \@auxout {}{\string \newlabel {#1}{{rule \textsc{#2}}{\thepage}{#2}{#1}{}} }%
   \protected@write \@auxout {}{\string \newlabel {#1}{{\textsc{#2}}{\thepage}{#2}{#1}{}} }%
   \hypertarget{#1}{}
}
\newcommand{\mysetminusD}{\hbox{\tikz{\draw[line width=0.6pt,line cap=round] (3pt,0) -- (0,6pt);}}}
\newcommand{\mysetminusT}{\mysetminusD}
\newcommand{\mysetminusS}{\hbox{\tikz{\draw[line width=0.45pt,line cap=round] (2pt,0) -- (0,4pt);}}}
\newcommand{\mysetminusSS}{\hbox{\tikz{\draw[line width=0.4pt,line cap=round] (1.5pt,0) -- (0,3pt);}}}
\newcommand{\mysetminus}{\mathbin{\mathchoice{\mysetminusD}{\mysetminusT}{\mysetminusS}{\mysetminusSS}}}
\def\<{\langle}
\def\>{\rangle}
\def\llet{\mathsf{let \ }}
\def\lin{\mathsf{\ in \ }}
\def\lif{\mathsf{if}}
\def\lthen{\mathsf{then}}
\def\lelse{\mathsf{else}}
\def\fmap{\mathsf{fmap}}
\def\fold{\mathsf{fold}}
\newcommand{\sem}[1]{\left\llbracket #1 \right\rrbracket}
\def\lland{\ \wedge \ }
\newcommand{\removetext}[1]{}
\newcommand{\dsl}[0]{\project{} DSL}
\newcommand{\frontend}[0]{\project{} Frontend}
\def\reshape{\mathsf{reshape}}
\def\slice{\mathsf{slice}}
\def\expand{\mathsf{expand}}
\def\tiota{\mathsf{iota}}
\def\transpose{\mathsf{transpose}}
\def\relabel{\mathsf{relabel}}
\def\dyslice{\mathsf{dy{\hbox{-}}slice}}
\def\dyupdateslice{\mathsf{dyup{\hbox{-}}slice}}
\def\reduce{\mathsf{reduce}}
\def\convbase{\mathsf{conv{\hbox{-}}base}}
\def\conv{\mathsf{conv}}
\def\tdot{\mathsf{dot}}
\def\pad{\mathsf{pad}}
\def\padlow{\mathsf{pad{\hbox{-}}low}}
\def\const{\mathsf{const}} 
\def\concat{\mathsf{concat}}
\def\binary{\mathsf{binary}}
\def\reverse{\mathsf{reverse}}
\def\bitcast{\mathsf{bitcast}}
\def\clamp{\mathsf{clamp}}
\def\select{\mathsf{select}}
\newcommand{\xla}[0]{\texttt{XLA}}
\newcommand{\xlahlo}[0]{\texttt{XLA-HLO}}
\newcommand{\df}{\stackrel{\text{df}}{=}}
\newcommand{\scalarf}{\mathsf{scalarf}}
\newcommand{\rclass}{\textsf{RClass}}
\newcommand{\rclasses}{\textsf{RClasses}}
\newcommand{\axes}{\textsf{Axes}}
\newcommand{\access}{\textsf{Access}}
\newcommand{\dtype}{\textsf{DType}}
\newcommand{\shape}{\textsf{Shape}}
\newcommand{\ind}{named}
\newcommand{\agg}{aggregated}
\newcommand{\indtrans}{index transformer}
\newcommand{\cond}{condition}
\newcommand{\valexp}{\mathsf{valid\mbox{-}expr}}
\newcommand{\rcrank}{\rclass\mbox{-}rank}
\newcommand{\vars}{vars}
\newcommand{\cproject}[3]{#1\vert^{#2}_{#3}}
\newcommand{\true}{\textsf{true}}
\newcommand{\false}{\textsf{false}}
\newcommand{\subst}[3]{#1\{#3/#2\}}
\newcommand{\IGNORE}[1]{}
\newcommand{\lstref}[1]{\hyperref[lst:#1]{Listing~\ref*{lst:#1}}}
\newcommand{\figref}[1]{\hyperref[fig:#1]{Fig.~\ref*{fig:#1}}}
\newcommand{\figlabel}[1]{\label{fig:#1}}
\newcommand{\secref}[1]{\hyperref[sec:#1]{\S\ref*{sec:#1}}}
\newcommand{\seclabel}[1]{\label{sec:#1}}
\newcommand{\tabref}[1]{\hyperref[tab:#1]{Table~\ref*{tab:#1}}}
\newcommand{\tablabel}[1]{\label{tab:#1}}
\newcommand{\ruleref}[1]{\hyperref[rule:#1]{Rule~\ref*{rule:#1}}}
\newcommand{\defref}[1]{\hyperref[def:#1]{Definition~\ref*{def:#1}}}
\newcommand{\deflabel}[1]{\label{def:#1}}
\newcommand{\lemmaref}[1]{\hyperref[lemma:#1]{Lemma~\ref*{lemma:#1}}}
\newcommand{\lemmalabel}[1]{\label{lemma:#1}}
\newcommand{\theoremref}[1]{\hyperref[theorem:#1]{Theorem~\ref*{theorem:#1}}}
\newcommand{\theoremlabel}[1]{\label{theorem:#1}}
\newcommand{\eqnref}[1]{\hyperref[eq:#1]{Equation~\ref*{eq:#1}}}
\newcommand{\eqnlabel}[1]{\label{eq:#1}}
\newcommand{\algref}[1]{\hyperref[alg:#1]{Algorithm~\ref*{alg:#1}}}
\newcommand{\alglabel}[1]{\label{alg:#1}}
\newcommand{\semref}[1]{\hyperref[rule:#1]{\textsc{#1}}}
\newcommand{\semlabel}[1]{\label{rule:#1}}
\crefname{lstlisting}{listing}{listings}
\Crefname{lstlisting}{Listing}{Listings}
\definecolor{forestgreen}{rgb}{0.0, 0.27, 0.13}
\definecolor{cssgreen}{rgb}{0.0, 0.5, 0.0}
\definecolor{hookergreen}{rgb}{0.0, 0.44, 0.0}
\definecolor{navyblue}{rgb}{0.0, 0.0, 0.7}
\definecolor{mygreen}{HTML}{BEDB39}
\definecolor{myorange}{HTML}{FD7400}
\definecolor{mydarkgreen}{HTML}{29B895}
\definecolor{mydarkgreen2}{HTML}{1F8A70}
\definecolor{mydarkgreen3}{HTML}{104A3C}
\definecolor{myyellow}{HTML}{FFEA59}
\definecolor{myblue}{HTML}{004358}
\definecolor{myblue2}{HTML}{0088B2}
\definecolor{mygray}{HTML}{DFDFDF}
\lstdefinelanguage{haskell}{
 basicstyle=\ttfamily\footnotesize,
 keywords={do, newRClass, newRClass, newMap, newMaps, newTensor, Rewrite, intBinOp, boolBinOp, concatTensor, concatTensorList, compare, slice, pad, iota, dynamicSlice, dynamicUpdateSlice, reduce, broadcast, relabel, reshape, dot, conv, pointWiseCondition, elementWiseCondition, addPrecondition, addSIRelation, verifyDSL, precondition, siRelation, elementWiseArith},
 keywordstyle=\bfseries\color{mydarkgreen3},
 numbers=left,
 numbersep=1pt,
 stepnumber=1,
 ndkeywords={boolean, throw, import},
 ndkeywords={verify, assert, equal},
 ndkeywordstyle=\bfseries\color{magenta},
 identifierstyle=\color{black},
 sensitive=false,
 comment=[l]{\#},
 morecomment=[s]{/*}{*/},
 commentstyle=\color{brown}\ttfamily,
 stringstyle=\color{red}\ttfamily,
}
\lstdefinestyle{c}{
 basicstyle=\ttfamily\footnotesize,
 language=C,
 keywords={NULL, switch, int, str, float, void, const},
 keywordstyle=\bfseries\color{mydarkgreen3},
 ndkeywords={boolean, throw, import},
 ndkeywords={def, return, class, if ,else, while, do, else, true, false , catch, for, inline, __global__, __device__, __constant__, __register__, __shfl_sync, warp_size, acc, parallel},
 ndkeywordstyle=\bfseries\color{mydarkgreen3},
 identifierstyle=\color{black},
 sensitive=false,
 comment=[l]{\#},
 morecomment=[s]{/*}{*/},
 commentstyle=\color{mydarkgreen2}\ttfamily,
 stringstyle=\color{red}\ttfamily,
 moredelim=**[is][\color{myblue2}\bfseries]{`}{`},
 captionpos=b,
 flexiblecolumns,
 numbersep=1pt,
 keepspaces,
 escapechar=@,
 basicstyle=\ttfamily\scriptsize,
 belowskip=0.05cm,
 aboveskip=0.05cm,
 belowcaptionskip=0cm,
 abovecaptionskip=0cm,
}
\begin{document}

%% Abstract
%% Note: \begin{abstract}...\end{abstract} environment must come
%% before \maketitle command
\begin{abstract}
    Tensor compilers, essential for generating efficient code for deep learning models across various applications, employ tensor graph rewrites as one of the key optimizations.
    These rewrites optimize tensor computational graphs with the expectation of preserving semantics for tensors of arbitrary rank and size.
    Despite this expectation, to the best of our knowledge, there does not exist a fully automated verification system to prove the soundness of these rewrites for tensors of arbitrary rank and size.
    Previous works, while successful in verifying rewrites with tensors of concrete rank, do not provide guarantees in the unbounded setting.
    
    To fill this gap, we introduce \project{}, the first automatic verification system that can verify tensor graph rewrites for input tensors of arbitrary rank and size.
    We introduce a core language, \dsl{}, to represent rewrite rules using a novel axis definition, called \emph{\agg{}-axis}, which allows us to reason about an unbounded number of axes.
    We achieve unbounded verification by proving that there exists a bound on tensor ranks,
    under which bounded verification of all instances implies the correctness of the rewrite rule in the unbounded setting.
    We derive an algorithm to compute this rank using the denotational semantics of \dsl{}. \project{} employs this algorithm to generate a finite number of bounded-verification proof obligations, which are then dispatched to an SMT solver using symbolic execution to automatically verify the correctness of the rewrite rules.
    We evaluate \project{}'s verification capabilities by implementing rewrite rules present in \xla{}'s algebraic simplifier.
    The results demonstrate that \project{} can prove the correctness of 115 out of 175 rules in their full generality, while the closest automatic, \emph{bounded}-verification system can express only 18 of these rules.

\end{abstract}

\maketitle
\section{Introduction} \seclabel{intro}

Deep learning frameworks, such as TensorFlow~\cite{tensorflow}, PyTorch~\cite{pytorch1}, and JAX~\cite{jax}, along with their backend optimizing tensor compilers, such as \xla{}~\cite{xla} and TorchInductor~\cite{pytorch2}, have been instrumental in enabling machine learning (ML) practitioners to experiment, train, and deploy various neural network architectures.
These tensor compilers manipulate computations with tensors as first-class objects, utilizing tensor computational graphs as their intermediate representation (IR).
The nodes in these graphs represent tensor operators, while the edges denote input/output tensors.
Examples include \xla{}'s High Level Operators (\xlahlo{})~\cite{xla-hlo}, PyTorch's \textsf{torch.fx} operators~\cite{torch-fx}, and ONNX's tensor operators~\cite{onnx}.
Middle-end tensor compiler optimizations often transform these tensor graphs to produce more efficient variants.
A key optimization which has attracted significant research~\cite{tensat, taso, pet} is tensor graph rewrites.
This optimization is a common pass in industrial tensor compilers such as \xla{}\footnote{\url{https://github.com/openxla/xla/blob/main/xla/hlo/transforms/simplifiers/algebraic_simplifier.cc}}.

% properties of these rewrites and the forms
Tensor graph rewrites transform a subgraph of the original tensor graph to an equivalent version that is more efficient.
For example, consider the $\tdot$ (einsum) operator that takes two tensors and a set of contraction and batch axes as input and performs sum-of-products over the specified contraction axes.
If the batch and contraction axes are empty (precondition), an expensive $\tdot$ operation may be decomposed into a simpler composition of element-wise multiplication and expand operations (represented as $\tdot\textsf{(A,B)} \Rightarrow_C  \binary(*,\expand\textsf{(A)},\expand\textsf{(B)})$ in our notation, with precondition $C$).
In general, these rewrite rules are expected to be correct for tensors of arbitrary rank (number of axes) and size (individual axis sizes). %, unless explicitly guarded by tensor rank (number of dimensions) and size (dimension size) preconditions. 
We term this property as the rewrite rules being \emph{rank-} and \emph{size-polymorphic}. We found that most tensor graph rewrites in \xla{}'s algebraic simplifier have this property. Hence, it is important that compiler developers ascertain that the rules are indeed correct for input tensors of arbitrary rank and size.

There have been multiple efforts at formally proving the correctness of these rewrites. However, \emph{automatically} verifying tensor graph rewrites for tensors of arbitrary rank and size has remained challenging. Previous automatic verification techniques instantiate fixed-ranked, concrete-sized tensors with symbolic values as a part of their verification process. As a result, their proofs do not generalize to the \emph{unbounded setting}, where input tensors can be of arbitrary rank and size. Further, existing verification systems do not support preconditions on rules, which we find abundant in compilers such as \xla{}.
For example, TASO~\cite{taso} proposes an axiomatic approach to verify tensor graph rewrites. The rewrite rules they synthesize from their axiom pool are rank- and size-polymorphic. However, the axioms themselves are only verified on small, concrete-sized input tensors.
TENSAT~\cite{tensat} improves the search efficiency of TASO, but relies on TASO's rewrite rule synthesis and verification process.
PET~\cite{pet} uses statistical testing to give rigorous guarantees on more expressive rewrites for tensor computational graphs of concrete-rank and concrete-sized tensors. 
Successful works that have verified tensor graph rewrites in the unbounded setting have been manual verification efforts with proof assistants like Coq~\cite{atl-pldi, atl-popl}. %\jai{expand} %ATL~\cite{atl-popl} system proposes a tensor language and proves that the rewrites written in that language are correct in both rank and size polymorphic manner with the aid of the Coq theorem prover. ~\cite{atl-pldi} extends it to include tensor graph lowering process. These require manual effort that should be repeated for each new rewrite and can be burdensome during the rewrite rule development process. This process involves possibly many incorrect efforts at formulating a new rewrite rule and manually finding counterexamples is difficult for each such iteration. %Additionally, it is difficult to manually find counterexamples for each incorrect rule. 

In this paper, we introduce the first automatic, push-button verification system, \project{}, that allows users to succinctly express and verify tensor graph rewrite rules for input tensors of arbitrary rank and size. Further, in order to aid tensor compiler developers, we develop \project{} to be able to handle the complexities of rewrite rules found in the \xla{} compiler. We have to overcome several key challenges in realizing these goals. %We had to overcome several key challenges in building the \project{} system. %can reason about the correctness of tensor graph rewrite rules for input tensors of arbitrary rank and size. Further, \project{} supports verification of rules guarded by preconditions covering rules used by the industrial strength \xla{} compiler. %We had to overcome multiple technical challenges in this process. 
%A key challenge of building \project{} is to construct an algorithm that reasons about the correctness of rewrite rules in the unbounded setting. On one hand, directly instantiating unbounded tensors is not feasible. On the other hand, using fixed, \ATTN{concrete-shaped} symbolic tensors during automatic verification can result in proofs that are only valid for input tensors of that particular rank and size.

\paragraph{\textbf{Representation}}  First, we need to succinctly represent tensor graph rewrite rules in a way that allows reasoning about their correctness in the unbounded setting.
Second, we need to model the highly parameterized operators in \xlahlo{}.
For example, \xlahlo{}'s $\conv$ operator works with arbitrary batch, contraction, and spatial axes specifications and has rich padding and dilation attributes.
Further, \xla{} rewrite rules can be guarded by complicated preconditions. %We give such an example in \secref{}.

We overcome these challenges by designing a rewrite rule specification language, called \dsl{}, with tensor operators closely resembling those in \xlahlo{}.
\dsl{} introduces a novel axis definition called \emph{\agg{}-axis} that represents a possibly unbounded set of axes, rather than capturing one axis at a time.
A tensor in \dsl{} consists of a \emph{finite} number of \agg{}-axes, where they can potentially be instantiated to any number of axes.
This allows us to reason about how input tensors are mutated by tensor operators, treating similar axes collectively.
All \dsl{} operators are defined to work with \agg{}-axes, making the rewrite specifications rank- and size-polymorphic.
Additionally, the representation with \emph{\agg{}-axes} is general enough to support a sizable subset of tensor operators and their parameterizations, as defined in \xlahlo{} (e.g. $\conv$).
However, the representation cannot support \emph{layout-sensitive} operators, such as $\reshape$ and $\bitcast$.
We implemented the most common operators appearing in \xla{}'s rewrite rules to demonstrate \dsl{}'s expressivity.
Rewrite specifications in \dsl{} accept preconditions, which can also be rank- and size-polymorphic.
Finally, we provide denotational semantics of these operators, which we use to verify these rules. %To the best of our knowledge, this is the first time semantics for an industrial strength compiler's tensor operators are presented. 

\paragraph{\textbf{Verification}} The next major challenge that we need to overcome is: given specifications in the \dsl{}, how can we automatically verify that the rewrites are correct? This requires proving them correct in the unbounded setting. Instantiating unbounded tensors is not feasible, while using symbolic tensors of concrete-rank and size during automatic verification can result in proofs that only hold for input tensors of that particular rank and size, as shown in \secref{unboundedmotivation}.

To handle unbounded sizes, we leverage the capabilities of SMT solvers to perform unbounded reasoning using uninterpreted functions and unbounded integers.
For unbounded ranks, we overcome the challenge by proving that there exists a bound on the ranks, such that if we prove the rule correct for all possible ranks within the bound, then the rule is also correct for arbitrary ranks and sizes.
This allows us to reduce the unbounded verification problem to a set of bounded verification cases, which can be dispatched to an automatic verification engine. %We formally prove this fact constructively by providing 
We derive a bound inference algorithm using the denotational semantics of \dsl{}.
Given these theoretical foundations, \project{} automatically verifies a given rewrite rule written in \dsl{} in two steps. 
First, it uses the bound inference algorithm to find a \emph{sufficient} rank for each \agg{}-axis. Next, for all ranks equal to or below this bound, \project{} instantiates concrete-ranked input tensors for each \agg{}-axis. It then uses big-step operational semantics, derived from the denotational semantics of \dsl{}, to create proof obligations as SMT queries using symbolic execution. If all of these bounded cases are proven by an SMT solver, \project{} then concludes that the rewrite rule is correct.

We implement the \project{} system in Haskell and use Grisette~\cite{grisette} as the symbolic evaluation engine.
We further provide a \frontend{} that abstracts away some core \dsl{} constructs to make developing rewrite rules easier. \project{} dispatches all SMT queries to the Z3~\cite{z3} SMT solver to ascertain the soundness of a given rewrite rule.
To evaluate \project{}'s capabilities in representation and verification, we assessed it using a comprehensive set of rules incorporated within \xla{}'s algebraic simplifier.
We successfully represented 121 of these rules and verified 115 of them in \project{}. Almost all rules were verified in the unbounded setting within a second. Comparatively, other bounded-verification systems, such as TASO~\cite{taso} and PET~\cite{pet}, could only express 14 and 18 rules and verify 6 and 16 rules, respectively, exemplifying \project{}'s representation and verification capabilities.
Further, we show case studies where \project{} helps generalize rewrites with complicated preconditions, showcasing its usefulness during compiler development.

In summary, this paper makes the following contributions. 
\begin{itemize}
    \item We present a language, \dsl{} (\secref{dsl}) to specify tensor graph rewrites with
    (1) a novel axis construct called \agg{}-axes, allowing representation of operators and rewrite rules that are rank- and size-polymorphic 
    (2) operator specifications that closely resemble \xlahlo{}
    (3) precondition specifications on rewrite rules.
    \item We provide denotational semantics for \dsl{} (\secref{denotational-sem}).
    To the best of our knowledge, this is the first formalization of a sizable subset of operators in a production-quality tensor IR.
    \item We provide the first \emph{automatic} verification strategy (\secref{verification}) that can reason about the correctness of tensor graph rewrites that are rank- and size-polymorphic.
    \item We develop \project{} that implements this verification strategy and evaluate it (\secref{eval}) by representing and verifying tensor graph rewrites present in \xla{}'s algebraic simplifier.
\end{itemize}

\noindent\project{} is open-source, publicly available at \url{https://github.com/ADAPT-uiuc/TensorRight}

\section{Background and Motivation} \seclabel{motivation}

We first provide background on tensors and related concepts before motivating the need for \emph{automatic} and \emph{unbounded} verification of tensor graph rewrites. We then describe a key insight of our unbounded-verification methodology.

\subsection{Preliminaries} \seclabel{prelim}

\emph{Tensors} are a generalization of scalars (0-dimensional tensors), vectors (1-dimensional tensors), and matrices (2-dimensional tensors) to $n$-dimensional objects.
%\charith{A popular implementation of a tensor...}
A popular implementation of a tensor is multi-dimensional arrays.
An \emph{axis} of a tensor (also commonly known as a dimension) represents a direction across which the tensor's data can be traversed.
The \emph{rank} of a tensor (also commonly known as its dimensionality) refers to the number of axes the tensor has.
The \emph{shape} of a tensor describes the \emph{size} of each axis, i.e., the number of elements that exist along each axis.
The \emph{size} of a tensor refers to the total number of elements in the tensor, calculated as the product of individual axis-sizes.
The axes of an $n$-dimensional tensor are numbered from $0$ up to $n-1$.
Each element of a tensor is uniquely identified by a list of \emph{positional} indices, with one index for each axis.

For example, a 2-dimensional tensor $m$, containing 3 groups of elements along axis 0 and 4 groups of elements along axis 1, has a rank of 2, a shape of $3 \times 4$, and a size of 12.
Such a tensor can be accessed by a pair of positional indices: $m[i, j]$ denotes the value at the $i^{\text{th}}$ position along axis 0 and $j^{\text{th}}$ position along axis 1.
Another way to implement tensors, called \emph{named tensors}, assigns explicit names to the axes of a tensor, referred to as \emph{named-axes}.
\project{} adopts the latter approach, which we describe in detail in \secref{named-axes}.

\paragraph{Tensor Graph Rewrites}

A tensor operator refers to any operation that takes tensors as input and returns tensors as output.
A tensor computational graph is a directed acyclic graph that represents a sequence of tensor operations.
The nodes in the graph represent tensor operators, while the edges indicate the flow of data (tensors) between these operators.
Tensor graph rewriting is a key optimization employed by tensor compilers, which replaces a subgraph of the input graph
with another, equivalent subgraph, subject to certain preconditions.
This optimization is governed by a set of \emph{rules}, called tensor graph rewrite rules.

The algebraic simplifier of the \xla{} compiler %\cite{xla-compiler}
contains hundreds of tensor graph rewrites, executed during program compilation.
Given that \xla{} consistently executes this simplifier, it is crucial to ensure the correctness of these rewrite rules.
However, currently the rewrite rules are not verified.
Therefore, the developers rely on unit tests and limit the generality of the rules to alleviate concerns of introducing compiler bugs.

We aim to automatically verify tensor graph rewrites deployed in \xla{}, which work with tensors of arbitrary ranks and sizes.
Existing verified tensor graph rewrite systems are either not automatic, lack support for complex \xlahlo{} operators and preconditions, or cannot verify rewrite rules in the unbounded setting .
We now demonstrate the importance of automatic and unbounded verification of tensor graph rewrites
and discuss a key insight that enables us to achieve these goals.

\subsection{Need for Automatic Verification}

\newcommand{\complexrule}[0]{\textsc{FoldConvInputPad}}

The algebraic simplifier in \xla{} contains complex rewrite rules whose correctness is not intuitive.
Developers often limit the generality of these rewrite rules by imposing preconditions.
For instance, consider the \complexrule{} rule shown in \figref{motivating-example-automatic}.

\begin{wrapfigure}{R}{0.35\textwidth}
\vspace{-0.5em}
    \centering
    $
    \small
    \begin{array}{l}
        \boldsymbol{\mathsf{FoldConvInputPad:}}
        \\
        \ \ \ \ \llet S_{ol} = S_l + S_{lp} \lin \\ 
        \ \ \ \ \llet S_{oh} = S_h + S_{hp}  \lin \\ 
        \ \ \ \ \ \ \ \ \conv(\pad(t, 0, S_{lp}, S_{hp}, S_{ip}), t',
        \\
        \phantom{\ \ \ \ \ \ \ \ \conv(}B, F, O, 
        \\
        \phantom{\ \ \ \ \ \ \ \ \conv(}
        S_l, S_h, S_i, S'_i)
        \\
        \phantom{\mathsf{concatenate}(}
        \Longrightarrow_{S_{ip} = 0 \lland S_i = 1}
        %_{\substack{m_{ol} = m_l + (m_i + \overline{1}) \times m_{lp}\\ 
                                %   \ \ m_{oh} = m_h + (m_i + \overline{1}) \times m_{hp}\\ 
                                %   \ \ \ \ \ \ m_{oi} = (m_i + m_{pi}) + (m_i \times m_{pi}) }} 
                                %   \\[1cm]
        \\
        \ \ \ \ \ \ \ \ \conv(t, t', B, F, O, 
        \\
        \phantom{\ \ \ \ \ \ \ \ \conv(}
        S_{ol}, S_{oh}, S_{i}, S'_{i})
    \end{array}
    $
    \caption{\complexrule{} rule taken from \xla{}'s Algebraic Simplifier.}
    \figlabel{motivating-example-automatic}
    %\vspace{-0.7em}
\end{wrapfigure}

The idea behind the \complexrule{} rule is simple: fold the padding operator into the convolution operator.
It folds the edge padding, i.e., the lower and higher padding ($S_{lp}$ and $S_{hp}$) into the convolution padding ($S_l$ and $S_h$), but does not fold the interior padding ($S_{ip}$) into the base dilation ($S_i$).
The precondition of this rule requires zero interior padding ($S_{ip} = 0$) and a base dilation of one ($S_i = 1$).
The \xla{} repository contains the following comment\footnote{\url{https://github.com/openxla/xla/blob/ac380bb187abdb3efbbac776141e3a2300209232/xla/service/algebraic_simplifier.cc\#L8764-L8767}} on the preconditions:

\begin{quote}
    \emph{Edge padding composes with itself in the straightforward way, but composing interior padding is nontrivial, and we cowardly refuse to think about it. If we see interior padding in either the kPad or conv, bail if there's any sort of padding in the other.}
\end{quote}

Developers restrict the rule because the general case is non-trivial.
Existence of an automatic verification system would allow incremental refinement of the rule and provide counterexamples during development.
As a consequence, it would enable developers to build more general rewrite rules and be confident that the rewrite rules are valid for arbitrary ranks and sizes.
For example, \project{} can prove a more general version of the \complexrule{} rule with interior padding, as discussed in \secref{case-studies}.
The generalization involves removing the precondition and computing the folded padding and dilation attributes as: $S_{ol} = S_l + S_i \times S_{lp}$, $S_{oh} = S_h + S_i \times S_{hp}$, and $S_{oi} = S_i + S_i \times S_{ip}$.
This folding of attributes is non-trivial to come up with, but \project{} can prove the rewrite rule to be valid.
With the aid of \project{}, we believe that compiler engineers will be able to quickly iterate through complex rewrite rules and get feedback on their correctness through counterexamples, thereby increasing productivity. % and more complicated, bug-free rewrite rules inside the compiler.

\subsection{Need for Unbounded Verification} \seclabel{unboundedmotivation}

\definecolor{exampleyellow}{HTML}{ac973e}
\definecolor{rulegreen}{HTML}{00CC00}
\definecolor{rulepurple}{HTML}{cf8bff}
\definecolor{ruleblue}{HTML}{007FFF}
\definecolor{ruleorange}{HTML}{ff7247}
\newcommand{\zero}[0]{\emph{zero}}
\newcommand{\one}[0]{\textcolor{exampleyellow}{\mathsf{one}}}
\newcommand{\slicedyup}[0]{\textsc{SliceDyUpSlice}}

We demonstrate with an example that verifying a tensor graph rewrite for a certain rank may not be sufficient to guarantee correctness in the unbounded setting, where input tensors can be of arbitrary ranks and sizes.
%We demonstrate with an example, that verifying a tensor graph rewrite for a certain rank may not sufficient to verify the rewrite in the unbounded setting, where input tensors can be of arbitrary ranks and sizes.
%It may be tempting to believe that verifying a rule on a small, fixed-rank is sufficient.
Consider the \slicedyup{} rule shown in \eqnref{unbounded-invalid}, where $S$ denotes the shape of the input tensor \textsf{Y}, $\zero$ denotes a tensor with all values as 0, and $\overline{v}$ denotes a vector with all values as $v$.
Other operator inputs like \textsf{start}, \textsf{end}, and \textsf{stride} are called \emph{operator attributes}.

\vspace{5pt}
\begin{equation} \eqnlabel{unbounded-invalid}
    \dyupdateslice(\slice(\mathsf{Y},
    \hspace{-0.5cm}\tikz[baseline=(n1.base)]{\node[inner sep=0pt] (n1) {$\overline{0}$};
    \node[overlay, above left=of n1, yshift=-0.5cm, xshift=1cm] (t1) {\textsf{start}};
    \draw [->, shorten <=2pt] (n1.north) to (t1.south);
    },
    \tikz[baseline=(n2.base)]{\node[inner sep=0pt] (n2) {$\displaystyle\left\lfloor \frac{S+1}{2} \right\rfloor$};
    \node[overlay, above=of n2, yshift=-0.7cm] (t2) {\textsf{end}};
    \draw [->, shorten <=2pt] (n2.north) to (t2.south);
    },
    \tikz[baseline=(n3.base)]{\node[inner sep=0pt] (n3) {$\overline{1}$};
    \node[overlay, above right=of n3, yshift=-0.5cm, xshift=-1cm] (t3) {\textsf{stride}};
    \draw [->, shorten <=2pt] (n3.north) to (t3.south);
    }\hspace{-0.5cm} \hspace{-0.05cm}),
    \hspace{-0.25cm}\tikz[baseline=(n4.base)]{\node[inner sep=0pt] (n4) {$\zero$};
    \node[overlay, below left=of n4, yshift=0.65cm, xshift=1.4cm] (t4) {\textsf{update}};
    \draw [->, shorten <=2pt] (n4.south) to (t4.north);
    },
    \tikz[baseline=(n5.base)]{\node[inner sep=0pt] (n5) {$\overline{1}$};
    \node[overlay, below right=of n5, yshift=0.65cm, xshift=-1.25cm] (t5) {\textsf{offset}};
    \draw [->, shorten <=2pt] (n5.south) to (t5.north);
    }
    \hspace{-0.3cm}) \Longrightarrow \dyupdateslice(\slice(\mathsf{Y},
    \hspace{-0.5cm}\tikz[baseline=(n6.base)]{\node[inner sep=0pt] (n6) {$\overline{0}$};
    \node[overlay, above left=of n6, yshift=-0.5cm, xshift=1cm] (t6) {\textsf{start}};
    \draw [->, shorten <=2pt] (n6.north) to (t6.south);
    },
    \tikz[baseline=(n7.base)]{\node[inner sep=0pt] (n7) {S};
    \node[overlay, above=of n7, yshift=-0.42cm] (t7) {end};
    \draw [->, shorten <=2pt] (n7.north) to (t7.south);
    },
    \tikz[baseline=(n8.base)]{\node[inner sep=0pt] (n8) {$\overline{2}$};
    \node[overlay, above right=of n8, yshift=-0.5cm, xshift=-1cm] (t8) {\textsf{stride}};
    \draw [->, shorten <=2pt] (n8.north) to (t8.south);
    }\hspace{-0.55cm}),
    \hspace{-0.25cm}\tikz[baseline=(n9.base)]{\node[inner sep=0pt] (n9) {$\zero$};
    \node[overlay, below left=of n9, yshift=0.65cm, xshift=1.4cm] (t9) {\textsf{update}};
    \draw [->, shorten <=2pt] (n9.south) to (t9.north);
    },
    \tikz[baseline=(n10.base)]{\node[inner sep=0pt] (n10) {$\overline{1}$};
    \node[overlay, below right=of n10, yshift=0.65cm, xshift=-1.25cm] (t10) {\textsf{offset}};
    \draw [->, shorten <=2pt] (n10.south) to (t10.north);
    }\hspace{-0.3cm})
\end{equation}
\vspace{10pt}
The left-hand side (\textsf{LHS}) expression first applies the $\slice$ operator, which extracts a sub-tensor from \textsf{Y} by picking elements from the $0^{\text{th}}$ index (\textsf{start}) up to the ${\lfloor \frac{S+1}{2}\rfloor}^{\text{th}}$ index (\textsf{end}) along each axis.
%\charith{middle and end at the same time is confusing. Can we use start indices and end indices directly? 0 and S+1/2}.
It is then followed by the $\dyupdateslice$ operator, which \emph{zeroes} (\textsf{update}) out all the points whose axes indices are greater than or equal to 1 (\textsf{offset}).
The right-hand side (\textsf{RHS}) expression first applies the $\slice$ operator, which extracts a sub-tensor from \textsf{Y} by picking every $2^{nd}$ element (\textsf{stride}) along each axis.
It is then followed by the $\dyupdateslice$ operator, which zeroes out all the points whose axes indices are greater than or equal to 1.

\begin{figure}[h]
\centering
%\captionsetup{justification=centering, font=small}
\begin{tabular}{|c|c|c|}
    \hline
    & {\small Rank-1 input} & {\small Rank-2 input} \\ \hline
    \textsf{LHS} & 
    \begin{minipage}{0.4\textwidth}
        \centering
        \includegraphics[width=0.5\textwidth]{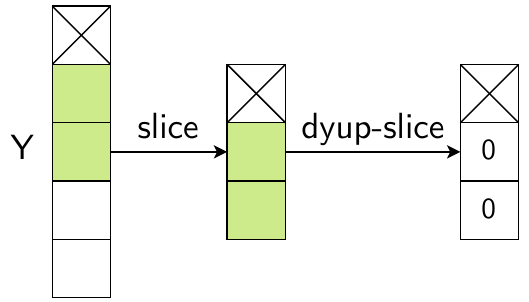}
    \end{minipage}
    & 
    \begin{minipage}{0.4\textwidth}
        \centering
        \includegraphics[width=0.9\textwidth]{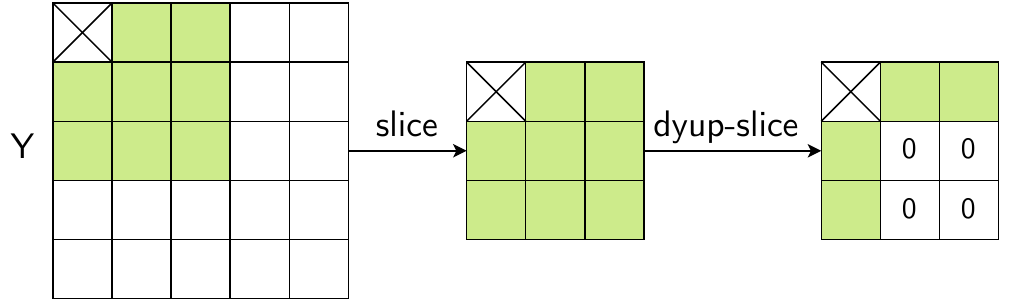}
    \end{minipage} \\ \hline
    \textsf{RHS} & 
    \begin{minipage}{0.4\textwidth}
        \centering
        \includegraphics[width=0.5\textwidth]{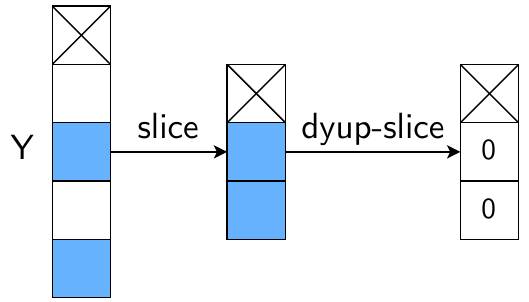}
    \end{minipage}
    & 
    \begin{minipage}{0.4\textwidth}
        \centering
        \includegraphics[width=0.9\textwidth]{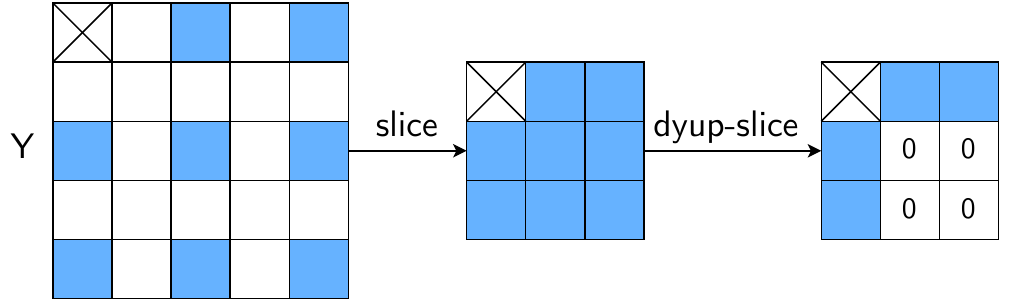}
    \end{minipage} \\ \hline
\end{tabular}
\caption{Illustration for \slicedyup{} rule depicting various regions in the input tensor for ranks 1 and 2. The leftmost element is shown as crossed out. The \textcolor{rulegreen}{\textsf{green}} and \textcolor{ruleblue}{\textsf{blue}} regions indicate the elements extracted by $\slice$ in \textsf{LHS} and \textsf{RHS}, respectively. The zeroed out region after the $\dyupdateslice$ is indicated by 0-elements.}
\figlabel{unbounded-motivation}
\end{figure}

\figref{unbounded-motivation} illustrates the rule applied to input tensors of rank 1 and 2.
The crossed-out element corresponds to the point with all indices as 0.
We refer to this as the \emph{leftmost} element.
The leftmost element is left untouched throughout the computation in both \textsf{LHS} and \textsf{RHS}.
The \textcolor{rulegreen}{\textsf{green}} region, along with the leftmost element, corresponds to the sub-tensor obtained after the $\slice$ in \textsf{LHS}.
The \textcolor{ruleblue}{\textsf{blue}} region, along with the leftmost element, corresponds to the sub-tensor obtained after the $\slice$ in \textsf{RHS}.
The zeroed out region after the $\dyupdateslice$ is represented using 0-elements.

As we can observe for the 1-dimensional case, the final \textsf{LHS} and \textsf{RHS} expressions are equal since the \textcolor{rulegreen}{\textsf{green}} and \textcolor{ruleblue}{\textsf{blue}} regions get zeroed out completely.
Meanwhile, for the 2-dimensional case, the \textcolor{rulegreen}{\textsf{green}} and \textcolor{ruleblue}{\textsf{blue}} regions are not completely zeroed out, so the \textsf{LHS} and \textsf{RHS} expressions have regions that do not match.
Therefore, the rule is valid for rank 1 but is invalid for rank 2.
In fact, the rule is invalid for any rank higher than 2.
This example demonstrates that verifying the rule for a certain rank, in this case 1, does not guarantee correctness at other ranks, making it important to verify the rule for all possible ranks.
%i.e., in the unbounded setting \charith{definition of unbounded setting covers both rank and size; can we omit it here?}.

\subsection{Key Observation} \seclabel{key-observation}

A rewrite rule is valid if the \textsf{LHS} and \textsf{RHS} expressions are equal for input tensors of any rank.
Otherwise, the rule is invalid and would exhibit a \emph{counterexample}.
A counterexample contains a valuation of all the variables in the rule (including tensors and operator attributes) and an access $A$ (list of positional indices), such that $\mathsf{LHS}[A]$ and $\mathsf{RHS}[A]$ do not match.

Verifying a rule for each rank separately is infeasible since there are a denumerable number of such ranks.
However, we make an observation that there exists a \emph{sufficient} rank $k$, such that if the rule is valid for rank $k$, then it can be proven valid for any rank greater than $k$.  
This insight allows us to avoid verifying the rewrite rule for ranks greater than $k$.
A more intuitive way to understand this is through its contraposition, i.e., if a counterexample exists at a rank greater than $k$, then a counterexample exists at rank $k$.
We demonstrate with the same example rule from \secref{unboundedmotivation} that verifying the rule for rank 2 is sufficient to ensure correctness for all higher ranks.

\begin{figure}[h]
    \centering
    %\captionsetup{justification=centering}
    \begin{subfigure}{0.48\textwidth}
        \centering
        \includegraphics[width=\textwidth]{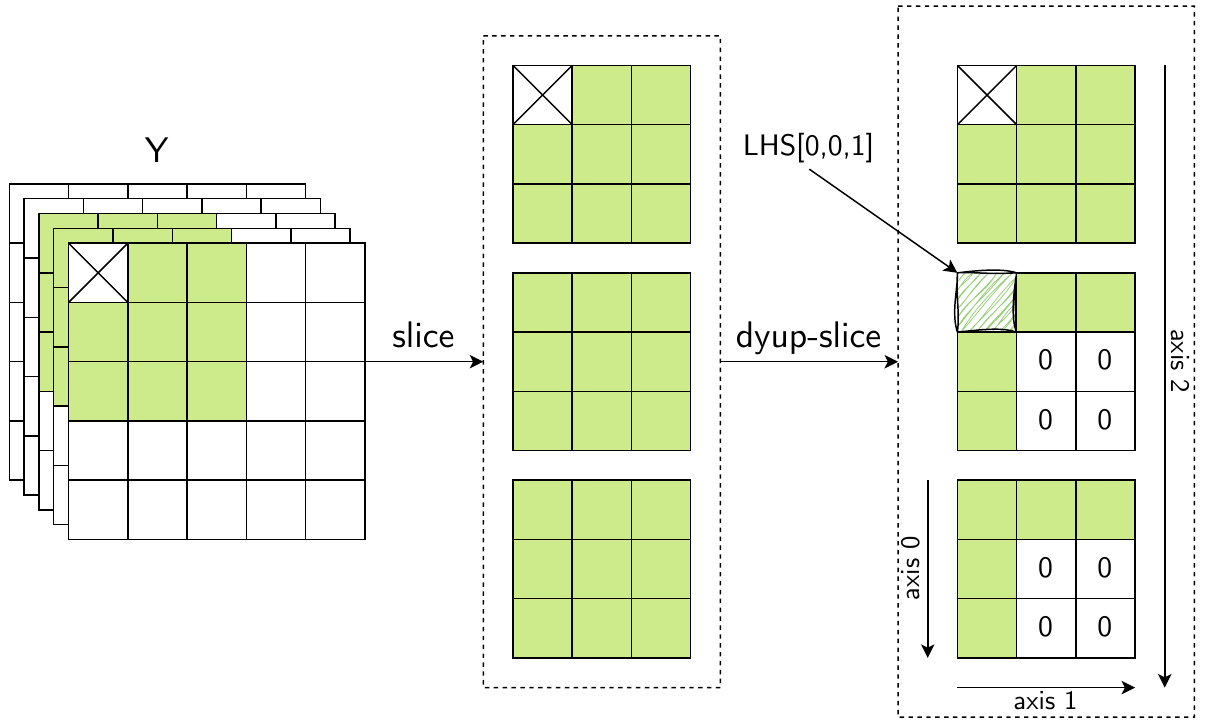}
        \caption{\textsf{LHS} for rank-3 input}
        \figlabel{um-3d-lhs}
    \end{subfigure}
    \begin{subfigure}{0.48\textwidth}
        \centering
        \includegraphics[width=\textwidth]{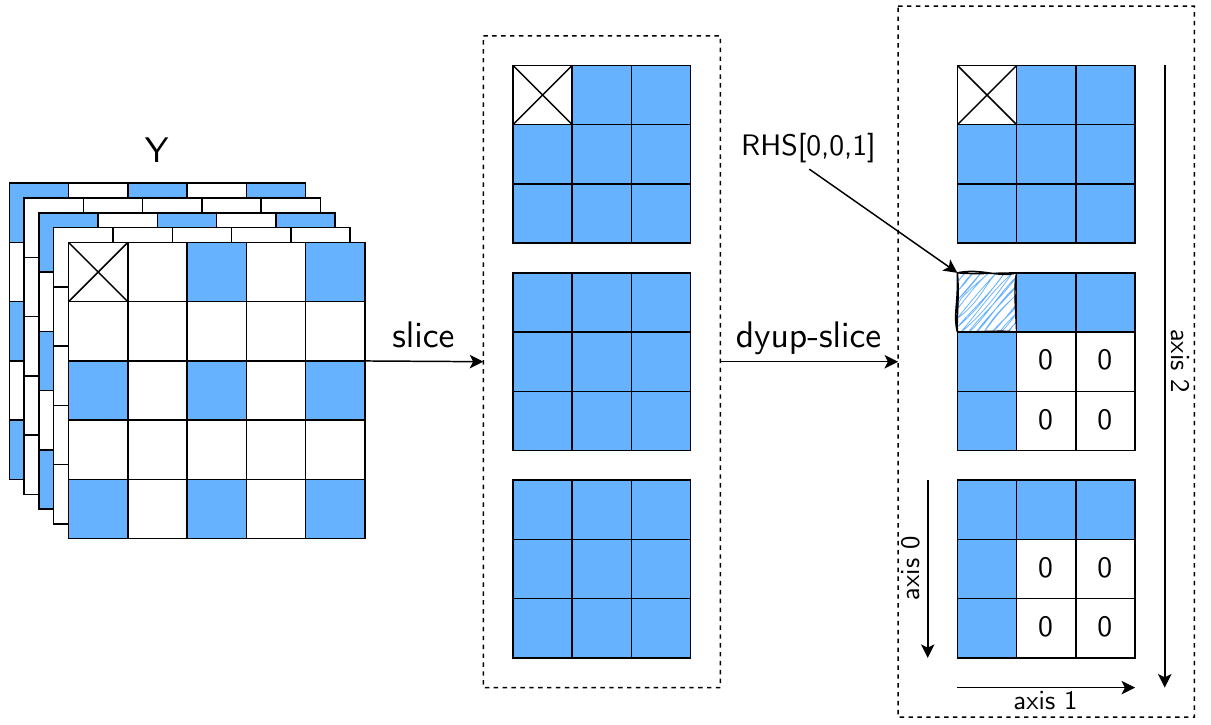}
        \caption{\textsf{RHS} for rank-3 input}
        \figlabel{um-3d-rhs}
    \end{subfigure}
    \vspace{-1em}
    \caption{The \slicedyup{} specialized for rank-3 inputs. The \textsf{LHS} and \textsf{RHS} expressions are presented using 2-dimensional cross-sections along axis 2. The access $A^3 = [0,0,1]$ is highlighted in \textsf{LHS} and \textsf{RHS}.}
    \figlabel{um-3d}
    %\vspace{-0.5em}
\end{figure}

Consider the \slicedyup{} rule applied to input tensors of rank 3, as shown in \figref{um-3d}.
Clearly, the \textsf{LHS} (\figref{um-3d-lhs}) and \textsf{RHS} (\figref{um-3d-rhs}) have regions (\textcolor{rulegreen}{\textsf{green}} and \textcolor{ruleblue}{\textsf{blue}}) that do not match.
Therefore the rule is invalid for rank 3 and exhibits a counterexample.
The counterexample would contain an access $A^3$ at which \textsf{LHS} and \textsf{RHS} do not match.
This access can correspond to any location in the \textcolor{rulegreen}{\textsf{green}} and \textcolor{ruleblue}{\textsf{blue}} regions.
Without loss of generality, we consider the case when $A^3 = [0,0,1]$, highlighted as a sketched-out location in \figref{um-3d-lhs} and \figref{um-3d-rhs}.

Given this counterexample at rank 3, we try to construct a counterexample at rank 2, which would contain an access $A^2$.
An obvious counterexample construction involves \emph{projecting} out one of the axes.
%$A^2$ is obtained by projecting $A^3$ on the remaining two axes.
There are 3 choices for the axis to project out: axis 0, axis 1, and axis 2, as shown in \figref{um-3d}.
% If we project out axis 2, then the resulting counterexample access $A^2$ would be $[0, 0]$, but $\mathsf{LHS}[0,0]$ and $\mathsf{RHS}[0,0]$ have to always match since it is the \textcolor{rulegreen}{\textsf{leftmost}} element.
If we project out axis 2, then the resulting counterexample access $A^2$ would be $[0, 0]$, but $\mathsf{LHS}[0,0]$ and $\mathsf{RHS}[0,0]$ have to always match since it is the leftmost element.
Therefore, this projection does not lead to a counterexample.
We instead observe that projecting out any of axis 0 or axis 1 results in a counterexample at rank 2.
In fact, any counterexample at rank 3 for the \slicedyup{} rule can be \emph{lowered} to a counterexample at rank 2.
Moreover, it can be shown that any counterexample at a higher rank can be lowered to a counterexample at rank 2.
Therefore, if the \slicedyup{} rule is valid for rank 2, then it is valid for any higher-rank.
Note that the same does not hold for rank 1: given a counterexample at rank 2, we cannot construct a counterexample at rank 1.
Based on these observations, we conclude that 2 is a sufficient rank for this rule and verifying the rule for ranks 1 and 2  ensures correctness in the unbounded setting.

In \project{}, we extend these observations to any arbitrary rule by first partitioning the axes of a tensor into ``groups'', where all axes in a group share the same ``role'' and are treated uniformly by the operators.
We then present an algorithm to compute a sufficient rank for each group, allowing us to avoid verifying the rule for ranks beyond these sufficient ranks.

\section{Overview} \seclabel{overview}

Our goal is to automatically verify rewrite rules for arbitrary tensors and operator attributes.
Handling arbitrary tensors requires reasoning about tensor values, axis sizes, and ranks, all of which could be arbitrary.
We illustrate the challenges in representing and verifying rewrite rules with the help of an example and present \project{}, that helps us overcome these challenges.

\subsection{\project{} Rewrite Rules}

Similar to many other tensor graph rewrite systems, \project{} rewrites are modeled as rewriting an \textsf{LHS} tensor expression to an \textsf{RHS} tensor expression, subject to certain preconditions.
The users use the constructs provided by \dsl{} to write tensor expressions and preconditions.
We use the notation $\mathsf{LHS} \Rightarrow_C \mathsf{RHS}$ to represent a generic tensor graph rewrite, where \textsf{LHS} and \textsf{RHS} are tensor expressions and $C$ is the precondition under which the rewrite rule is supposedly correct, which is verified by our system.

\paragraph{Example}

Consider the \textsc{DysliceToSlice} rule shown in \eqnref{overview-slice}, extracted from \xla{}'s algebraic simplifier, which desugars the $\dyslice$ operator to the more efficient $\slice$ operator.
\begin{equation} \eqnlabel{overview-slice}
    \dyslice(\mathsf{Y}, B, L) \Longrightarrow_{E - B' = L \lland P = 1 \lland B' = B} \slice(\mathsf{Y}, B', E, P) 
\end{equation}
\definecolor{rulepurple}{HTML}{E1D5E7}
\begin{wrapfigure}{R}{0.48\textwidth}
\vspace{-1em}
    \centering
    \includegraphics[width=0.48\textwidth]{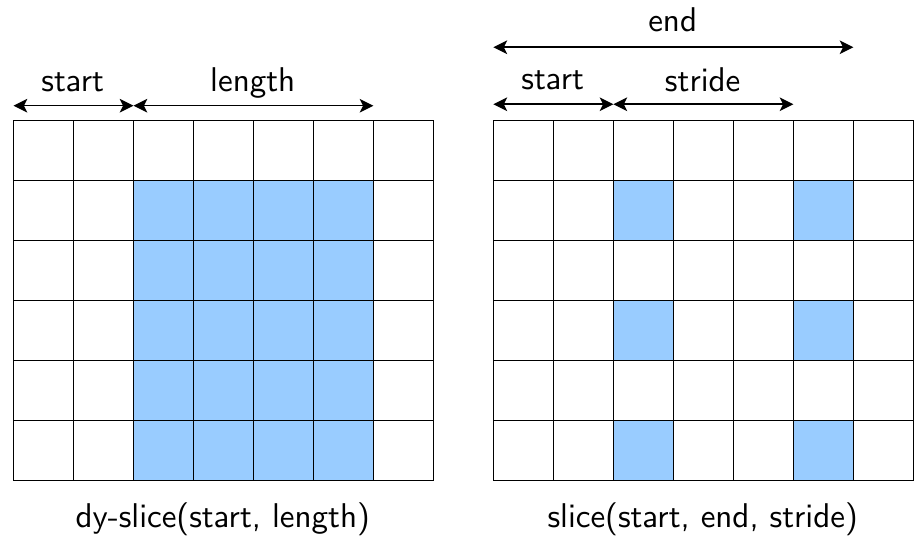}
    \caption{Illustration of $\dyslice$ and $\slice$ operators. The shaded regions denote the operator outputs.}
    \figlabel{dsl-example-vis}
\vspace{-1em}
\end{wrapfigure}
\figref{dsl-example-vis} depicts the \textsc{DysliceToSlice} rule visually.
The $\dyslice$ operator extracts a sub-tensor from the input tensor \textsf{Y}, where the start-index for each axis is specified in $B$ and the length of the slice along each axis is passed in $L$. 
Meanwhile, the $\slice$ operator also extracts a sub-tensor from within a bounding box in the input tensor \textsf{Y}.
The start-indices for the bounding box are specified in $B'$, while the end-indices (exclusive) are specified in $E$.
$P$ specifies the stride for each axis, which determines the step size between elements in the bounding box.
%which tells how many elements to skip in the bounding box \jai{rephrase}.

The \textsc{DysliceToSlice} rule is generally not correct, unless $E - B'$ (the size of the bounding box in $\slice$) is equal to $L$ (the length in $\dyslice$).
The other requirements are that $\slice$ should skip no elements, i.e., $P=1$, and the start indices in $\slice$ and $\dyslice$ must be the same, i.e., $B' = B$.
Since these are specified in the precondition, the \textsf{RHS} expression is equivalent to the \textsf{LHS} expression.
Our goal is to \emph{represent} and \emph{verify} this rule for arbitrary tensors and operator attributes.
We now discuss each challenge individually and explain how our system addresses them.
%We now discuss some challenges and how our system overcomes them.

\subsection{Representation in \dsl{}}
\seclabel{dsloverview}

\paragraph{Challenges in representation}
First, since the \textsc{DysliceToSlice} rule should be correct for all instantiations of the tensor \textsf{Y}, our system should allow representing a tensor of arbitrary rank and size. 
Second, it should allow specifying arbitrary operator attributes like \textsf{start}, \textsf{end}, \textsf{stride}, and \textsf{length}, to ensure that the rule is correct for all possible operator attributes.
%This is because the rule should be valid for all operations attributes as well.
It should also allow performing operations on attributes, like doing arithmetic on \textsf{end} and \textsf{start} while specifying the precondition.
Third, the operators provided by the system should model those that are found in \xlahlo{}.
Last, it should allow defining preconditions on a rule (e.g., \textsf{stride} values being 1).

\paragraph{\frontend{}} We present a frontend language in which users can express \emph{abstract} rewrite rules along with preconditions.
\project{} will build internal representations (\secref{dsl}) of rewrite rules from the specification, which can be instantiated to arbitrary ranks.

\begin{figure}[h]
    \centering
    \begin{lstlisting}[xleftmargin=1.5em,language=haskell,label=lst:dsl-example,mathescape=true,caption={The \textsc{DysliceToSlice} rule represented in \dsl{}.},frame=lines]
  rule = do
      rcls <- newRClass "rcls"
      [size, start, start', length, end, stride] <-
          newMaps ["size", "start", "start'", "length", "end", "stride"] rcls
      Y <- newTensor $\texttt{@}$TensorInt "Y" [rcls --> size]
      lhs <- dynamicSlice Y [rcls --> start] [rcls --> length]
      rhs <- slice Y [rcls --> start'] [rcls --> end] [rcls --> stride]
      precondition [end, start', length] $\dollar$ \[end, start', length] -> end - start' .== length
      precondition [stride] $\dollar$ \[stride] -> stride .== 1
      precondition [start, start'] $\dollar$ \[start, start'] -> start' .== start
      rewrite "DynamicSlice(Y) => Slice(Y)" lhs rhs

  verifyDSL rule
\end{lstlisting}
\vspace{-1em}
\end{figure}

\lstref{dsl-example} illustrates the \textsc{DysliceToSlice} rule implemented in the \frontend{}.
Instead of using fixed-rank tensors, the tensors in \dsl{} are represented with \emph{\agg{}-axes}, that can be instantiated to any number of axes.
All axes in an \agg{}-axis share the same ``role'' and are treated uniformly by all the operators, allowing us to reason about an unbounded number of axes compactly.
There might be multiple \agg{}-axes in a rule to capture different roles of axes.
Some of them must be instantiated to the same rank in a correct rule and this constraint is represented by a \emph{rank class} (\rclass{}) as discussed in \secref{verification}.

On line 2 in \lstref{dsl-example}, we declare a new \rclass{} and we refer to it by \texttt{rcls}.
In \dsl{}, we can refer to the \agg{}-axis with the \rclass{} itself, if an \rclass{} has exactly one \agg{}-axis.
On lines 3-4, we declare multiple \emph{abstract-maps} on \texttt{rcls}.
These abstract-maps can be instantiated to \emph{concrete-maps}, whose domain is the same as the axes in the \agg{}-axis represented by \texttt{rcls}.
When instantiated, they map axes in \texttt{rcls} to symbolic values, which can represent axes sizes, \textsf{start} indices, \textsf{end} indices etc.
On line 5, we declare a new tensor containing integer elements, with the shape $\{\texttt{rcls} \mapsto \texttt{size}\}$.
Similarly, the created tensor can also be instantiated with any number of axes (all of which behave in the same way) and symbolic sizes.
We then construct the \textsf{LHS} and \textsf{RHS} expressions on lines 6 and 7, respectively.
On line 8, we specify the precondition that the difference of the \textsf{end} and \textsf{start} indices should be equal to the \textsf{length}.
On line 9, we specify that the \textsf{stride} values should be 1 for all axes.
On line 10, we specify that the \textsf{start} indices should be same on both sides.
Finally, we construct the rewrite rule on line 11.

This example demonstrates how we can represent tensors with arbitrary rank and sizes, specify arbitrary operator attributes, construct tensor expressions, and specify complex preconditions in \dsl{}.
We now discuss our verification methodology built on top of this representation.

\subsection{Verification}

\begin{figure}[h]
    \centering
    \includegraphics[width=\linewidth]{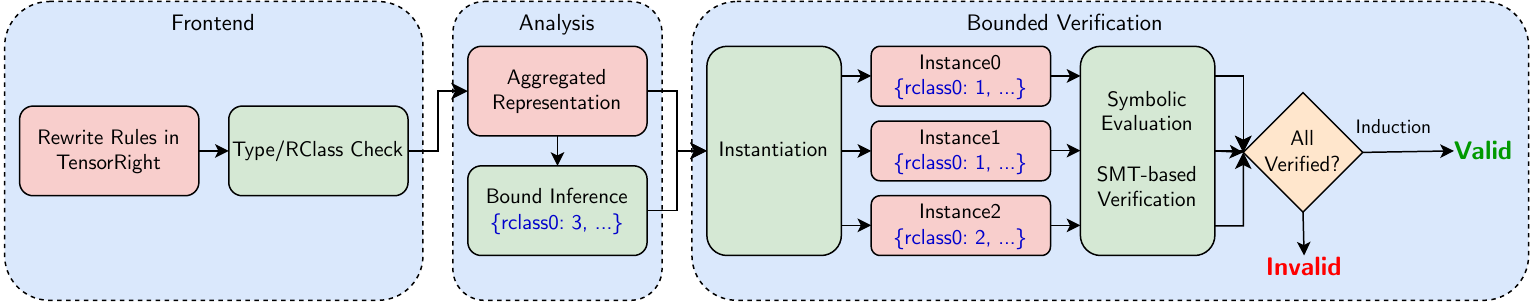}
    \caption{\project{} Overview and Workflow.}
    \figlabel{tr-overview}
%\vspace{-1em}
\end{figure}

\figref{tr-overview} describes our approach to verifying rules with tensors of arbitrary rank and sizes.
With the aggregated representation created in \secref{dsloverview}, the system infers a bound for each \rclass{} and instantiates them to all the ranks within the bound.
Each instance can then be proven with symbolic evaluation by SMT solvers.
This is based on a crucial theorem (\secref{bound}) that for all \rclasses{} in a rule, there exists a mechanically derivable bound on the ranks, such that proving all instances within the bound implies that the rule is correct for all ranks.
With this bound, we convert the unbounded-verification proof obligation to a finite number of \emph{bounded-verification} proof obligations.
The correctness of the rule with ranks beyond the bound is then established with induction.

\begin{wrapfigure}{R}{0.61\textwidth}
%\vspace{-0.5em}
    \centering
    \begin{lstlisting}[xleftmargin=1.5em,language=haskell,label=lst:dsl-lowering,mathescape=true,caption={Frontend Rewrite Rule lowered to our core syntax.},frame=lines,escapechar=^]
  ruleLowered = do
      # Instiate rcls to a concrete set
      ^\fbox{\icon{S}}^ <- rcls
      # Define concrete maps on rcls
      size <- Map ^\fbox{\icon{S}}^ SymInteger
      # Maps for other attributes like start, length
      Y <- newTensor $\texttt{@}$TensorInt "tensor" [rcls --> size]
      lhsSym <- [[...]] # Symbolic Representation of LHS
      rhsSym <- [[...]] # Symbolic Representation of RHS
      pre <- ... # Precondition of the rule
      # Axes and Shape Checks
      assert $\dollar$ lhsShape .== rhsShape
      A <- generalAccess(lhs, rhs)
      verify $\dollar$ pre && lhsValid -> lhsSym[A] .== rhsSym[A]
\end{lstlisting}
%\vspace{-1em}
\end{wrapfigure}

On line 13 in \lstref{dsl-example}, we call \texttt{verifyDSL} on the constructed rule, which is our main verification routine.
First, it infers a bound for every \rclass{} in the rule.
The bound inference algorithm collects the number of unique \emph{boolean conditions} and \emph{tensor accesses} in the rewrite rule.
In this case, there is only one \rclass{}, no boolean conditions, and a single unique access to the tensor.
We infer the bound to be 1.
This means that if the rule is correct for all rank-1 tensors, then the rule is correct for tensors containing any number of axes.
We discuss these conditions and accesses in detail in \secref{verification}.

Second, it specializes the rule for these ranks, ending up with fixed-rank but arbitrary-sized tensors.
\lstref{dsl-lowering} shows some details of the lowered code in our core \dsl{}.
On line 3, we instantiate \texttt{rcls} with a known-rank (1 in this case) and get the set of \emph{concrete}-axes in $\icon{S}$.
On line 5, we declare concrete-maps corresponding to the abstract maps in \lstref{dsl-example}.
These maps now have a concrete-domain, same as $\icon{S}$.
In lines 7-10, we create concrete-ranked input tensors, generate symbolic representations of \textsf{LHS} and \textsf{RHS} with symbolic evaluation, and specify preconditions.
We make an initial assertion on line 12 that both expressions have the same symbolic shape.
One line 14, we express one of the main verification conditions, i.e., the equality between \textsf{LHS} and \textsf{RHS} expressions under a general access \texttt{A}, which is discharged to an SMT solver.
The solver decides that the proof obligation is a tautology and the rule is deemed verified in the unbounded setting.

The example demonstrates how we take the abstract specification of a rewrite rule expressed in \frontend{}, infer a bound for each \rclass{}, instantiate every \agg{}-axis, and discharge bounded-verification proof obligations. We describe the \dsl{} in \secref{dsl}, the denotational semantics in \secref{denotational-sem}, and our verification methodology in detail in \secref{verification}.

\section{Rewrite Rule Representation} \seclabel{dsl}

In this section, we show our rank- and size-polymorphic rewrite rule representation constructed by the \frontend{} introduced in \secref{dsloverview}.
We take a set of operators from \xlahlo{} and model them in \project{}. We show a subset of the modeled operators in \figref{syntax} for discussion (more modeled operators can be found in \secref{ext-sem}).
We can extend the set to other \emph{layout-insensitive} operators, whose semantics do not depend on the particular physical layout of the operands. Some operators that do \emph{not} fall into this category include $\reshape$ and $\bitcast$.

The key distinction of our DSL from \xlahlo{} is that we group axes into \emph{\agg{}-axes}, where all the axes in the same group share the same ``role'' and are treated uniformly by the operators.
This allows us to describe rewrite rules with tensors containing any number of axes in a compact manner.
In \secref{verification}, we extend this representation by tagging the \agg{}-axes with \emph{rank classes} to help us with the instantiation of \agg{}-axes into concrete-axes for verification purposes.

\begin{wrapfigure}{R}{0.66\textwidth}
\vspace{-0.5em}
    \begin{minipage}{0.66\textwidth}
        \begin{tabular}{llll}
            $\tau$& $\coloneq$ & $\mathsf{Int}\mid\mathsf{Bool}\mid\mathsf{Real}$&Type\\
            $a$&$\in$&$\mathcal{A}$&Named-axes\\
            $x$&$\in$&$\mathcal{X}=\mathcal{P}(\mathcal{A})$&Aggregated-axes\\
            % $d$&$\in$&$\mathcal{D}$&Aggregated dimensions\\
            $f$&$\in$&$\mathsf{list}[\mathsf{Int}]\to \mathsf{Int}$&Map function\\
            $m$&$\coloneq$&$\mathcal{M}\mid \mathsf{fmap}(f, m+)$&Maps\\
            $X$&$\in$&$\mathcal{P}(\mathcal{X})$&Set of \agg{}-axes\\
            $S,I$&$\in$&$m^\mathcal{X}$&Shapes and indices\\
            $R$&$\in$&$\mathcal{X}^\mathcal{X}$&Relabel maps\\
            % $v$&$\coloneq$&$i:\mathsf{Int}$ & Scalar literal \\
            % &$\mid$&$b:\mathsf{Bool}$ \\
            % &$\mid$&$r: \mathsf{Real}$\\
            % $v$&$\coloneq$&$i:\mathsf{Int}$&Scalar literal\\
            % &$\mid$&$b:\mathsf{Bool}\mid r: \mathsf{Real}$\\
            $v$&$\coloneq$&$i:\mathsf{Int}\mid b:\mathsf{Bool}\mid r: \mathsf{Real}$&Scalar literal\\
            $e$&$\coloneq$&$\mathcal{T}$ (Literal)&Tensor expression\\
              &$\mid$&$\mathcal{V}$ (Variable) \\
              &$\mid$&$\textsf{const}(v, S)$\\
              &$\mid$&$\tiota(S, x)$\\
              &$\mid$&$\expand(e, S)$\\
              &$\mid$&$\binary(\oplus, e_l, e_r)$\\
              &$\mid$&$\pad(e, v, S_l, S_h, S_i)$\\
              &$\mid$&$\slice(e, I_s, I_e, I_p)$\\
              &$\mid$&$\dyslice(e, I, S)$\\
              &$\mid$&$\dyupdateslice(e, e_u, I)$\\
              &$\mid$&$\reduce(\oplus, e, X)$\\
              &$\mid$&$\relabel(e, R)$\\
              &$\mid$&$\concat(e_l, e_h, x)$\\
              % &$\mid$&$\convbase(t, X_b, X_f, X_o, I_s)$\\
              % &$\mid$&$\conv(t, X_b, X_f, X_o, I_s, S_l, S_{ld}, S_h, S_{rd})$\\
              % &$\mid$&$\tdot(t_l, t_r, X_c, X_b)$\\
              % &$\mid$&$\clamp(t_{min}, t, t_{max})$\\
              % &$\mid$&$\select(t_c, t_t, t_f)$\\
              % &$\mid$&$\reverse(t, X)$\\
            $g$&$\in$&$\mathsf{list}[\mathsf{Int}]\to \mathsf{Bool}$&Predicate function\\
            $P$&$\coloneq$&$\mathsf{fold}(g, m+)$&Precondition\\
            $Rule$&$\coloneq$&$e_{lhs}\Rightarrow_{P*} e_{rhs}$&Rewrite rule
        \end{tabular}
        \caption{Core rewrite rule representation with selected operators.}
        \figlabel{syntax}
    \end{minipage}
\vspace{-1em}
\end{wrapfigure}

\subsection{Named Axes} \seclabel{named-axes}
Following named-tensors in PyTorch~\cite{pytorchnamedtensors} and named-axes in JAX~\cite{jaxnamedaxes}, we give explicit names to the axes of a tensor and call them \emph{\ind{}-axes}.
We can treat the \ind{}-axes of a tensor as an unordered set for layout-insensitive operators and express tensor shapes as mappings from names to sizes.
For example, we may give the names $h$ and $v$ to the horizontal and vertical axes respectively of a $2 \times 3$ tensor $t$ and
the shape of this tensor would be the mapping $\{h\mapsto 3, v\mapsto 2\}$.
Such a tensor can be accessed with an \emph{access map}, which is a mapping from \ind{}-axes to indices.
We then have $t[\{h\mapsto 1, v\mapsto1\}]=w$, given the domain of the access map is exactly the set of the \ind{}-axes, there's no out-of-bounds access, and $w$ is the value at that access. 

An operator that works on multiple tensors will need to match them by the \ind{}-axes, as shown in the following examples.

\begin{itemize}
    \item $\binary(+, t_1, t_2)$, where $t_1$, $t_2$ have the shapes $\{a_1\mapsto 2, a_2\mapsto 3\}$ and $\{a_1\mapsto 2, a_2\mapsto 3\}$, respectively.
    The two shapes match as they have the same set of \ind{}-axes and the corresponding \ind{}-axes have the same sizes.
    
    \item $\tdot(t_1, t_2, \{a_2, a_3\}, \{a_4\})$, where the shapes of $t_1$ and $t_2$ are $\{a_1\mapsto 2, a_2\mapsto 3, a_3\mapsto 4, a_4\mapsto 5\}$ and $\{a_5\mapsto 6, a_2\mapsto 3, a_3\mapsto 4, a_4\mapsto 5\}$, respectively. 
    For the $\tdot$ operator, we need to match the contraction and batch axes,
    in this case $\{a_2, a_3\}$ and $\{a_4\}$, respectively.
    The resulting tensor has the shape $\{a_1\mapsto 2, a_4\mapsto 5, a_5\mapsto 6\}$.
\end{itemize}

Note that sometimes, we need to rename the axes to avoid name clashes.
For example, with the $\tdot$ operator, if the two tensors share some \emph{spatial} \ind{}-axes (neither contraction nor batch axes), we need to rename them before applying the operator to make sure that the \ind{}-axes in the resulting tensor are unique.

\subsection{Aggregated Axes}
As we've shown in the $\tdot$ example in \secref{named-axes}, we have matched some \emph{set} of axes between the two tensors.
This partitions the \ind{}-axes in a tensor into sets of axes that have the same ``role'' in the expression.
For example, $a_2$ and $a_3$ have the same roles as contraction axes.
Based on this observation, we introduce \emph{\agg{}-axes}, which is a set of \ind{}-axes.

We can partition the set of \ind{}-axes of a tensor into \emph{disjoint} \agg{}-axes and write expressions directly using \emph{uninterpreted} \agg{}-axes.
The \agg{}-axes can be instantiated to some concrete set of \ind{}-axes, and the number of instantiated \ind{}-axes is called the \emph{rank} of an \agg{}-axis.
This allows us to write expressions with an arbitrary number of \ind{}-axes in a uniform and simple way.

For example, in the expression $\tdot(t_1, t_2, \{x_2, x_3\}, \{x_4\})$, if we assume that the set of \ind{}-axes in $t_1$ and $t_2$ are $x_1\cup x_2\cup x_3\cup x_4$ and $x_5\cup x_2\cup x_3\cup x_4$, respectively, then the resulting tensor has $\{x_1,x_4,x_5\}$ as the set of \agg{}-axes.
It's easy to see that we can get back the $\tdot$ example shown in \secref{named-axes} by instantiating all \agg{}-axes to singleton sets.
This instantiation is not arbitrary, and we elaborate on how to specify the constraints on the instantiations with \emph{rank classes} in \secref{verification}.
%A canonical mapping will then be established for the axes in \agg{}-axes with the same \rclass{}.

We can then lift the tensor semantics to aggregated semantics: shapes or indices can be expressed with, or instantiated from \agg{}-axes.
Instead of being a mapping from \ind{}-axes to integers, we now need a \emph{nested mapping} that maps \agg{}-axes to another map from names in the \agg{}-axes to integers.
As a convention, we will refer to the inner mappings as a \emph{map} and the outer mapping as an \emph{\agg{}-map}.
For example, the following is valid \agg{}-map:
\begin{equation*}
    \left\{
    \{i_1, i_2\}\mapsto\{i_1\mapsto 2,i_2\mapsto 3\}, 
    \{i_3\}\mapsto\{i_3\mapsto 4\}
    \right\}
\end{equation*}
\begin{definition}
An \emph{\agg{}-map} $M$ is valid if it is a nested mapping from \agg{}-axes to maps from \ind{}-axes to integers such that:
\begin{itemize}
    \item $\forall x_1, x_2\in \mathsf{dom}(M), x_1\ne x_2 \to x_1\cap x_2=\varnothing$, and
    \item $\forall x\in\mathsf{dom}(M), \mathsf{dom}(M[x]) = x$.
\end{itemize}
\end{definition}
Note that $M[x]$ represents the value mapped to $x$ in $M$. Shape and indices are aliases for \agg{}-maps in specific contexts and they have their additional validity conditions, depending on the context.
Here, we give the validity conditions for tensor shapes and access indices: 
\begin{definition}
A valid \emph{tensor shape} $S$ is a valid \emph{\agg{}-map}, such that $\forall x\in\mathsf{dom}(S), \forall a\in x, S[x][a]\geq 0$. The shape of a tensor $t$ is denoted as $\shape(t)$.
\end{definition}
\begin{definition}
A valid \emph{access} $A$ (used for accessing tensors) with respect to a valid tensor shape $S$, is a valid \emph{\agg{}-map} such that
\begin{itemize}
    \item $\mathsf{dom}(A) = \mathsf{dom}(S)$, and
    \item $\forall x\in\mathsf{dom}(A), \forall a\in x, 0\leq A[x][a]<S[x][a]$.
\end{itemize}
The set of all valid \emph{accesses} given a tensor shape $S$, is denoted by $\access(S)$.
A tensor $t$ is then viewed as a mapping from the set $\access(\shape(t))$ to elements, and the element at the access $A$ is denoted as $t[A]$.
The set of all \agg{}-axes of a tensor $t$ is denoted as $\axes(t)$.
%, which is the same as $\mathsf{dom}(\shape(t))$.
\end{definition}

Operator attributes are also expressed using \agg{}-maps, with each operator having its own validity conditions.
For example, the pseudo operator $\padlow$, which only does low-padding, can pad a tensor with $l_1$, followed by $l_2$, for the axes in $x_1$ using the expression: 
$\padlow(\padlow(t,\{x_1\mapsto l_1\}), \{x_1\mapsto l_2\})$.
The validity condition for $\padlow$ allows padding with negative shapes but disallows creating a tensor with a negative shape (more details in \secref{denotational-sem}).
Assuming $t$ has the shape $s_0$ in the \agg{}-axis $x_1$, the resulting shape in the $\padlow$ expression will be $s_0 + l_1 + l_2$.
Note that we are doing an element-wise combination of maps, where the maps must have the same \ind{}-axes and the resulting map contains the sum of the corresponding axis sizes.
Element-wise combination ($\mathsf{fmap}$) is the only allowed operation on the maps to combine them, as
we define \agg{}-axes as set of named axes that have the same ``role'' in the expression.
Note that we may also combine a map with scalars by lifting the scalar to a constant map.

\subsection{Rewrite Rule}

Similar to many other tensor graph rewrite systems, \project{} models rewrite rules as rewriting an $\mathsf{LHS}$ expression to an $\mathsf{RHS}$ expression, subject to certain preconditions. See the following example:
\begin{equation*}
    \padlow(\padlow(t, 0, \{x_1\mapsto l_1\}), 0, \{x_1\mapsto l_2\}) \Rightarrow_{l_1 \geq 0 \lland l_2 \geq 0} \padlow(t, 0, \{x_1\mapsto l_1 + l_2\})
\end{equation*}
In this rule, we aggregated all the \ind{}-axes in the tensor $t$ into the \agg{}-axis $x_1$.
$l_1$ and $l_2$ are two maps from \ind{}-axes in $x_1$ to padding sizes.
In the $\mathsf{RHS}$, the two maps are combined in an element-wise way.
Similarly, our preconditions are predicates lifted to operate on the maps in an element-wise way.
The condition $l_1\geq 0$ here means that the padding sizes in $l_1$ must be greater than or equal to 0 for all \ind{}-axes.

\section{Denotational Semantics} \seclabel{denotational-sem}
We give the denotational semantics of the \xlahlo{} operators in \figref{semantics}.
We will use denotational semantics notations for deriving a bound on the ranks, but note that since our semantics map from our language to computable tensor objects, we can easily derive a big-step operational semantics and perform symbolic evaluation. Due to space limitations, we will only show the semantics of some selected operators.
More operators are available in \secref{ext-sem}. They are usually simple or can be expressed using existing operators or the techniques introduced here.

The domain of our denotational semantics are tensors, which map accesses to elements. The elements can be boolean, integers, or real numbers.
There is also a special type of element called a \emph{Reduction Element}, denoted as $\mathsf{Red}^\oplus_{I_0,I_1,\cdots} f(\{\mathsf{dom}(I_0)\mapsto I_0,\mathsf{dom}(I_1) \mapsto I_1,\cdots\})$.
Here, $\oplus$ is a binary operator and $I_0, I_1, \cdots$ are called \emph{reduction indices}.
We may sometimes omit the indices and write $\mathsf{Red}_X^\oplus f(X)$, where $X=\{\mathsf{dom}(I_0), \mathsf{dom}(I_1), \cdots\}$ is the set of \agg{}-axes being reduced.

The introduction of a \emph{reduction element} is based on pragmatic reasons.
As the sizes of reduced axes are unbounded, we cannot expand the reduction to sum all the values being reduced.
Thus, we leave the sum uninterpreted and provide special treatment for such elements during verification.
\begin{figure}[h]
\vspace{-1em}
    \centering
    \footnotesize
    \begin{equation*}
        \inference{
        }
        {
            \sem{\const(v, S)} =
            \{A \mapsto v \mid A \in \access(S)\}
        }
        [\textsc{Const}]
        \semlabel{Const}
    \end{equation*}
    
    \begin{equation*}
        \inference{
            \mathsf{let}~\{a\}=x & x\in\mathsf{dom}(S)
        }
        {
                \sem{\tiota(S, x)} = \{A \mapsto A[x][a] \mid A \in \access(S)\}
        }
        [\textsc{Iota}]
        \semlabel{Iota}
    \end{equation*}

    \begin{equation*}
        \inference{
            \mathsf{dom}(S)\cap\axes(e)=\varnothing
        }
        {
                \sem{\expand(e, S)} =
                \{A \mapsto \sem{e}[A\vert_{\axes(e)}] \mid A \in \access(\shape(e)\cup S)\}
        }
        [\textsc{Expand}]
        \semlabel{Expand}
    \end{equation*}
    
    \begin{equation*}
        \inference{
            \shape(e) = \shape(e')
        }
        {
                \sem{\binary(\oplus, e, e')} =
                \{A \mapsto \sem{e}[A] \oplus \sem{e'}[A] \mid A \in \access(e)\}
        }
        [\textsc{BinOp}]
        \semlabel{BinOp}
    \end{equation*}
    
    \begin{equation*}
        \inference{
            \mathsf{let}~S = \shape(e) &
            \mathsf{let}~S'=S+S_l\geq 0&\mathsf{let}~not\mbox{-}pad = \lambda A. A \geq S_l
        }{
            \sem{\padlow(e, v, S_l)}=
            \{A\mapsto \mathsf{if}~not\mbox{-}pad(A)~\mathsf{then}~\sem{e}[A-S_l]
            ~\mathsf{else}~v
            \mid A\in\access(S')\}
        }[\textsc{PadLow}]
        \semlabel{PadLow}
    \end{equation*}
    
    \begin{equation*}
        \inference{
            0\leq I_s\leq I_e\leq \shape(e) & I_p > 0
        }
        {
                \sem{\slice(e, I_s, I_e, I_p)} =
                \{A \mapsto \sem{e}[I_s+A\times I_p] \mid A \in \access(\left\lceil \frac{I_e-I_s}{I_p}\right\rceil)\}
        }
        [\textsc{Slice}]
        \semlabel{Slice}
    \end{equation*}
    
    \begin{equation*}
        \inference{
            I+S\leq \shape(e)&S>0&I\geq 0
        }{
            \dyslice(e, I, S)=\{A\mapsto \sem{e}[A+I]\mid A\in\access(S)\}
        }[\textsc{DySlice}]
        \semlabel{DySlice}
    \end{equation*}
    
    \begin{equation*}
        \inference{
            \mathsf{let}~S_u = \shape(e_u) & 
            I+S_u\leq \shape(e) & S_u > 0 &
            I\geq 0 \\
            \mathsf{let} ~ acc = \lambda A.
            \mathsf{if} ~ A\geq I\wedge A < I + S_u ~ \mathsf{then} ~ \sem{e_u}[A - I] ~
            \mathsf{else} ~ \sem{e}[A]
        }{
            \dyupdateslice(e, e_u, I)=
            \{A\mapsto acc(A) \mid A \in\access(e)\}
        }[\textsc{DyUpdateSlice}]
        \semlabel{DyUpdateSlice}
    \end{equation*}
    
    \begin{equation*}
        \inference{
            \mathsf{let}~S=\shape(e) &
            \mathsf{let}~\{x_0\cdots x_k\}=X\subseteq \axes(e)\\
            \mathsf{let}~acc=\lambda A. \mathsf{Red}^\oplus_{I_0,\cdots,I_k}\sem{e}[\{x_0\mapsto I_0, \cdots,x_k\mapsto I_k\}\cup A]
        }
        {
                \sem{\reduce(\oplus, e, X)} =
                \{A\mapsto acc(A)\mid A\in\access(S \setminus S\vert_X)\}
        }
        [\textsc{Reduce}]
        \semlabel{Reduce}
    \end{equation*}

    \begin{equation*}
        \inference{
            \forall x_1,x_2\in\mathsf{dom}(R), x_1\neq x_2\to R[x_1]\neq R[x_2]&
            \mathsf{dom}(R)=\axes(e) \\
        }{
            \sem{\relabel(e, R)}=\{A\mapsto \sem{e}[A\circ R]\mid A\in\access(S\circ R^{-1})\}
        }[\textsc{Relabel}]
        \semlabel{Relabel}
    \end{equation*}
    
    \begin{equation*}
        \inference{
            \mathsf{let}~\{a\} = x &
            x \in \axes(e) &
            x \in \axes(e') &
            \mathsf{let} ~ S = \shape(e) &
            \mathsf{let} ~ S' = \shape(e') \\
            \forall x' \in \axes(e), x'\neq x \to S[x'] = S'[x']\\
            \mathsf{let} ~ S'' = \{x'\mapsto
            \mathsf{if} ~ x'=x  ~\mathsf{then}~S'[x]~\mathsf{else}~\{a'\mapsto 0\mid a'\in x'\}
            \mid x'\in\axes(e)\}\\
            \mathsf{let} ~ acc = \lambda A. \mathsf{if} ~ A \geq S'' ~ \mathsf{then} ~ \sem{e'}[A-S''] ~ \mathsf{else} ~ \sem{e}[A]
        }{
            \concat(e, e', x)=
            \{A\mapsto acc(A)\mid
            A\in\access(S+S'')\}
        }[\textsc{Concat}]
        \semlabel{Concat}
    \end{equation*}
    \caption{Denotational Semantics of some core operators.}
    \figlabel{semantics}
\vspace{-1em}
\end{figure}

% \subsection{Semantic Rules}

\figref{semantics} shows the denotational semantics of some selected operators.
We overload some functions for convenience: $\shape(e)$ means $\shape(\sem{e})$, $\access(e)$ means $\access(\shape(e))$, and $\axes(e)$ means $\axes(\sem{e})$.
We also introduce some helper functions on valid \agg{}-maps.
Given a comparison operator $\odot$ and a binary operator $\oplus$, we lift them to \agg{}-maps as
\begin{gather*}
    M_1 \odot M_2 =\bigwedge_{x\in\mathsf{dom}(M_1)} \bigwedge_{a\in x} M_1[x][a]\odot M_2[x][a] \\
    M_1\oplus M_2 = \{x\mapsto \{a\mapsto M_1[x][a] \oplus M_2[x][a]\mid a\in M_1[x]\}\mid x\in\mathsf{dom}(M_1)\}
\end{gather*}

All these binary operations implicitly introduce the assumption that the two \agg{}-maps are valid and the domain of the two \agg{}-maps are the same.
We will omit these from our rules.
Sometimes, we may overload the notations to operate with constants. This is treated as operating with a nested map where the inner map are constant maps.

In our rules, we introduce bindings with $\mathsf{let}~name = \cdots$ notation.
Some rules like \semref{Iota} and \semref{Concat} require that an \agg{}-axis is singleton.
In these rules, we use the syntax $\mathsf{let}~\{a\}=x$ to say that $x$ is singleton and bind the singleton element in $x$ to $a$.
We use the notation $S|_K$ to denote the mapping restriction of $S$ to $K$, where $S$ is any map and $K$ is a subset of keys from $S$.

\paragraph{\textbf{Tensor operators}} Next, we explain the semantics of some select operators:
\begin{itemize}
    \item \semref{Const}: The $\const$ operator outputs a tensor with the desired shape $S$, with all accesses being mapped to the same constant element $v$.
    \item \semref{Iota}: The $\tiota$ operator projects the access-index for a specific singleton \agg{}-axis $x$. The resulting tensor holds values starting at 0, incrementing by 1 along that axis.
    \item \semref{Expand}: The $\expand$ operator introduces new \agg{}-axes to a tensor by duplicating the data in the tensor.
    It takes a shape containing the sizes of the new axes and the resulting tensor is accessed as if we are accessing the input tensor after removing these new axes from the access.
    The set of new axes must be disjoint from the original set of axes.
    
    \figref{opsem-expand} demonstrates the $\expand$ operator with an example.
    The input tensor contains two \agg{}-axes $x_h$ and $x_w$, each instantiated with 1 \ind{}-axes.
    We refer to the corresponding \ind{}-axes by $h$ and $w$, respectively.
    The input shape is $\{\{h\} \mapsto \{h \mapsto 5\}, \{w\} \mapsto \{w \mapsto 5\}\}$.
    We expand it by adding a new \agg{}-axis $x_d$, containing 1 \ind{}-axis $d$.
    The resulting tensor duplicates data across this new \ind{}-axis and has a shape of $\{\{h\} \mapsto \{h \mapsto 5\}, \{w\} \mapsto \{w \mapsto 5\}, \{d\} \mapsto \{d \mapsto 3\}\}$.
    \item \semref{BinOp}: The $\binary$ operator performs an element-wise operation on two identically-shaped tensors.
    \item \semref{PadLow}: We present a restricted version of the $\pad$ operator, $\padlow$, which pads only on the low-ends of each axis. The semantics test whether the access is in the padded region. If so, return the padded value, or access the original tensor, offset by the padding shape.
    
    \definecolor{rulegreen}{HTML}{b1d021}
    \figref{opsem-padlow} demonstrates the $\padlow$ operator with an example.
    The input tensor contains two \agg{}-axes $x_h$ and $x_w$, each instantiated with 1 \ind{}-axes.
    We refer to the corresponding \ind{}-axes by $h$ and $w$, respectively.
    The input shape is $\{\{h\} \mapsto \{h \mapsto 4\}, \{w\} \mapsto \{w \mapsto 4\}\}$.
    We perform zero-padding on $h$ and $w$ with padding attributes of $2$ and $1$ respectively.
    The zero-padding values are all present at the lower-ends of the axes.
    The resulting tensor a shape of $\{\{h\} \mapsto \{h \mapsto 6\}, \{w\} \mapsto \{w \mapsto 5\}\}$.
    \item \semref{Slice}: The $\slice$ operator extracts a sub-tensor, which has the same \ind{}-axes as the input tensor and contains the values inside a bounding box within the input.
    The indices for the bounding box are given by the starting indices $I_s$, limit indices $I_e$ (exclusive), and the positive strides $I_p$. The slice picks every $I_p[x][a]$ element along each \ind{}-axis $a\in x\in\mathsf{dom}(I_p)$.
    \item \semref{DySlice}: The $\dyslice$ operator extracts a sub-tensor from the input tensor.
    The indices for the bounding box are given by the starting indices $I$ and size of the bounding box $S$.
    The sizes must be positive and should not cause out-of-bounds accesses.
    Note that this is different from the \xla{} semantics, where our starting indices are not represented as a tensor, but as a map.
    We are then only able to express rewriting rules where the indices are used in an opaque way, or as a constant, or computed with element-wise operations, e.g., $\binary$.
    We found that this change does not affect the effectiveness of \project{} for verification purposes,
    and our approximation is able to express all rewrite rules involving $\dyslice$.
    %due to its dynamic nature, the \xla{} compiler isn't able to reason much about the operations to build the indices \jai{is this true? We would know statically where the values come from, it's just that the rewrite rules do not care about it}, and our approximation is able to express all rewriting rules involving $\dyslice$.
    \item \semref{DyUpdateSlice}: The $\dyupdateslice$ operator generates a result with a slice overwritten by $e_u$, starting at indices $I$. Our $\dyupdateslice$ operator follows the same approximation as $\dyslice$.
    \item \semref{Reduce}: The $\reduce$ operator takes a tensor and a set of \agg{}-axes $X$ as inputs, then returns a tensor mapping to uninterpreted reduction elements as the result. The resulting tensor has the shape $S\setminus S\vert_X$, essentially removing all the \agg{}-axes in $X$.
    We extend the semantics of $\reduce$ to make verification easier in \secref{ext-sem}.
    \item \semref{Relabel}: In \project{}, as we take an unordered view of the axes, we no longer need $\transpose$.
    However, we still need to re-match the axes, for example, when we want to describe some expressions such as $t+\transpose(t)$.
    The $\relabel$ operator is introduced for this axes-matching operation.
    It renames \agg{}-axes and does not change tensor contents.
    \item \semref{Concat}: The $\concat$ operator is another example where we introduce the singleton constraint on an \agg{}-axis.
    The two tensors should have the same shape on other axes.
    For the concatenating axis, the resulting size will be the sum of the operand sizes.
    The resulting tensor will then compute which operand tensor the access belongs to and perform the access.
\end{itemize}

\begin{figure}[h]
    \centering
    \captionsetup{justification=centering}
    \captionsetup[subfigure]{justification=centering}
    \begin{subfigure}{0.49\textwidth}
        \centering
        \includegraphics[width=\textwidth]{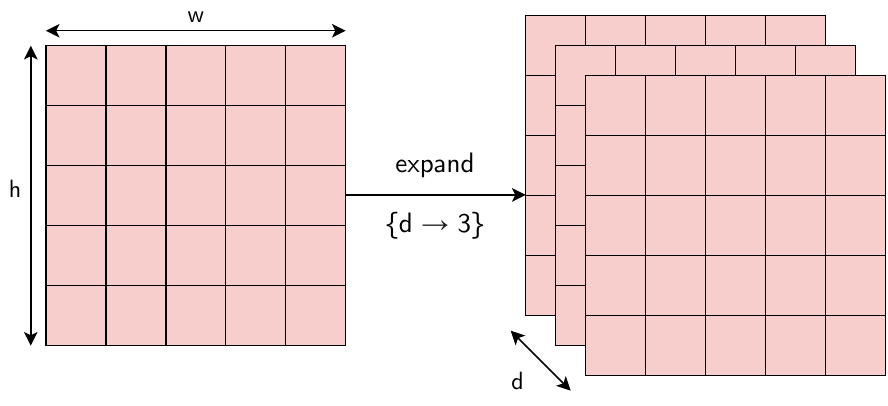}
        \caption{$\expand$: $\{d \rightarrow 3\}$ denotes the expansion shape}
        \figlabel{opsem-expand}
    \end{subfigure}
    \hfill
    \begin{subfigure}{0.49\textwidth}
        \centering
        \includegraphics[width=\textwidth]{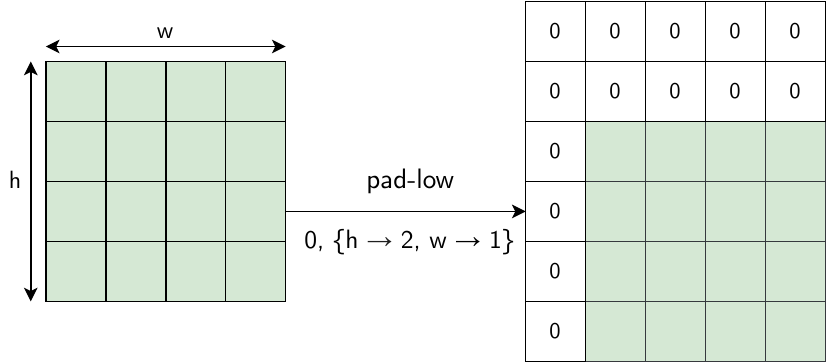}
        \caption{$\padlow$: $\{h \mapsto 2, w \mapsto 1\}$ denotes the padding attributes and 0 denotes the padding value}
        \figlabel{opsem-padlow}
    \end{subfigure}
    \vspace{-1em}
    \caption{Illustration of (a) $\expand$ and (b) $\padlow$ operators. $h$, $w$, and $d$ denote \ind{}-axes.}
    \figlabel{opsem-examples}
    \vspace{-1em}
\end{figure}

\paragraph{\textbf{Handling reduction elements}} 
In real-world \xla{} rewrite rules, the reduction might not always be the top-level operation and a reduction of a tensor may be performed in several steps.
For instance, in the rewrite rule $\reduce(\concat(\reduce(\mathsf{A}), \reduce(\mathsf{B}))) = \reduce(\concat(\mathsf{A},\mathsf{B}))$, the reduction in $\mathsf{LHS}$ is done in two steps.
To handle rules like this, we provide a \emph{limited} set of arithmetic rules for reduction elements in \figref{redelem}.
If no known rule applies, \project{} will report an error, indicating that the rule isn't supported.
The equivalence of two reduction elements is currently verified by a stronger condition, where we view them as sets and verify that there is a one-one mapping between them.
Note that not all correct rules meet the strengthened condition, but we find that it covers most rewrite rules with reductions.
Additionally, user-provided hints are needed for verifying the set equivalence, as discussed in \secref{verifyreduction}.
This is the only manual hint from the user during our verification.
Providing better mechanisms for handling reductions is a future work.

\begin{figure}[h]
\vspace{-1em}
\begin{minipage}{0.4\textwidth}
\begin{equation*}
v*\mathsf{Red}^{+}_X f(X) \to \mathsf{Red}^{+}_X v * f(X)
\end{equation*}
\end{minipage}
\begin{minipage}{0.55\textwidth}
\begin{equation*}
\mathsf{Red}^{+}_X f(X)*\mathsf{Red}^{+}_Y g(Y) \to \mathsf{Red}^{+}_{X,Y} f(X) * g(Y)
\end{equation*}
\end{minipage}
\begin{minipage}{0.55\textwidth}
\begin{equation*}
\mathsf{Red}^{\oplus}_X(\mathsf{Red}^{\oplus}_Y f(X,Y)) \to \mathsf{Red}^{\oplus}_{X,Y} f(X,Y)
\end{equation*}
\end{minipage}
\caption{Rules on reduction elements.}
\figlabel{redelem}
\vspace{-1em}
\end{figure}

\section{Verification of Rewrite Rules} \seclabel{verification}
After defining the representation of rewrite rules in \secref{dsl} and its denotational semantics in \secref{denotational-sem}, 
this section describes how \project{} verifies the rewrite rules given the semantics.
We will first overview our verification approach, which is based on $k$-induction~\cite{sheeran2000checking}, then provide proof sketches for our induction steps.

\subsection{Overview of the Verification}
To prove that a rewrite rule is correct, intuitively, we need to verify that the two expressions have the same denotation, possibly under some assumptions.
Given a rule $\mathsf{LHS}\Rightarrow_C\mathsf{RHS}$, we prove that
\begin{equation} \eqnlabel{valid-rule}
    \forall v\in vars, C\wedge\mathsf{valid\hbox{-}expr}(\mathsf{LHS})\to\sem{\mathsf{LHS}}=\sem{\mathsf{RHS}}
\end{equation}
Here, $vars$ is the set of all variables appearing in the rule. It contains all possible tensor variables and operator attributes, such as slice attributes and expand shapes.
Note that we only consider the case where the term prior to rewriting is valid, i.e., when $\mathsf{LHS}$ is valid.

The challenge here is that a rewrite rule can usually be applied to tensors with arbitrary number of axes and arbitrary sizes in each axis.
To handle arbitrary number of axes, as discussed in \secref{dsl}, we express rewrite rules with a finite number of \agg{}-axes and each of them may be instantiated to arbitrary ranks.
The equivalence of the two expressions then boils down to verifying that they are equivalent under all \emph{valid} instantiations $I$, where the $\mathsf{LHS}$ is valid:
\begin{equation*}
\begin{aligned}
    \bigwedge_{I}\mathsf{valid}(I)&&\mathrm{where}&&
    \mathsf{valid}(I) = \forall v\in vars(I), C\wedge\mathsf{valid\hbox{-}expr}(\mathsf{LHS}(I))\to\sem{\mathsf{LHS}(I)}=\sem{\mathsf{RHS}(I)}
\end{aligned}
\end{equation*}

For a given instantiation, we now have concrete ranks but \emph{unbounded sizes}.
To handle arbitrary sizes, we model each tensor as an uninterpreted function and use unbounded integers to model the indices and sizes in each \ind{}-axis.
This is supported natively by SMT solvers with good performance.
We then leverage a symbolic execution approach to convert the rewrite rule into a set of constraints.
We call this a \emph{bounded-verification} proof obligation, which is used to verify a \emph{given instantiation} of our rewrite rule.
This is described in detail in \secref{bounded-verification}.

We still need to answer two questions: (1) what does it mean for an instantiation to be valid and (2) how to verify the rule for a possibly infinite number of instantiations?
Our approach to these questions is to introduce a new concept called \emph{rank class} (\rclass{}).
A \emph{rank class} is a (required) property of an \agg{}-axis, such that all the \agg{}-axes with the same \rclass{} are always instantiated to the same rank.
An instantiation of a rewrite rule can then be expressed as instantiating these \rclasses{} to some rank.
We then derive a \emph{sufficient} rank for each \rclass{}, such that any instantiation with higher ranks could be proven inductively, given that we have verified all instances within the rank.
This approach follows the $k$-induction technique~\cite{sheeran2000checking}.

To show the intuition for an \rclass{}, consider the rewrite rule $(A+A^\top)^\top=A+A^\top$, where $t$ has the shape $\{x_1\mapsto m_1, x_2\mapsto m_2\}$.
Note that $\transpose$ is a no-op with unordered axes semantics, and $\relabel$ is provided for renaming and matching axes.
\begin{equation*}
\begin{gathered}
\relabel(\binary(+, t, \relabel(t, \{x_1\mapsto x_2, x_2\mapsto x_1\})), \{x_1\mapsto x_2, x_2\mapsto x_1\}) \Rightarrow \\
\binary(+, t, \relabel(t, \{x_1\mapsto x_2, x_2\mapsto x_1\}))
\end{gathered}
\end{equation*}
It is easy to see that we must instantiate $x_1$, $x_2$ with the same number of axes to make sure that both \textsf{LHS} and \textsf{RHS} are valid.
In other words, $x_1$ and $x_2$ must have the same rank, in which case we say that these two aggregated axes are in the same \emph{rank class} (\rclass{}) and
the \rclass{} constraints the possible instantiations.
When we instantiate a rule with relabeling, we also need to be able to establish a consistent mapping between the \ind{}-axes of $x_1$ and $x_2$. These facts are expressed in the following two definitions.

\begin{definition}
A \emph{rank class (\rclass{})} $c$ is a property of a family of \agg{}-axes, such that
\begin{itemize}
    \item Each \agg{}-axis $x$ is in exactly one \rclass{} $c$, written as $x:c$.
    \item For all $x_0:c$, $x_1:c$, $x_0$ and $x_1$ are instantiated to the same rank $r$ in a valid instantiation. In such an instantiation, the rank of the \rclass{} $c$ is defined to be $r$.
\end{itemize}
\end{definition}

\begin{definition}
An \rclass{} $c$ provides a canonical bijection mapping between each pair of the aggregated axes associated with it.
We denote such a mapping that maps from the named axes in $x_0:c$ to $x_1:c$, as $\mathsf{MapAxes}(c, x_0, x_1)$. The canonical mapping must satisfy:
\begin{itemize}
    \item $\mathsf{MapAxes}(c, x_0, x_1) \circ \mathsf{MapAxes}(c, x_1, x_0) = \mathsf{id}$, and
    \item $\mathsf{MapAxes}(c, x_1, x_2) \circ \mathsf{MapAxes}(c, x_0, x_1) = \mathsf{MapAxes}(d, x_0, x_2)$.
\end{itemize}
\end{definition}

where $\circ$ refers to function composition.
Such a mapping can be trivially constructed during the instantiation as the aggregated axes of the same rank class are instantiated to the same rank.
In practice, without loss of generality, to instantiate an aggregated axis $x_i:c$ to rank $r$, we can instantiate it to the set of axes $a_{i,0},\cdots,a_{i,r-1}$. The canonical mapping between two aggregated axes $x_i:c$ and $x_j:c$ can then be established as $\{a_{i,k}\mapsto a_{j,k}\mid k\in\{0\cdots r-1\}\}$, and
the semantics of $\relabel$ follows this mapping to relabel the instantiated axes.

We can then reduce our problem to verifying the rule for all possible \rclass{} instantiations:
\begin{equation} \eqnlabel{valid-rule-rank}
\begin{aligned}
    \bigwedge_{r_1, r_2, \cdots}\mathsf{valid}(\overline{\{c_i \mapsto r_i\}})
\end{aligned}
%\vspace{-0.5em}
\end{equation}
where $\overline{\{c_i \mapsto r_i\}}$ is a map containing the ranks for all \rclasses{} $c_1, c_2, \cdots$ in the rule.
$\mathsf{valid}(\overline{\{c_i \mapsto r_i\}})$ denotes the proof obligation for a concrete-ranked instance, where the \rclass{} $c_i$ is instantiated to rank $r_i$, for all $i \in \{1 \cdots p\}$.
We still need to verify the rule for \emph{all possible} ranks of all \rclasses{}.

\paragraph{Our Approach:} To simplify the discussion, let's assume that the rule only has one \agg{}-axis $x$, with \rclass{} $c$ (i.e., $x : c$). We can then rewrite \eqnref{valid-rule-rank} as
\begin{equation} \eqnlabel{valid-rule-simple}
    \bigwedge_{i} ~ \mathsf{valid}(i)
\end{equation}
where $\mathsf{valid}(i)$ (short for $\mathsf{valid}(\{c \mapsto i\}))$ is true if and only if the rule is valid when rank of $c$ (and $x$) is $i$.
We observe that for every \rclass{} in a rule, there exists a \emph{bound} corresponding to a sufficient rank required for unbounded verification, i.e., there exists a rank $k$, such that,
\begin{equation} \eqnlabel{induction-hypothesis}
    \forall i \geq k, ~ \mathsf{valid}(i) \rightarrow \mathsf{valid}(i+1)
\end{equation}
This means that for all $i \geq k$, if the rule is valid when $c$ has rank $i$, then the rule is valid when $c$ has rank $i+1$.
Given such a $k$, we can do unbounded verification for the rule using $k$-induction:
\begin{itemize}
    \item Basis: Use bounded verification to prove that the rule is valid for all ranks until $k$:
    \[
        \bigwedge_{i = 1 \cdots k} \mathsf{valid}(i)
    \]
    \item Induction case: Use induction on the rank of $c$, with \eqnref{induction-hypothesis} as induction hypothesis:
    \begin{equation*}
        \mathsf{valid}(k) \wedge \left[\forall i \geq k, ~ \mathsf{valid}(i) \rightarrow \mathsf{valid}(i+1) \right] ~
        \Rightarrow~\bigwedge_{i \geq k} \mathsf{valid}(i)
    \end{equation*}
\end{itemize}
This would imply that the rule is correct for an arbitrary number of \ind{}-axes in $c$.
Now what remains is finding a sufficient rank $k$ for any \rclass{} in a rule.
We show how to derive such a bound with the help of an example rewrite rule.

\subsection{Bound Computation Example} \seclabel{bound-example}

\newcommand{\padcombine}[0]{\textsc{PadLowCombine}}

Consider an input tensor \textsf{Y} which has one \agg{}-axis, say $x$, having the \rclass{} $c$ (i.e. $x : c$). The rewrite rule \padcombine{} is shown below:
\begin{equation} \eqnlabel{bound-example-rule}
    \padlow(\padlow(\mathsf{Y}, 0, L_1), 0, L_2) \Rightarrow_{L_1 \geq 0 \lland L_2 \geq 0} \padlow(\mathsf{Y}, 0, L_1+L_2)
\end{equation}
where $L_1=\{x\mapsto l_1\}$ and $L_2=\{x \mapsto l_2\}$, for some maps $l_1$ and $l_2$.

The \padcombine{} rule merges two $\padlow$ operators into a single $\padlow$ operator.
The precondition requires that both padding attributes should be non-negative.
For this rule, the precondition also implies that the \textsf{LHS} expression is valid.
%\jai{This precondition is not the weakest. WP is $\neg (L_1 < 0 \land L_2 \geq 0)$} \charith{Can't we add this as a case study in the evaluation}.
The variables appearing in this rule are the padding attributes $L_1, L_2$ and the input tensor \textsf{Y}. Using \eqnref{valid-rule}, we can express the validity condition for this rewrite rule as
%\jai{what about \textsf{valid-expr}(\textsf{LHS})? The precondition subsumes it, I think?}
\begin{equation} \eqnlabel{pad-rule}
    \forall ~ \mathsf{Y}, L_1, L_2, ~ L_1 \geq 0 \land L_2 \geq 0 \rightarrow \sem{\mathsf{LHS}} = \sem{\mathsf{RHS}}
\end{equation}

It is easy to see that both \textsf{LHS} and \textsf{RHS} have the same shape, i.e., they have the same domain of accesses.
Thus, we can rewrite $\sem{\mathsf{LHS}} = \sem{\mathsf{RHS}}$ by interpreting the tensors under a general, valid access to these tensors.
Let $A \in \mathsf{Access}(\mathsf{LHS})$ be an arbitrary access from the domain of the output tensors.
It has the form $A=\{x\mapsto a\}$, where $a$ maps named-axes in $x$ to (symbolic) indices.
We use the denotational semantics of $\padlow$ described in \secref{denotational-sem}, to symbolically execute these expressions.
\begin{align}
    & ~ \sem{\mathsf{LHS}} = \sem{\mathsf{RHS}} \nonumber \\ 
    \Leftrightarrow & ~ \forall A, ~ \sem{\mathsf{LHS}}[A] = \sem{\mathsf{RHS}}[A] \nonumber \\
    \Leftrightarrow & ~ \forall A, ~ \sem{\padlow(\padlow(\mathsf{Y}, 0, L_1), 0, L_2)}[A] = \sem{\padlow(\mathsf{Y}, 0, L_1 + L_2)}[A] \nonumber \\
    %\Leftrightarrow & ~ \forall A, ~ \lif \ A \geq L_2 \ \lthen \ \sem{\padlow(\mathsf{Y}, 0, L_1)}[A - L_2] \ \lelse \ 0 = \nonumber \\
    %& \quad\quad\quad\quad\quad\quad \lif \ A \geq L_1 + L_2 \ \lthen \ \mathsf{Y}[A - L_1 - L_2] \ \lelse \ 0 \nonumber \\
    \Leftrightarrow & ~ \forall A, ~ \lif \ A \geq L_2 \ \lthen \ (\lif \ A - L_2 \geq L_1 \ \lthen \ \mathsf{Y}[A - L_1 - L_2] \ \lelse \ 0) \ \lelse \ 0 = \nonumber \\
    & \quad\quad\quad\quad\quad\quad \lif \ A \geq L_1 + L_2 \ \lthen \ \mathsf{Y}[A - L_1 - L_2] \ \lelse \ 0 \nonumber \\
    \Leftrightarrow & ~ \forall a, ~ \lif \ a \geq l_2 \ \lthen \ (\lif \ a - l_2 \geq l_1 \ \lthen \ \mathsf{Y}[x \mapsto a - l_1 - l_2] \ \lelse \ 0) \ \lelse \ 0 = \nonumber \\
    & \quad\quad\quad\quad\quad\quad \lif \ a \geq l_1 + l_2 \ \lthen \ \mathsf{Y}[x \mapsto a - l_1 - l_2] \ \lelse \ 0 \eqnlabel{pad-expanded}
\end{align}
In the last step, we made use of the fact that the aggregated maps like $A, L_1, L_2$ contain only one \agg{}-axis $x$.
\eqnref{pad-expanded} holds for any number of \ind{}-axes in $x$.

\definecolor{accesscolor}{HTML}{330066}
\paragraph{Observation} We can syntactically partition the above equation as follows:
\begin{align*}
    & ~ \forall a, ~ \textcolor{blue}{\lif} \ \textcolor{magenta}{a \geq l_2} \ \textcolor{blue}{\lthen} \ (\textcolor{blue}{\lif} \ \textcolor{magenta}{a - l_2 \geq l_1} \ \textcolor{blue}{\lthen} \ \textcolor{accesscolor}{\mathsf{Y}[}\textcolor{red}{x \mapsto a - l_1 - l_2}\textcolor{accesscolor}{]} \ \textcolor{blue}{\lelse} \ \textcolor{blue}{0}) \ \textcolor{blue}{\lelse} \ \textcolor{blue}{0} = \\
    & \quad\quad\quad\quad\quad\quad \textcolor{blue}{\lif} \ \textcolor{magenta}{a \geq l_1 + l_2} \ \textcolor{blue}{\lthen} \ \textcolor{accesscolor}{\mathsf{Y}[}\textcolor{red}{x \mapsto a - l_1 - l_2}\textcolor{accesscolor}{]} \ \textcolor{blue}{\lelse} \ \textcolor{blue}{0} \eqnlabel{pad-expanded}
\end{align*}
We explain each part below: 

$\textcolor{accesscolor}{\mathsf{Y}[\_]}$: represents accesses to the tensor \textsf{Y}.

$\textcolor{red}{x \mapsto a - l_1 - l_2}$: represents an \emph{access expression} for a tensor access. In this case, $\textcolor{red}{a - l_1 - l_2}$ is the access map for the \agg{}-axis $x$. The rank of this access map depends on the number of \ind{}-axes in $x$. We observe that we can rewrite this expression as follows:
    \begin{equation*}
        \textcolor{red}{a - l_1 - l_2} = \textcolor{red}{\fmap(e, a, l_1, l_2)}
    \quad\mathrm{where}\quad
        e\df\lambda v, p, p'. (v - p - p')
    \end{equation*}
    where $\fmap$ takes a function and applies it to a list of maps. For instance, if $m = \{i \mapsto v_i, j \mapsto v_j\}$ and $f = \lambda v. (v+1)$, then $\fmap(f, m) = \{i \mapsto v_i + 1, j \mapsto v_j + 1\}$.
Here, $e$ is independent of the rank of $x$ and only $a,l_1,l_2$ change as the rank of $x$ changes. Thus, we are able to capture all the rank-independent information in the function $e$. We call such a function an \emph{\indtrans{}} because it transforms output index-values to input index-values.

$\textcolor{magenta}{a \geq l_1 + l_2}$: these are boolean values, referred to as \emph{\cond{}s}, occurring inside \textsf{if-then-else} blocks.
They capture the dependency of the output tensor value on the input tensor values, based on the value of the access.
They originate from the operator semantics.
For instance, $not\hbox{-}pad$ in the $\padlow$ semantics takes an access $A$ and tells if it lies in the padded area or not.
We observe that we can rewrite this condition as follows:
\begin{equation*}
        \textcolor{magenta}{a \geq l_1 + l_2} = \textcolor{magenta}{\fold(g_1, a, l_1, l_2)}
        \quad\mathrm{where}\quad
        g_1 \df \lambda v, p, p'. (v \geq p + p')
\end{equation*}
    where $\fold$ takes a boolean valued function, applies it to a list of maps, and returns \textsf{true} if all values are \textsf{true}, \textsf{false} otherwise.
    $\fold$ can be defined as:
    \[
        \fold(g, m_1, m_2, \cdots) = \bigwedge_{i \in \mathsf{dom}(m_1)} g(m_1(i), m_2(i), \cdots)
    \]
    Similarly, we can write $\textcolor{magenta}{a \geq l_2}$ as $\textcolor{magenta}{\fold(g_2, a, l_2)}$, where $g_2 \df \lambda v, p. (v \geq p)$.
    Here, $g_1$ and $g_2$ are independent of the rank of $x$ and only $a,l_1,l_2$ change as the rank of $x$ changes.
    We capture all the rank-independent information in the functions $g_1$ and $g_2$.

$\textcolor{blue}{\lif} \ \_ \ \textcolor{blue}{\lthen} \ (\textcolor{blue}{\lif} \ \_ \ \textcolor{blue}{\lthen} \  \_ \ \textcolor{blue}{\lelse} \ \textcolor{blue}{0}) \ \textcolor{blue}{\lelse} \ \textcolor{blue}{0} = \textcolor{blue}{\lif} \ \_ \ \textcolor{blue}{\lthen} \ \_ \ \textcolor{blue}{\lelse} \ \textcolor{blue}{0}$: this is a function which returns a boolean, denoting if the values of \textsf{LHS} and \textsf{RHS} at the access $A$ are equal or not. We call this $\scalarf$ since it contains the core, \emph{scalar} computation in the expressions and may consist of arithmetic and conditionals. For this rule, we can define $\scalarf$ as
    \begin{equation*}
    \begin{gathered}
        \scalarf(y, b_1, b_2) \df (\textcolor{blue}{\lif} \ b_2 \ \textcolor{blue}{\lthen} \ (\textcolor{blue}{\lif} \ b_1 \ \textcolor{blue}{\lthen} \ y \ \textcolor{blue}{\lelse} \ \textcolor{blue}{0}) \ \textcolor{blue}{\lelse} \ \textcolor{blue}{0}) = (\textcolor{blue}{\lif} \ b_1 \ \textcolor{blue}{\lthen} \ y \ \textcolor{blue}{\lelse} \ \textcolor{blue}{0})\\
        y \df \textcolor{accesscolor}{\mathsf{Y}[}\textcolor{red}{x \mapsto \fmap(e, a, l_1, l_2)}\textcolor{accesscolor}{]} \quad
        b_1 \df \textcolor{magenta}{\fold(g_1, a, l_1, l_2)} \quad
        b_2 \df \textcolor{magenta}{\fold(g_2, a, l_2)}
    \end{gathered}
    \end{equation*}
    Based on the arguments to $\scalarf$, we say $\scalarf$ has \emph{1 access to \textsf{Y}} and \emph{2 \cond{}s}.
    Note that there are 2 occurrences each of $y$ and $b_1$ in the $\scalarf$ but we only care about distinct accesses and \cond{}s.
    Just like \indtrans{}s, $\scalarf$ is also independent of the rank of $x$.
    This property will be crucial for our bound computation algorithm.
We use this observation to write \eqnref{pad-expanded} in terms of $\scalarf$ and substitute it back in \eqnref{pad-rule} to get the validity condition of the rule:
\begin{align} \eqnlabel{pad-scalarf}
    & \forall ~ \mathsf{Y}, l_1, l_2, ~ l_1 \geq 0 \land l_2 \geq 0 \rightarrow \nonumber \\
    & \quad\quad\forall a, ~ \scalarf(\mathsf{Y}[x \mapsto \fmap(e, a, l_1, l_2)], \ \fold(g_1, a, l_1, l_2), \ \fold(g_2, a, l_2))
\end{align}

\paragraph{Bound Computation}
We first look at the validity condition of the rule when $x$ is instantiated with some rank $i$, i.e., $\mathsf{valid}(i)$.
We instantiate the \agg{}-axis $x$ to $x^i$, which contains $i$ \ind{}-axes, say $\{\underline{1}, \cdots, \underline{i}\}$.
We also instantiate the input tensor, padding attributes, and the general access map.
Thus, we can rewrite $\mathsf{valid}(i)$ as
\begin{align} \eqnlabel{pad-scalarf-i}
    & \forall ~ \mathsf{Y}^i, l_1^i, l_2^i, ~ l_1^i \geq 0 \land l_2^i \geq 0 \rightarrow \nonumber \\
    & \quad\quad\forall a^i, ~ \scalarf(\mathsf{Y}^i[x^i \mapsto \fmap(e, a^i, l_1^i, l_2^i)], \ \fold(g_1, a^i, l_1^i, l_2^i), \ \fold(g_2, a^i, l_2^i))
\end{align}
As noted before, $\scalarf$, $e$, $g_1$, and $g_2$ are independent of the rank of $x$, so they remain unchanged irrespective of the value of $i$.
We want to find a $k$ such that \eqnref{induction-hypothesis} holds.
The idea is to start with [$\mathsf{valid}(k) \rightarrow \mathsf{valid}(k+1)$] and try to find a $k$ which satisfies the induction hypothesis.
We instead work with its contrapositive,
\begin{equation*}
    \mathsf{valid}(k) \rightarrow \mathsf{valid}(k+1)
    \Leftrightarrow \neg \mathsf{valid}(k+1) \rightarrow \neg \mathsf{valid}(k)
\end{equation*}

Intuitively, $\neg\mathsf{valid}(i)$ is true if there is a counterexample for the rule at rank $i$.
We want to find a sufficient $k$ such that we can \emph{lower} a counterexample at rank $k+1$ (and all higher ranks) to a counterexample at rank $k$. 
On expanding the validity conditions using \eqnref{pad-scalarf-i}, we get:
\begin{align*}
    & \exists ~ \mathsf{Y}^{k+1}, l_1^{k+1}, l_2^{k+1}, ~ l_1^{k+1} \geq 0 \land l_2^{k+1} \geq 0 \bigwedge \exists a^{k+1},\\
    & \quad\quad\neg\scalarf(\mathsf{Y}^{k+1}[x^{k+1} \mapsto \fmap(e, a^{k+1}, l_1^{k+1}, l_2^{k+1})], \ \fold(g_1, a^{k+1}, l_1^{k+1}, l_2^{k+1}), \ \fold(g_2, a^{k+1}, l_2^{k+1})) \\
    & \phantom{\forall ~ \mathsf{Y}^k, l_1^k, l_2^k, ~ l_1^k \geq 0 \land l_2^k \geq 0 \rightarrow \bigwedge \exists a^k}\quad\quad\quad\quad{\big\downarrow} \\
    & \exists ~ \mathsf{Y}^k, l_1^k, l_2^k, ~ l_1^k \geq 0 \land l_2^k \geq 0 \bigwedge \exists a^k, \\
    & \quad\quad\neg\scalarf(\mathsf{Y}^k[x^k \mapsto \fmap(e, a^k, l_1^k, l_2^k)], \ \fold(g_1, a^k, l_1^k, l_2^k), \ \fold(g_2, a^k, l_2^k))
\end{align*}
This means:
\begin{itemize}
    \item We are given a tensor $\mathsf{Y}^{k+1}$ which has $k+1$ \ind{}-axes in $x^{k+1}$, and whose shape is of the form $\shape(\mathsf{Y}) = \{x^{k+1} \mapsto m\}$, where $m = \{\underline{1} \mapsto n_1, \ \cdots, \ \underline{k+1} \mapsto n_{k+1}\}$
    % \begin{equation*}
    %     m = \{\underline{1} \mapsto n_1, \ \cdots, \ \underline{k+1} \mapsto n_{k+1}\}
    % \end{equation*}
    \item We are given padding attributes $l_1^{k+1}$ and $l_2^{k+1}$ such that the precondition is satisfied.
    \item We are given a map $a^{k+1}$ such that output tensors do not match at the access $\{x^{k+1} \mapsto a^{k+1}\}$.
    \item We then need to construct a tensor $\mathsf{Y}^{k}$ which has $k$ \ind{}-axes in $x^{k}$.
    We also need to construct padding attributes $l_1^k$ and $l_2^k$ such that the precondition is still satisfied, and a map $a^k$ such that the output tensors do not match at the access $\{x^k \mapsto a^k\}$.
\end{itemize}

\paragraph{Counterexample Construction}
Our counterexample construction algorithm involves \emph{projecting} the $(k+1)$-ranked \rclass{} to a $k$-ranked \rclass{}, i.e., we would choose $k$ \ind{}-axes from $\{\underline{1}, \cdots, \underline{k+1}\}$. There are $k+1$ such projections but all projections may not lead to a counterexample.
We express our construction through a set of equations and derive constraints on the projection.

Let $\Gamma \subset \{\underline{1}, \cdots, \underline{k+1}\}$ be a projection of size $k$. The \ind{}-axes in $\Gamma$ are currently unknown.
We can then express the $k$-ranked attributes as follows: $\shape(\mathsf{Y}^k) = \{x^k \mapsto m|_{\Gamma}\}$, $a^k = a^{k+1}|_{\Gamma}$, $l_1^k = l_1^{k+1}|_{\Gamma}$, and $l_2^k = l_2^{k+1}|_{\Gamma}$. These are unknown as well.
To construct a $k$-ranked counterexample, we first make sure that the arguments to $\scalarf$ have the same values in both ranks.
We equate $\scalarf$ arguments in the $k$-ranked counterexample to the corresponding arguments in the $(k+1)$-ranked counterexample and collect \emph{constraints} on $\Gamma$.
A constraint is a \ind{}-axis that needs to be in the projection for the $k$-ranked counterexample to exist. Thus,
\begin{itemize}
    \item $\fold(g_1, a^k, l_1^k, l_2^k)$ and $\fold(g_1, a^{k+1}, l_1^{k+1}, l_2^{k+1})$ need to be equisatisfiable.
    Let $C_1$ be the set of constraints we get from this equation.
    We first expand the definition of $\fold$,
    \begin{align*}
        \fold(g_1, a^{k+1}, l_1^{k+1}, l_2^{k+1}) &= \bigwedge_{i = 1}^{k+1} a^{k+1}(\underline{i}) \geq l_1^{k+1}(\underline{i}) + l_2^{k+1}(\underline{i}) \\
        \fold(g_1, a^k, l_1^k, l_2^k) &= \bigwedge_{\underline{j} \in \Gamma} a^k(\underline{j}) \geq l_1^{k}(\underline{j}) + l_2^{k}(\underline{j})
    \end{align*}
    % \begin{align*}
    %     \fold(g_1, a^{k+1}, l_1^{k+1}, l_2^{k+1}) &= \bigwedge_{i = 1}^{k+1} g_1(a^{k+1}(\underline{i}), l_1^{k+1}(\underline{i}), l_2^{k+1}(\underline{i})) \\
    %     \fold(g_1, a^k, l_1^k, l_2^k) &= \bigwedge_{\underline{j} \in \Gamma} g_1(a^k(\underline{j}), l_1^{k}(\underline{j}), l_2^{k}(\underline{j}))
    % \end{align*}
    Let $b = \fold(g_1, a^{k+1}, l_1^{k+1}, l_2^{k+1})$, which contains $k+1$ clauses, and $b' = \fold(g_1, a^k, l_1^k, l_2^k)$, which contains $k$ clauses.
    The value of $b$ is known since it depends entirely on the $(k+1)$-ranked counterexample.
    Let $r$ be the number of clauses in $b$ which evaluate to \textsf{true}.
    The remaining $(k+1) - r$ clauses evaluate to \textsf{false}.
    We do a case analysis on $r$:

    \begin{itemize}
        \item $r < k$: $b$ is \textsf{false} for this case. We want $b'$ to be \textsf{false} as well.
        We can see that for any projection $\Gamma$, $b'$ will be \textsf{false}.
        There are no constraints in this case, so $C_1 = \emptyset$.
        
        \item $r = k$: $b$ is \textsf{false} for this case. We want $b'$ to be \textsf{false} as well. There is exactly one \ind{}-axis, say $\underline{l}$, for which $a^{k+1}(\underline{l}) \geq l_1^{k+1}(\underline{l}) + l_2^{k+1}(\underline{l})$ is \textsf{false}.
        $\underline{l}$ needs to be in the projection for $b'$ to be $\textsf{false}$.
        Leaving out $\underline{l}$ will make $b'$ \textsf{true}, which is not desirable.
        For this case, $C_1 = \{\underline{l}\}$
        
        \item $r = k+1$: $b$ is \textsf{true} for this case, so we want $b'$ to be \textsf{true} as well.
        We can see that for any projection $\Gamma$, $b'$ will be \textsf{true}.
        There are no constraints in this case, so $C_1 = \emptyset$.
    \end{itemize}
    
    As seen above, we get at most one constraint from this equation, so $|C_1| \leq 1$.
    
    \item $\fold(g_2, a^k, l_2^k)$ and $\fold(g_2, a^{k+1}, l_2^{k+1})$ need to be equisatisfiable.
    Let $C_2$ be the set of constraints we get from this equation.
    We do a similar analysis and get at most one constraint from this equation, so $|C_2| \leq 1$.
    The \ind{}-axes in $C_2$ may or may not be same as \ind{}-axes in $C_1$.
    
    \item $\mathsf{Y}^k[x^k \mapsto \fmap(e, a^k, l_1^k, l_2^k)]$ needs to be set to $\mathsf{Y}^{k+1}[x^{k+1} \mapsto \fmap(e, a^{k+1}, l_1^{k+1}, l_2^{k+1})]$.
    This does not introduce any constraint, irrespective of the projection.
    %Currently, the $\scalarf$ for this rule only had 1 access to \textsf{Y}.
    There could have been constraints introduced if the $\scalarf$ had more than 1 access to \textsf{Y}.
    We discuss more about the general case in \secref{bound}.
\end{itemize}

The final set of constraints is computed as $C = C_1 \cup C_2$.
If $|C| > k$, then we cannot get a valid projection.
Thus, we need $|C| \leq k$ for a valid counterexample lowering. 
We know that
$|C| = |C_1 \cup C_2| \leq |C_1| + |C_2| \leq 2$.
From this, we get $2 \leq k$ as a sufficient condition for a valid counterexample lowering. Finally, we can fully construct the $k$-ranked counterexample as follows:
\begin{itemize}
    \item Projection: The \ind{}-axes in $C$ need to be a part of the projection $\Gamma$, but the other axes are unspecified.
    To get the final projection $\Gamma$, extend $C$ by any $k - |C|$ \ind{}-axes from $\{\underline{1}, \cdots, \underline{k+1}\} \mysetminus C$.
    \item Tensor shapes and attributes: Compute the $k$-ranked attributes as follows:
    $\shape(\mathsf{Y}^k) = \{x^k \mapsto m|_{\Gamma}\}$, $a^k = a^{k+1}|_{\Gamma}$, $l_1^k = l_1^{k+1}|_{\Gamma}$, and $l_2^k = l_2^{k+1}|_{\Gamma}$.
    This ensures that the shape and the access map are valid and the precondition is satisfied in the $k$-ranked counterexample.
    \item Tensor values: Let $v = \mathsf{Y}^{k+1}[x^{k+1} \mapsto \fmap(e, a^{k+1}, l_1^{k+1}, l_2^{k+1})]$.
    We only require $\mathsf{Y}^k$ to have the value $v$ at the access $\{x^k \mapsto \fmap(e, a^k, l_1^k, l_2^k)\}$ and the values at other points are unspecified.
\end{itemize}

Therefore, for all $k \geq 2$, $\mathsf{valid}(k)$ implies $\mathsf{valid}(k+1)$.
This allows us to reduce the unbounded-verification proof obligation to two bounded-verification proof obligations: $\mathsf{valid}(1)$ and $\mathsf{valid}(2)$.

%\vspace{-0.5em}
\subsection{Bound Computation for the General Case}\seclabel{bound}

In \secref{bound-example}, we derived a bound for the \padcombine{} rule and reduced the unbounded-verification proof obligation to bounded-verification proof obligations.
This section first presents a theorem for unbounded verification for a general rewrite rule and an algorithm to compute a bound in the general case.
We then briefly discuss how we handle the complexities of the general case.
The detailed proof is in \secref{proof}.
The \textsc{InferBound} routine is described in \algref{infer-bound}.

\begin{lemma}\lemmalabel{inferbound}
Let $R$ be any rewrite rule written in our DSL. Let $m$ be a map containing ranks of \rclasses{} in $R$.
For any \rclass{} $c$ in the rule, if $k = \textsc{InferBound}(R, c)$, then
\begin{equation*}
    \forall i \geq k, ~ \mathsf{valid}(m[c \mapsto i]) \rightarrow \mathsf{valid}(m[c \mapsto i+1])
\end{equation*}
\end{lemma}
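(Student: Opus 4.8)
The plan is to generalize the syntactic decomposition and counterexample-lowering argument carried out for \padcombine{} in \secref{bound-example} to an arbitrary rule. First I would establish a normal form: after interpreting both sides of the rule under a single symbolic access and symbolically executing the denotational semantics of \figref{semantics}, the equality $\sem{\mathsf{LHS}}[A] = \sem{\mathsf{RHS}}[A]$ can always be written as $\scalarf$ applied to a finite list of tensor accesses $t[\fmap(e_j, \cdots)]$ and conditions $\fold(g_l, \cdots)$. The crucial structural fact, read off from the semantics, is that $\scalarf$ together with every \indtrans{} $e_j$ and every predicate $g_l$ is \emph{independent of the rank of $c$}; only the access map and attribute maps grow as the rank of $c$ grows. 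Since the ranks of all other \rclasses{} are pinned by $m$, they contribute only fixed-size data and play no role in the induction on $c$.

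Next I would prove the contrapositive of the stated implication. Fix $i \geq k$ and assume $\neg\mathsf{valid}(m[c \mapsto i+1])$, i.e. there are tensors, attribute maps, and an access that together falsify $\scalarf$ at rank $i+1$. I build a rank-$i$ counterexample by \emph{projecting} $c$ onto a size-$i$ subset $\Gamma$ of its $i+1$ \ind{}-axes and restricting every shape, access, and attribute map to $\Gamma$. Because $\scalarf$, $e_j$, $g_l$ are rank-independent, the restricted data still falsifies $\scalarf$ \emph{provided each argument of $\scalarf$ keeps its value under the projection}. The whole lemma therefore reduces to bounding how many \ind{}-axes $\Gamma$ is forced to retain.

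This accounting is exactly what \textsc{InferBound} computes. Reusing the case analysis of the example, each condition $\fold(g_l, \cdots)$ contributes at most one forced axis: a true fold, or a fold falsified by at least two clauses, is preserved by every projection, whereas a fold falsified by a single clause forces that clause's axis into $\Gamma$. Conditions thus force at most $\numconds$ axes. The ingredient absent from the single-access example is \emph{multiple accesses to one tensor}: since a tensor maps equal accesses to equal values, to preserve $\neg\scalarf$ we must preserve which access expressions $\fmap(e_j, \cdots)$ coincide, so for each disagreeing pair $\Gamma$ must keep a witnessing axis, yielding the $\numaccess$-dependent term of \textsc{InferBound}. Summing gives a constraint set $C$ with $|C| \leq \textsc{InferBound}(R, c) = k \leq i$, so $C$ extends to a size-$i$ projection $\Gamma$; restricting all data to $\Gamma$ produces a valid $i$-ranked counterexample, with the precondition preserved because validity conditions are themselves folds that only weaken under restriction.

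I expect the multiple-access bookkeeping to be the main obstacle: formalizing ``preserve the coincidence pattern of the access expressions'' and bounding the resulting forced axes requires reasoning about the values an uninterpreted tensor assigns at distinct symbolic accesses, rather than the purely per-clause reasoning that suffices for conditions. Two further points I would dispatch are reduction elements, whose equality is discharged by the set-based check of \figref{redelem} and whose reduced axes live in separate \rclasses{} untouched by the projection on $c$, and the verification that restricting attribute maps to $\Gamma$ leaves them operator-valid (e.g. positive strides and sizes), which holds because those validity conditions are again folds stable under restriction.
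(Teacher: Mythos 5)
Your proposal follows the paper's own proof route almost exactly: the canonical $\scalarf$ decomposition into rank-independent \indtrans{}s and $\fold$-conditions, the contrapositive counterexample-lowering via a size-$i$ projection, the per-condition case analysis (at most one forced axis each), the per-tensor pairwise access analysis (at most one forced axis per disagreeing pair, giving the $\binom{n}{2}$ terms of \textsc{InferBound}), extension of the forced set to a full projection, and the observation that preconditions and $\valexp$-conditions only weaken under restriction. The paper makes that last point precise with a substitution-monotonicity lemma over $\wedge/\vee$ combinations of folds (\theoremref{precond-proj}), but your justification is the same in substance.

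There is, however, one genuine gap: you treat the \rclass{} $c$ as if it owned a single \agg{}-axis, speaking of projecting $c$ onto a size-$i$ subset $\Gamma$ of ``its $i+1$ \ind{}-axes.'' In \dsl{} a single \rclass{} may tag several distinct \agg{}-axes --- this is the entire reason \rclasses{} exist (e.g., $x_1 : c$ and $x_2 : c$ in the $(A+A^\top)^\top$ rule, tied together by $\relabel$). A lowering must then project \emph{all} of these \agg{}-axes simultaneously, and the projections cannot be chosen independently: they must commute with the canonical bijections $\mathsf{MapAxes}(c, x_i, x_j)$, or the $\relabel$ semantics and the bijection axioms fail at rank $i$. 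This coupling means every axis forced into one projection forces its bijective image into every other projection, so your bound $|C| \leq k$ no longer suffices as stated. The paper resolves this in \lemmaref{arbitrary-agg} and the \textsc{ComputeProjections} routine (\algref{expand-proj}): constraints distribute across the \agg{}-axes, so each expanded set $F_j' = F_j \cup \bigcup_{i \neq j} \mu_{i,j}(F_i)$ still has size at most $\sum_i |F_i| \leq m + \binom{n}{2} \leq k$. Your argument is repairable by exactly this accounting (equivalently, by projecting ``positions'' of the \rclass{} rather than axes of one \agg{}-axis), but as written the projection step is ill-defined, and the bound unjustified, for any rule in which $c$ tags more than one \agg{}-axis. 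A secondary inaccuracy: reduced axes need not ``live in separate \rclasses{}'' --- the paper instead excludes $\reduce$, $\tdot$, $\conv$ from the core lemmas and argues separately that the uninterpreted-reduction treatment preserves the lowering argument.
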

Here, $m[c \mapsto j]$ denotes the map where the rank of $c$ is updated to $j$, while all other ranks are unchanged.
In other words, for all $i \geq k$ and for \emph{any} ranks of the other \rclasses{}, if the rule is valid when $c$ has rank $i$, then it implies that the rule is valid when $c$ has rank $i+1$.
\lemmaref{inferbound} allows us to use $k$-induction on the \rclass{} ranks to verify the rule for arbitrary ranks of all \rclasses{}.

%\charith{should this be theorem 1 or 2?}
\begin{theorem}
Let $R$ be any rewrite rule written in our DSL. If $c_1 \cdots c_p$ are the \rclasses{} appearing in the rule and $k_i = \textsc{InferBound}(R, c_i)$ for all $i \in \{1 \cdots p\}$, then $R$ is a valid rule in the unbounded setting if and only if
\begin{equation*}
    \bigwedge_{1 \leq r_1 \leq k_1} \cdots \bigwedge_{1 \leq r_p \leq k_p} \mathsf{valid}(\{c_1 \mapsto r_1, \cdots, c_p \mapsto r_p\})
\end{equation*}
This follows from using \lemmaref{inferbound} as induction hypothesis for all \rclasses{}.
\end{theorem}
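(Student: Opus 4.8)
The plan is to prove the theorem by multi-dimensional $k$-induction, bootstrapping from \lemmaref{inferbound}. The ($\Rightarrow$) direction is immediate: if $R$ is valid in the unbounded setting, then $\mathsf{valid}(\{c_1 \mapsto r_1, \ldots, c_p \mapsto r_p\})$ holds for every rank assignment, and in particular for the finitely many assignments in the box $\prod_{i} \{1, \ldots, k_i\}$. The entire content lies in the ($\Leftarrow$) direction, where we assume validity on this box and must propagate it to every assignment $(r_1, \ldots, r_p)$ with each $r_j \geq 1$.

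First, I would upgrade the single-step implication of \lemmaref{inferbound} into a monotone-lifting statement. Fixing a rank map $m$ for the remaining \rclasses{} and a target \rclass{} $c$ with $k = \textsc{InferBound}(R, c)$, a routine induction on $i$ using $\mathsf{valid}(m[c \mapsto i]) \to \mathsf{valid}(m[c \mapsto i+1])$ yields
\begin{equation*}
    \forall i \geq k, ~ \mathsf{valid}(m[c \mapsto k]) \to \mathsf{valid}(m[c \mapsto i]).
\end{equation*}
Crucially, \lemmaref{inferbound} holds for \emph{any} ranks of the other \rclasses{}, so this lifting is available regardless of the values $m$ assigns to the coordinates other than $c$.

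Next comes the coordinate-by-coordinate lift. Given an arbitrary target $(r_1, \ldots, r_p)$, I would set $s_j = \min(r_j, k_j)$, so that $(s_1, \ldots, s_p)$ lies in the box and $\mathsf{valid}(\{c_1 \mapsto s_1, \ldots, c_p \mapsto s_p\})$ holds by hypothesis. I would then raise the coordinates one at a time: starting from $(s_1, \ldots, s_p)$, apply the monotone-lifting statement for $c_1$ (with the other coordinates held fixed) to move coordinate $1$ from $s_1$ up to $r_1$. This is sound because either $r_1 \leq k_1$, so $s_1 = r_1$ and there is nothing to do, or $r_1 > k_1$, so $s_1 = k_1$ and the lift applies. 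Repeating this for $c_2, \ldots, c_p$ in turn produces $\mathsf{valid}(\{c_1 \mapsto r_1, \ldots, c_p \mapsto r_p\})$ after $p$ steps.

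The step I expect to be the crux---though it is more a matter of care than of genuine difficulty---is ensuring that each individual lift remains sound even while the other coordinates sit at arbitrary, partially-lifted values. This is exactly where the uniform quantification ``for any ranks of the other \rclasses{}'' in \lemmaref{inferbound} is indispensable: without it, the order of lifting would matter and the nested induction would not close. Because the lemma is uniform in the other ranks, the $p$ lifts may be applied in any order, the induction goes through, and we conclude that $\mathsf{valid}$ holds for every rank assignment, hence that $R$ is valid in the unbounded setting.
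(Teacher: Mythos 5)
Your proposal is correct and follows essentially the same route as the paper: the paper's proof is precisely to use \lemmaref{inferbound} (which holds uniformly in the ranks of the other \rclasses{}) as a per-coordinate induction hypothesis, verifying the finite box and then lifting each \rclass{} rank beyond its bound one coordinate at a time. Your explicit treatment of the monotone lift and of why the uniform quantification makes the order of lifting irrelevant simply spells out the details the paper leaves implicit.
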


The \textsc{InferBound} routine in \algref{infer-bound} takes a rewrite rule $R$ and \rclass{} $c$ as input.
On line 3, we use the \textsc{TensorsWithRClass} subroutine to get all the input tensors which have an \agg{}-axis having the \rclass{} $c$.
We iterate through all the tensors in lines 4-9.
For each tensor, we use the \textsc{NumTensorAccess} subroutine to get the number of distinct accesses to that tensor and add its contribution to the bound.
On line 10, we use the \textsc{NumConds} subroutine to get the number of conditions having an \agg{}-axis with the \rclass{} $c$ and add it to the bound.
We also make sure that the computed bound is at least 1 since we do not want empty \agg{}-axes.

We now briefly discuss the complexities that we encounter while tackling a general rewrite rule and how the bound computed by this algorithm is sufficient.

\SetKwComment{Comment}{/*}{ */}
\SetKwInOut{Input}{Inputs}
\SetKwInOut{Output}{Output}
\begin{wrapfigure}{R}{0.56\textwidth}
%\vspace{-1em}
    \begin{minipage}{0.56\textwidth}
        \begin{algorithm}[H]
        \caption{Computing the bound for an \rclass{}} \alglabel{infer-bound}
        \Input{$\,$ Rewrite rule $R$ \& \rclass{} $c$}
        \Output{$\,$ $bound$, i.e., a sufficient rank for $c$}
        \SetKwProg{Fn}{Function}{ :}{}
        \Fn{\textsc{InferBound} ($R$, $c$)}{
            $bound \gets 0$\;
            $tensors \gets \textsc{TensorsWithRClass}(R, c)$\;
            \For{$t \in tensors$}{
                $n \gets \textsc{NumTensorAccess}(R, t)$\;
                \If{$n > 1$}{
                    $bound \gets bound + \binom{n}{2}$\;
                }
            }
            $bound \gets bound + \textsc{NumConds}(R, c)$\;
            \textbf{return} $\textsc{Max}(bound, 1)$ 
        }
        \textbf{End Function}
        \end{algorithm}
    \end{minipage}
\vspace{-1em}
\end{wrapfigure}

\paragraph{Arbitrary Operator Compositions}
For any rule written in our DSL, we can symbolically evaluate both \textsf{LHS} and \textsf{RHS} expressions under a general, valid access.
We observe that the result can be expressed in terms of a $\scalarf$ function, which could have any number of accesses and \cond{}s, similar to the \padcombine{} rule.
Therefore, we need to appropriately handle arbitrary number of accesses and conditions.

\vspace{-0.2em}
\paragraph{Arbitrary Number of Conditions}
%The \padcombine{} rule had 2 \cond{}s in the $\scalarf$. We analyzed them in isolation, and each of them can give at most 1 constraint.
In general, a rule could have an arbitrary number of \cond{}s, say $m$.
Similar to the \padcombine{} rule, we observe that any \cond{} introduces at most 1 constraint. 
We assume that all $m$ conditions are independent and we get $m$ constraints in the worst case.
%\mangpo{This assumes that one condition can introduce only one constraint. We should explain why this is true for all conditions.}

\vspace{-0.2em}
\paragraph{Arbitrary Number of Tensor Accesses}
%The \padcombine{} rule had 1 access to \textsf{Y} in the $\scalarf$, which did not introduce any constraints.
In general, a rule could have an arbitrary number of accesses, say $n$, to a tensor.
We need to make sure that during the projection, accesses containing unequal values do not get projected down to the same point in the tensor, since a point cannot have two different values.
We do a pairwise analysis of the accesses ($\binom{n}{2}$ such pairs) and each pair can give at most 1 constraint.
Thus, we get $\binom{n}{2}$ constraints from the accesses in the worst case. 

\vspace{-0.2em}
\paragraph{Arbitrary Number of \rclasses{}}
In general, a rule could have any number of \agg{}-axes and \rclasses{}.
This would require computing the minimum rank for each \rclass{}.
We do so by analyzing each \rclass{} in isolation, i.e., finding a sufficient $k$ for which we can do a $(k+1)$ to $k$ counterexample projection while keeping the ranks of other \rclasses{} unchanged.
The bound for this case would be $\binom{n}{2} + m$, where $n$ is the number of accesses and $m$ is the number of conditions in which the \rclass{} appears.
The computed bound is independent of the ranks of the other \rclasses{}.
Therefore, to ensure correctness in the unbounded setting in presence of multiple \rclasses{}, bounded-verification instances corresponding to all possible combinations of \rclass{} ranks within the bounds need to be verified.
This means that if $c_1 \cdots c_p$ are the \rclasses{} appearing in a rule $R$ and $k_i = \textsc{InferBound}(R, c_i)$ for all $i \in \{1 \cdots p\}$, then we need to verify $\prod_{i = 1}^p k_i$ number of bounded-verification instances.
%\mangpo{Conclude that hence the product.}

\vspace{-0.2em}
\paragraph{Arbitrary Number of Input Tensors}
In general, a rule could have any number of input tensors.
We still analyze each \rclass{} in isolation, but while counting the number of accesses, we only consider accesses to the tensors which contain that \rclass{}.
We also make the observation that accesses across tensors do not lead to any constraints, which allows us to do a pairwise analysis of accesses per tensor.
We then add the contribution of each tensor to the bound.
This is necessary since we take the union of all constraints from all tensors.

\subsection{Bounded Verification of Rewrite Rules} \seclabel{bounded-verification}

We reduced the unbounded-verification proof obligation to a finite set of bounded-verification proof obligations in \secref{bound}.
\project{} infers a sufficient rank for every \rclass{} and instantiates them with all ranks up to that bound, so we end up with fixed-rank but arbitrary-sized tensors.
We handle tensors of unbounded size using uninterpreted functions from accesses to values. For any rewrite rule, \project{} symbolically executes the \textsf{LHS} and \textsf{RHS} tensor expressions using operator semantics and interprets them under a general access with symbolic indices, chosen from the domain of accesses of the two expressions. During the symbolic execution, \project{} decomposes the verification into two kinds of checks below.

\paragraph{Checks performed during symbolic execution}
Each tensor operator constructs the shape of the output tensor using the shapes of the input tensor(s).
\project{} checks that the final \textsf{LHS} and \textsf{RHS} tensors have the
same rank and the same \ind{}-axes,
i.e., \texttt{(= (axes lhs) (axes rhs))}.
This check is performed entirely by the symbolic execution engine, rather than by the solver.
    
\paragraph{Checks delegated to the SMT solver}
During symbolic execution, \project{} collects assertions which are then sent to an SMT solver as verification conditions.
These assertions check that the axes sizes for \textsf{LHS} and \textsf{RHS} are the same; that accesses fall within axes sizes; and that the values stored in tensor expressions are the same.
These verification conditions are described below:

\begin{itemize}
    \item Assertions related to axes sizes: under the precondition, assuming \textsf{LHS} is valid, assert that the \textsf{LHS} shape and \textsf{RHS} shape are the same.
    This can be represented as
    \texttt{(=> (\&\& precond lhsValid) (= (shape lhs) (shape rhs)))}.
    Asserting the equality of two shapes involves checking if the shapes have the same \ind{}-axes and the corresponding axes have the same sizes.
    As discussed, the former check is done entirely by the symbolic execution engine.
    However, the solver is needed for the latter check.
    
    \item Assertions related to access ranges: under the precondition, assuming \textsf{LHS} is valid, all valid accesses to \textsf{LHS} lead to valid accesses to \textsf{RHS}.
    This can be represented as
    \texttt{(=> (\&\& precond lhsValid lhsAccessValid) rhsAccessValid)}.
    
    \item Assertions related to final tensor expressions: under the precondition, assuming that \textsf{LHS} is valid, \textsf{RHS} should be valid and they contain the same values under a general access.
    This can be represented as
    \texttt{(=> (\&\& precond lhsValid) (\&\& rhsValid rewriteEquivalent))}.
\end{itemize}

If the symbolic execution and SMT checks succeed, the rule is deemed verified for that rank.

\subsection{Verifying Rules with Reduction Operators} \seclabel{verifyreduction}

As discussed in \secref{denotational-sem}, automatically verifying expressions with reductions is challenging because:
\begin{itemize}
    \item The sizes of reduced axes are unbounded.
    \item Reductions can be performed in multiple steps, such as when tiling the tensors or distributing reduction over concatenation. One such example is the rule $\reduce(\concat(\mathsf{A},\mathsf{B})) \Rightarrow \reduce(\concat(\reduce(\mathsf{A}),\reduce(\mathsf{B})))$.
\end{itemize}

In \project{}, the key idea to verify rewrite rules with reductions is to represent the reduction results as \emph{uninterpreted} reduction elements, $\mathsf{Red}_Xf(X)$ (see \secref{denotational-sem}).
We observe that the equivalence of two reduction elements $\mathsf{Red}_Xf(X)$ and $\mathsf{Red}_Yg(Y)$, can \emph{often} be proven by showing that the $\mathsf{LHS}$ and $\mathsf{RHS}$ are sums of the same values, i.e.,
$f(X)$ and $g(Y)$ represent the same (multi-)set.

One way to prove set equivalence is to establish a bijection between $X$ and $Y$ and show each pair of values in $f(X)$ and $g(Y)$ are equal, regardless of tensor instantiation and operator attributes.
In \project{}, the user provides a relation between $X$ and $Y$ as a hint.
We then use an SMT solver to verify that it is a bijection: (1) for all valid $x\in X$, a unique $y\in Y$ exists under the relation, and vice versa and (2) the relation can take on all valid $x\in X$ and $y\in Y$.
After establishing the bijection, we prove each pair of elements $f(x)$ and $g(y)$ are always the same, regardless of tensor instantiation.
Successfully passing the checks reduces our proof to the case discussed in \secref{bounded-verification}.

A majority of rules with reductions (13 out of 17) in our system can be proven by establishing the bijection \emph{with} user-provided \emph{hints}.
However, there are rules where bijectivity cannot be proven due to limitations of SMT solvers on quantified formulas (1 out of 17), or no such bijection relation exists due to the fact that the cardinalities of the sets of valid reduction indices in $\mathsf{LHS}$ and $\mathsf{RHS}$ are different (3 out of 17).
In these cases, it is up to the user to further complete the proof of set equivalence based on the verifier output.
Note that this is generally much easier than proving full correctness from scratch.

\section{Discussion} \seclabel{discussion}

\paragraph{Choice of Tensor Compiler}
We chose \xla{} since it is a production quality compiler and is integrated into leading ML frontend-frameworks like TensorFlow, PyTorch, and JAX.
The \xla{} compiler takes model graphs from these frontends and converts them into \xlahlo{}, which is much more expressive than these frameworks.
Operators in these frameworks either have direct counterparts in \xlahlo{} (e.g., $\concat$, $\expand$, $\slice$), or can be expressed using existing \xlahlo{} operators (e.g., \textsf{squeeze}, \textsf{shrink}, \textsf{split}).
Some \xlahlo{} operators are more general than their counterparts in other frameworks.
For instance, \textsf{tf.tensordot} in TensorFlow allows specifying only the contracting axes, whereas \textsf{DotGeneral} in \xlahlo{} allows specifying both contracting and batch axes.
\textsf{tf.pad} in TensorFlow and \textsf{Pad} in ONNX allow specifying only low and high padding attributes, whereas $\pad$ in \xlahlo{} allows specifying interior padding as well.
Moreover, many operators in the \textsf{jax.lax} module \cite{jaxlax} are thin wrappers around equivalent \xlahlo{} operators.
As for other IR frameworks like ONNX \cite{onnx}, their operators also are either similar to \xlahlo{} operators, or can be expressed using \xlahlo{} operators.
Therefore, \xlahlo{} supports a more general set of operators than other frameworks.

\paragraph{Minimum vs Sufficient Rank for Unbounded Verification}
\secref{bound} shows how we can use the \textsc{InferBound} routine to compute a bound for every \rclass{} in a rewrite rule and get a set of bounded-verification instances.
These bounded-verification instances are sufficient to imply correctness in the unbounded setting.
It is worth noting that this computed bound is only a \emph{sufficient} rank and not the \emph{minimum} rank required for such a property to hold.
For instance, in \secref{bound-example}, we compute a bound for the \textsc{PadLowCombine} rule by calculating the number of conditions and the number of accesses.
We assume the two conditions to be independent, each contributing 1 constraint in the worst case, hence getting 2 as the final bound.
However, we observe that given the precondition, the \cond{} $b_1$ implies the \cond{} $b_2$, which reduces the bound to 1.
In general, the conditions may have dependencies and such insights can help us derive a smaller, or maybe even the minimum bound.

\section{Evaluation}
\seclabel{eval}

We evaluated the \project{} verification framework on the following aspects:

\newcommand{\totalRulesXla}[0]{175}
\newcommand{\supportedTR}[0]{121}
\newcommand{\implementedTR}[0]{118}
\newcommand{\verifiedTR}[0]{115}
\newcommand{\failTR}[0]{6}
\newcommand{\supportedTASO}[0]{14}
\newcommand{\supportedPET}[0]{18}

\begin{itemize}
    \item \textbf{Q1:} How expressive is \dsl{} compared to other automatic tensor graph rewrite verification systems? (\secref{eval-expressive})
    \item \textbf{Q2:} How good is \project{} at performing unbounded verification? (\secref{eval-verification})
    \item \textbf{Q3:} Can \project{} be used to aid compiler developers in rapid development? (\secref{case-studies})
\end{itemize}

To answer these questions, we selected all rules from the \xla{}'s Algebraic Simplifier (\textsf{AS}) for evaluation.
The $\mathsf{AS}$ rewrite pass has \totalRulesXla{} rules. 
These rules are implemented to speed up execution and allow further optimizations like fusion in other compiler passes.

\subsection{Expressiveness of \dsl{}}
\seclabel{eval-expressive}

We compared the expressiveness of \project{} with two other automatic tensor graph rewrite engines, TASO~\cite{taso} and PET~\cite{pet}, across all the \totalRulesXla{} rules. 
These rules are categorized into 5 classes, as shown in \tabref{expressiveness}.
We assessed whether each system can represent rules from these classes.
While TASO and PET do not support automatic, unbounded verification, we evaluated whether they can express these rules.
We found that TASO can represent \supportedTASO{} rules, and PET can represent \supportedPET{} rules.
Comparatively, \project{} can represent \supportedTR{} rules.

We also checked whether TASO and PET can perform bounded verification on the rules they can represent.
TASO and PET can prove 6 and 16 rules, respectively.
In contrast, \project{} verified \verifiedTR{} rules in the unbounded setting. Verification statistics are detailed in \secref{eval-verification}.

%\vspace{-0.2em}
\paragraph{Categories.}
\tabref{expressiveness} categorizes all the rules into 5 classes and summarizes representable rules in the systems. Numbers in parentheses indicate verifiable rules, with \project{} supporting unbounded verification and others using bounded verification strategies.
There are two high-level classes: element-wise and non-element-wise rules.
Element-wise rules are expressed with element-wise arithmetic operations.
They are further divided into rules with basic operators (e.g., $+$, $*$, $\mathsf{div}$, $\mathsf{rem}$) and rules with advanced operators without good support by solvers (e.g., $\mathsf{exp}$, $\mathsf{power}$).
Non-element-wise rules involve operators that change axes sizes or perform reductions. This category is further divided into reductions (e.g., $\conv$, $\tdot$, and $\reduce$), layout-sensitive operators (e.g., $\reshape$, $\bitcast$), and others. We separately categorized rules with preconditions. Note that rules with preconditions usually perform non-element-wise operations.

\begin{table}[t]
    \caption{Number of supported rules per disjoint category. The numbers in parentheses indicate rules that have been implemented and verified. $^{\dagger}$ means verified using cvc5 solver.}
    \centering
    \footnotesize
    \begin{tabular}{r|cccc}
    \toprule
        \multirow{2}{*}{\bf Category (disjoint)} & \multicolumn{4}{c}{\bf Number of Supported Rules } \\
         & \xla{} & \project{} & TASO & PET  \\
        \midrule
        Elementwise: simple    & 56 & 46 (45) & 11 (5) & 11 (11) \\
        Elementwise: advanced  & 10 & 1 (1$^{\dagger}$)  & 0 (0) & 0 (0) \\
        Non-elementwise: reductions  & 22 & 20 (17) & 0 (0) & 2 (0) \\
        Non-elementwise: layout-sensitive & 29 & 0 (0) & 0 (0) & 2 (2) \\
        Non-elementwise: others & 58 & 54 (52) & 3 (1) & 3 (3) \\
        \midrule
        Rules with Preconditions & 61 & 40 (34) & 0 (0) & 0 (0) \\
        \midrule
        Total & \totalRulesXla{} & \supportedTR{} (\verifiedTR{}) & \supportedTASO{} (6) & \supportedPET{} (16) \\
        \bottomrule
    \end{tabular}
    \tablabel{expressiveness}
\vspace{-1em}
\end{table}

%\vspace{-0.2em}
\paragraph{\project{}}
We implemented \project{} in Haskell using the Grisette~\cite{grisette} symbolic evaluation engine.
We represented \supportedTR{} rules in \project{} and verified \verifiedTR{} of them in the unbounded setting,
using Z3~\cite{z3} for most verifications and cvc5~\cite{cvc5} for one advanced element-wise rule.

\tabref{expressiveness} shows that \project{} was able to represent 46 element-wise simple rules and verified 45 of them.
In the advanced element-wise class, \project{} was able to verify only one rule involving $\mathsf{exp}$ with cvc5.
The limitations are due to unsupported operators in Z3 or cvc5 (e.g., $\mathsf{log}$, $\mathsf{power}$) and the inability to reason about precision of floating point expressions efficiently.

Among non-element-wise rules, \project{} was able to represent 74 and verified 69 rules.
For reduction rules, 20 rules were represented, with 3 unproven due to insufficient normalization lemmas and handling cases with extra zeroes.
\dsl{} cannot represent layout-sensitive rules.
We further discuss this in \secref{limitation}.
The unsupported rules in the other category are due to unimplemented operators (e.g., $\mathsf{scatter}$, $\mathsf{gather}$), while
two rules failed to verify due to timeouts.

%\vspace{-0.2em}
\paragraph{Comparison to Prior Works}
We compared \project{} with TASO and PET, which use axiomatic and statistical proof mechanisms, respectively.
\tabref{expressiveness} shows that \project{} was able to support significantly more rules than TASO and PET.
The main hurdles for TASO and PET were unsupported operators, too strict operator definitions (e.g., not supporting all the attributes for $\tdot$ and $\conv$), and inability to handle preconditions.
Unlike \project{}, TASO and PET cannot express rules requiring preconditions, precluding them from supporting 61 \xla{} rules.

The axiomatic approach requires axioms to prove equivalence of tensor expressions. Out of the 14 representable rules, 8 rules needed new axioms in TASO.
It can be even more cumbersome for an axiomatic approach like TASO when we need new axioms with preconditions.

The statistical approach in PET has benefits and drawbacks.
It does not need verification conditions proven by SMT solvers, making it potentially more flexible for operations not modeled in SMT.
However, PET can only verify linear expressions, which limits its scope.
It is potentially feasible to add support for preconditions by only generating test inputs meeting the precondition.

\subsection{Verification Capabilities of \project{}}
\seclabel{eval-verification}

\paragraph{Experimental Setup}
All evaluations were conducted on a system equipped with an Intel Core i9-13900K processor and \SI{128}{GB} of RAM.
We supported boolean, integer, and real-valued tensors in our DSL, and we verified the rules for all valid tensor types for that rule.
The timeout per SMT solver query was set to 10 seconds.

Out of the \supportedTR{} rules that we can express in our DSL, we implemented \implementedTR{} rules and verified \verifiedTR{} rules in the unbounded setting.
\figref{verification-times} shows cumulative distribution of total verification times for the \verifiedTR{} verified rules.
\project{} was able to verify 108 rules under 1 second, with verification times ranging from a minimum of \SI{0.023}{\second} to a maximum of \SI{23.33}{\second}.
\figref{num-tasks} shows the number of bounded-verification proof obligations (tasks) discharged for the verified rules.
The number of tasks is simply the product of the computed bounds of all \rclasses{} in a rule.
110 of the rules only required 1 task to guarantee correctness in the unbounded setting.

\project{} was unable to verify \failTR{} rules.
This included 3 timeouts, mainly due to those rules having operators like \textsf{div} and \textsf{rem}, which solvers are slow at handling.
The remaining 3 rules (not implemented) cannot be proven correct due to missing rules on reduction elements.

\begin{figure}[t]
    \centering
    %\captionsetup{justification=centering}
    %\captionsetup[subfigure]{justification=centering}
    \begin{subfigure}{0.38\textwidth}
        \centering
        \includegraphics[width=\textwidth]{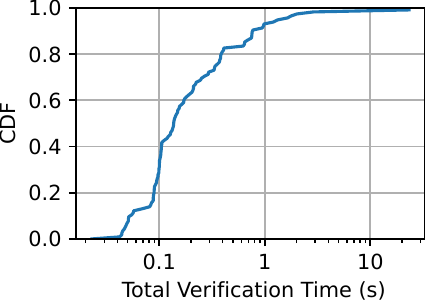}
        \caption{}
        \figlabel{verification-times}
    \end{subfigure}
    %\hfill
    \hspace{5em}
    \begin{subfigure}{0.38\textwidth}
        \centering
        \includegraphics[width=\textwidth]{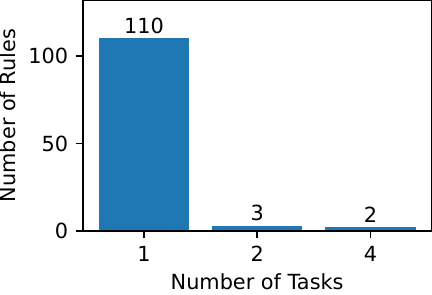}
        \caption{}
        \figlabel{num-tasks}
    \end{subfigure}
    %\vspace{-1em}
    \caption{(a) The cumulative distribution of total verification time and (b) the number of tasks (bounded-verification proof obligations) discharged.}
    %\figlabel{verification-stats}
    \vspace{-1em}
\end{figure}

\subsection{Generalizing Rewrite Rules} \seclabel{case-studies}

Using \project{}, we found that some \xla{} rewrite rules are overly constrained.
Compiler engineers intentionally impose these constraints to avoid reasoning about cases where spurious bugs might be introduced.
We used \project{} to generalize the following rule by relaxing its precondition.

\begin{wrapfigure}{R}{0.41\textwidth}
%\vspace{-1.0em}
    \small
    $
    \begin{array}{c}
    \begin{array}{l}
        \boldsymbol{\mathsf{FoldConvInputPad(XLA){:}}}
        \\
        \ \ \ \ \llet S_{ol} = S_l + S_{lp} \lin \\ 
        \ \ \ \ \llet S_{oh} = S_h + S_{hp}  \lin \\ 
        \ \ \ \ \ \ \ \ \conv(\pad(t, 0, S_{lp}, S_{hp}, S_{ip}), t',
        \\
        \phantom{\ \ \ \ \ \ \ \ \conv(}B, F, O, 
        %\\
        %\phantom{\ \ \ \ \ \ \ \ \mathsf{convolution}(}
        S_l, S_h, S_i, S'_i)
        \\
        \phantom{\mathsf{concatenate}(}
        \Longrightarrow_{\fbox{$S_{ip} = 0 \lland S_i = 1$}}
        %_{\substack{m_{ol} = m_l + (m_i + \overline{1}) \times m_{lp}\\ 
                                %   \ \ m_{oh} = m_h + (m_i + \overline{1}) \times m_{hp}\\ 
                                %   \ \ \ \ \ \ m_{oi} = (m_i + m_{pi}) + (m_i \times m_{pi}) }} 
                                %   \\[1cm]
        \\
        \ \ \ \ \ \ \ \ \conv(t, t', B, F, O, 
        \\
        \phantom{\ \ \ \ \ \ \ \ \conv(}
        S_{ol}, S_{oh}, S_i, S'_{i})
    \end{array}
    \\
    \\
    \begin{array}{l}
        \boldsymbol{\mathsf{FoldConvInputPad(Generalized){:}}}
        \\
        \ \ \ \ \fbox{$\llet S_{ol} = S_l + S_i \times S_{lp} \lin$} \\ 
        \ \ \ \ \fbox{$\llet S_{oh} = S_h + S_i \times S_{hp}  \lin$} \\ 
        \ \ \ \ \fbox{$\llet S_{oi} = S_i + S_i \times S_{ip}  \lin$} \\ 
        \ \ \ \ \ \ \ \ \conv(\pad(t, 0, S_{lp}, S_{hp}, S_{ip}), t',
        \\
        \phantom{\ \ \ \ \ \ \ \ \conv(}B, F, O, 
        %\\
        %\phantom{\ \ \ \ \ \ \ \ \mathsf{convolution}(}
        S_l, S_h, S_i, S'_i)
        \\
        \phantom{\mathsf{concatenate}(}
        \Longrightarrow
        %_{\substack{m_{ol} = m_l + (m_i + \overline{1}) \times m_{lp}\\ 
                                %   \ \ m_{oh} = m_h + (m_i + \overline{1}) \times m_{hp}\\ 
                                %   \ \ \ \ \ \ m_{oi} = (m_i + m_{pi}) + (m_i \times m_{pi}) }} 
                                %   \\[1cm]
        \\
        \ \ \ \ \ \ \ \ \conv(t, t', B, F, O, 
        \\
        \phantom{\ \ \ \ \ \ \ \ \conv(}
        S_{ol}, S_{oh}, S_{oi}, S'_{i})
    \end{array}
    \end{array}
    $
    \caption{Fold input $\pad$ into $\conv$.}
    \figlabel{fold-conv}
\vspace{-2em}
\end{wrapfigure}

\figref{fold-conv} (top) presents the \textsc{FoldConvInputPad} rule as it exists in \xla{} using the \dsl{} notation.
The goal of the rule is to fold the $\pad$ operator into the operand arguments of $\conv$ itself (\xlahlo{} convolutions support padding as operands).
This rule
does not support internal padding in the input tensor and gives up if this constraint is violated.
This is largely because it is non-trivial to think about how the internal padding gets folded into the dilation attribute.
\project{} was able to prove a more general version of this rule as shown in
\figref{fold-conv} (bottom).
The differences are put in boxes.

The key to generalizing the rule is to calculate the padding arguments that get fed into the $\conv$ operator.
This is a function of the $\pad$ operator's interior, high, and low padding, as well as padding that may already exist in the $\conv$ operator.
\figref{fold-conv} (bottom) shows how to calculate the maps $S_{ol}, S_{oh}, S_{oi}$ for the rule to be general.
These maps are more complicated than the non-general version, but this allows the compiler writer to get rid of the precondition of the rule shown in \figref{fold-conv} (bottom).
It is not immediately clear why this formulation might be correct. 
Therefore, we encode it in our Grisette implementation and successfully prove that the generalized rule with these calculations is valid.

\section{Related Works} \seclabel{related-works}

\project{} is inspired by prior works on representing and verifying compiler transformations.

\vspace{-0.2em}
\paragraph{Tensor Language Formalisms.} Glenside~\cite{glenside} formalizes the syntax and provides a composable abstraction to represent tensor graph rewrites in a purely functional form. 
TASO~\cite{taso} uses s-expression based representations to functionally model tensor operators.
It does not provide semantics for tensor operators and mostly relies on axioms built around the operators to perform verification related tasks.
Such an axiomatic approach would not scale with addition of new tensor operators, as it requires axioms describing operator properties and how the operators interact with each other.
\project{} on the other hand only requires the users to specify operator semantics for every new operator once.
PET~\cite{pet} verifies rewrites rule via a statistical approach. PET symbolically infers the bounding boxes of the output tensor, where each box contains elements represented by the same linear expression of its input elements.
Leveraging the linear property, PET statistically verifies the equivalence of the corresponding boxes by checking $m + 1$ specific positions in the box,
where $m$ is the number of axes of the output tensor.

There have been many works providing semantics for hardware instructions~\cite{x86-sem1, x86-sem2} or general purpose compiler IRs such as LLVM IR~\cite{llvm-verify3}.
ATL~\cite{atl-popl} is among the first works to provide denotational semantics to model a tensor language. It is closely modeled after the widely adopted Halide language~\cite{halide}.
In contrast, \project{} models its core language around the production \xla{} compiler's High Level Operators.
To the best of our knowledge, it is the first formalism supporting \xlahlo{}'s operators in their full generality, modeling all the parameterizations of operators.
Similar to ATL, \project{} provides denotational semantics of tensor operators with arbitrary rank and size, which is key to the proof that reduces unbounded verification into a bounded setting.

\paragraph{Verification of Rewrites with Proof Assistants.} ATL~\cite{atl-popl} is among the first works to successfully prove correctness of tensor graph rewrites with input tensors of arbitrary shape using the Coq proof assistant.
Comparatively, \project{} does automatic verification given the rewrite specification and accepts preconditions which are prevalent in practical rewrite rules developed by compiler engineers.
We note that ATL's Coq based approach supports layout-sensitive rewrites such as those that involve reshapes, which \project{} does not cover currently.
We provide a methodology to support those operators in \secref{limitation}. 
There are examples from other domains on mechanized proofs on rewrite systems, covering relational algebra~\cite{relational-algebra-coq, fiat} and compiler construction tools~\cite{comp-construction}.

\vspace{-0.5em}
\paragraph{Automated Verification of Rewrites.} We take inspiration from many successful works focusing on automatically verifying rewrites for different program representations, mainly with the aid of SMT solvers.
Alive~\cite{alive} focuses on verifying rewrite rules in LLVM's Instruction Combiner pass, which is LLVM's peephole optimization.
They mainly focus on scalar LLVM IR instructions.
Many works on superoptimization use automated verification of rewrites as part of their synthesis process. 
For example, the STOKE project~\cite{stoke} and others~\cite{bansal-superopt} verify rewrites expressed in x86 instructions, Souper~\cite{souper} verifies rewrites expressed in LLVM IR instructions,  Minotaur~\cite{minotaur} extends this to vector LLVM IR instructions and ~\cite{halide-rewriter} proves rewrite rules in Halide IR. TASO~\cite{taso}, PET~\cite{pet}, and TENSAT~\cite{tensat} are examples of systems that automatically synthesize tensor graph rewrites.
\project{} is influenced by the success of these systems and for the first time proposes an automated process for verifying tensor graph rewrites on input tensors of arbitrary ranke and size.
Further, \project{} is the first system to incorporate preconditions in its verification process.

\vspace{-0.5em}
\paragraph{Compiler Verification.} There is a lot of work in building verified general-purpose compilers. CompCert~\cite{CompCert-ERTS-2018} is a formally verified C compiler with many verification efforts and extensions~\cite{verify-c, verify-c2, verify-c3, verify-c4}. CakeML~\cite{cakeml} is a formally verified ML compiler.
There are works on building associated verified transformations~\cite{llvm-verify, llvm-verify2, llvm-verify3}. Comparatively, less works have explored verifying compiler transformations in tensor compilers.
ATL~\cite{atl-pldi,atl-popl} is one of the first successes on this front that builds upon a proof-assistant-aided verification process. 
To the best of our knowledge, \project{} is one of the first efforts at automatically verifying tensor graph rewrites closely resembling the industrial strength \xla{} tensor compiler.

%\vspace{-0.2em}
\section{Limitations and Future Work}\seclabel{limitation}

Currently, \project{} does not support layout-sensitive rules with operators like $\reshape$ or $\bitcast$, which change operand layouts or do not respect element boundaries.
The $\reshape$ operator is particularly challenging to verify because it can collapse or flatten an arbitrary number of axes, complicating the representation and verification of rank-polymorphic rules.
This complexity arises because $\reshape$ changes the interpretation of the input tensor by linearizing and de-linearizing its accesses, depending on the rank of input.
However, some reshape rules in \xla{} do not use the operator's full generality.
This suggests a pragmatic approach to extend \project{} to support these simpler cases, addressing much of its practical usage in \xla{}.
The exploration of reshape's full generality is left as potential future work.
    
As shown in \secref{verifyreduction}, \project{} can currently verify a subset of reduction rules.
Users need to provide hints to establish relations between reduction indices, satisfying assumptions like no duplicate values being reduced or 1-1 relations.
However, there are limitations, such as handling cases with extra zeroes being reduced, difficulty in proving bijectivity due solver limitations, and instances where no bijection relation exists.
In these cases, users need to complete the proof of set equivalence based on verifier output, which is generally easier than proving full correctness from scratch.
Future work may explore better proof strategies for reduction rules, possibly using a $k$-induction approach to establish bounds and finitize sizes for reduction axes.

\section{Conclusion} \seclabel{conclusion}

In this paper, we presented \project{}, the first automatic verification system that allows users to succinctly express and verify tensor graph rewrite rules in their full generality.
To do so, we designed \dsl{}, which allows specification of rank- and size-polymorphic rewrite rules using a novel axis definition, called \agg{}-axes.
\dsl{} consists of highly parameterized tensor operators, closely resembling those in \xlahlo{}.
We also provided denotational semantics for \dsl{} and used them to convert the unbounded-verification proof obligation to a finite set of bounded-verification proof obligations.
To the best of our knowledge, this is the first time a sizable subset of tensor operators from a production-quality tensor IR (\xlahlo{}) was formalised.
We demonstrated that \project{} can verify the majority of complex rewrite rules from the production \xla{} compiler's algebraic simplifier in the unbounded setting, 
vastly surpassing the closest automatic, bounded-verification technique.

\section{Data-Availability Statement}

An artifact \cite{tr-artifact} associated with this paper was evaluated and is freely available.

\begin{acks}
We thank the anonymous reviewers for their constructive feedback.
We would also like to thank Wanyu Zhao for her feedback on early drafts of this paper.
This work was supported in part by ACE, one of the seven centers in JUMP 2.0 and the CONIX Research Center, one of the six centers in JUMP, which are Semiconductor Research Corporation (SRC) programs sponsored by DARPA; by NSF under grants CCF-2338739, CCF-2122950, ITE-2132318, and CCF-2437238; by DARPA under grants FA8750-16-2-0032 and D22AP00146-00 as well as gifts from Adobe, Facebook, and Intel.
\end{acks}

% Already set in predefs.tex
\bibliographystyle{ACM-Reference-Format}
\bibliography{references}

@inproceedings {tensorflow,
author = {Mart{\'\i}n Abadi and Paul Barham and Jianmin Chen and Zhifeng Chen and Andy Davis and Jeffrey Dean and Matthieu Devin and Sanjay Ghemawat and Geoffrey Irving and Michael Isard and Manjunath Kudlur and Josh Levenberg and Rajat Monga and Sherry Moore and Derek G. Murray and Benoit Steiner and Paul Tucker and Vijay Vasudevan and Pete Warden and Martin Wicke and Yuan Yu and Xiaoqiang Zheng},
title = {{TensorFlow}: A System for {Large-Scale} Machine Learning},
booktitle = {12th USENIX Symposium on Operating Systems Design and Implementation (OSDI 16)},
year = {2016},
isbn = {978-1-931971-33-1},
address = {Savannah, GA},
pages = {265--283},
url = {https://www.usenix.org/conference/osdi16/technical-sessions/presentation/abadi},
publisher = {USENIX Association},
month = nov
}

@inproceedings{pytorch1,
  title={Automatic differentiation in PyTorch},
  author={Paszke, Adam and Gross, Sam and Chintala, Soumith and Chanan, Gregory and Yang, Edward and DeVito, Zachary and Lin, Zeming and Desmaison, Alban and Antiga, Luca and Lerer, Adam},
  booktitle={NIPS-W},
  year={2017}
}

@inproceedings{pytorch2,
author = {Ansel, Jason and Yang, Edward and He, Horace and Gimelshein, Natalia and Jain, Animesh and Voznesensky, Michael and Bao, Bin and Bell, Peter and Berard, David and Burovski, Evgeni and Chauhan, Geeta and Chourdia, Anjali and Constable, Will and Desmaison, Alban and DeVito, Zachary and Ellison, Elias and Feng, Will and Gong, Jiong and Gschwind, Michael and Hirsh, Brian and Huang, Sherlock and Kalambarkar, Kshiteej and Kirsch, Laurent and Lazos, Michael and Lezcano, Mario and Liang, Yanbo and Liang, Jason and Lu, Yinghai and Luk, C. K. and Maher, Bert and Pan, Yunjie and Puhrsch, Christian and Reso, Matthias and Saroufim, Mark and Siraichi, Marcos Yukio and Suk, Helen and Zhang, Shunting and Suo, Michael and Tillet, Phil and Zhao, Xu and Wang, Eikan and Zhou, Keren and Zou, Richard and Wang, Xiaodong and Mathews, Ajit and Wen, William and Chanan, Gregory and Wu, Peng and Chintala, Soumith},
title = {PyTorch 2: Faster Machine Learning Through Dynamic Python Bytecode Transformation and Graph Compilation},
year = {2024},
isbn = {9798400703850},
publisher = {Association for Computing Machinery},
address = {New York, NY, USA},
url = {https://doi.org/10.1145/3620665.3640366},
doi = {10.1145/3620665.3640366},
abstract = {This paper introduces two extensions to the popular PyTorch machine learning framework, TorchDynamo and TorchInductor, which implement the torch.compile feature released in PyTorch 2. TorchDynamo is a Python-level just-in-time (JIT) compiler that enables graph compilation in PyTorch programs without sacrificing the flexibility of Python. It achieves this by dynamically modifying Python bytecode before execution and extracting sequences of PyTorch operations into an FX graph, which is then JIT compiled using one of many extensible backends. TorchInductor is the default compiler backend for TorchDynamo, which translates PyTorch programs into OpenAI's Triton for GPUs and C++ for CPUs. Results show that TorchDynamo is able to capture graphs more robustly than prior approaches while adding minimal overhead, and TorchInductor is able to provide a 2.27\texttimes{} inference and 1.41\texttimes{} training geometric mean speedup on an NVIDIA A100 GPU across 180+ real-world models, which outperforms six other compilers. These extensions provide a new way to apply optimizations through compilers in eager mode frameworks like PyTorch.},
booktitle = {Proceedings of the 29th ACM International Conference on Architectural Support for Programming Languages and Operating Systems, Volume 2},
pages = {929–947},
numpages = {19},
location = {La Jolla, CA, USA},
series = {ASPLOS '24}
}

@software{jax,
  author = {James Bradbury and Roy Frostig and Peter Hawkins and Matthew James Johnson and Chris Leary and Dougal Maclaurin and George Necula and Adam Paszke and Jake Vander{P}las and Skye Wanderman-{M}ilne and Qiao Zhang},
  title = {{JAX}: composable transformations of {P}ython+{N}um{P}y programs},
  url = {http://github.com/google/jax},
  version = {0.3.13},
  year = {2018},
}

@inproceedings{taso,
author = {Jia, Zhihao and Padon, Oded and Thomas, James and Warszawski, Todd and Zaharia, Matei and Aiken, Alex},
title = {TASO: optimizing deep learning computation with automatic generation of graph substitutions},
year = {2019},
isbn = {9781450368735},
publisher = {Association for Computing Machinery},
address = {New York, NY, USA},
url = {https://doi.org/10.1145/3341301.3359630},
doi = {10.1145/3341301.3359630},
abstract = {Existing deep neural network (DNN) frameworks optimize the computation graph of a DNN by applying graph transformations manually designed by human experts. This approach misses possible graph optimizations and is difficult to scale, as new DNN operators are introduced on a regular basis.We propose TASO, the first DNN computation graph optimizer that automatically generates graph substitutions. TASO takes as input a list of operator specifications and generates candidate substitutions using the given operators as basic building blocks. All generated substitutions are formally verified against the operator specifications using an automated theorem prover. To optimize a given DNN computation graph, TASO performs a cost-based backtracking search, applying the substitutions to find an optimized graph, which can be directly used by existing DNN frameworks.Our evaluation on five real-world DNN architectures shows that TASO outperforms existing DNN frameworks by up to 2.8X, while requiring significantly less human effort. For example, TensorFlow currently contains approximately 53,000 lines of manual optimization rules, while the operator specifications needed by TASO are only 1,400 lines of code.},
booktitle = {Proceedings of the 27th ACM Symposium on Operating Systems Principles},
pages = {47–62},
numpages = {16},
keywords = {superoptimization, formal verification, deep neural network, computation graph substitutions},
location = {Huntsville, Ontario, Canada},
series = {SOSP '19}
}

@inproceedings {pet,
author = {Haojie Wang and Jidong Zhai and Mingyu Gao and Zixuan Ma and Shizhi Tang and Liyan Zheng and Yuanzhi Li and Kaiyuan Rong and Yuanyong Chen and Zhihao Jia},
title = {{PET}: Optimizing Tensor Programs with Partially Equivalent Transformations and Automated Corrections},
booktitle = {15th USENIX Symposium on Operating Systems Design and Implementation (OSDI 21)},
year = {2021},
isbn = {978-1-939133-22-9},
pages = {37--54},
url = {https://www.usenix.org/conference/osdi21/presentation/wang},
publisher = {USENIX Association},
month = jul
}

@inproceedings{tensat,
	author = {Yang, Yichen and Phothilimthana, Phitchaya and Wang, Yisu and Willsey, Max and Roy, Sudip and Pienaar, Jacques},
	booktitle = {Proceedings of Machine Learning and Systems},
	editor = {A. Smola and A. Dimakis and I. Stoica},
	pages = {255--268},
	title = {Equality Saturation for Tensor Graph Superoptimization},
	url = {https://proceedings.mlsys.org/paper_files/paper/2021/file/cc427d934a7f6c0663e5923f49eba531-Paper.pdf},
	volume = {3},
	year = {2021},
	bdsk-url-1 = {https://proceedings.mlsys.org/paper_files/paper/2021/file/cc427d934a7f6c0663e5923f49eba531-Paper.pdf}
}

@InProceedings{atl-pldi,
author = {Amanda Liu and Gilbert Bernstein and Adam Chlipala and Jonathan Ragan-Kelley},
title = {A Verified Compiler for a Functional Tensor Language},
booktitle = {PLDI'24: Proceedings of the 45th ACM SIGPLAN Conference on Programming Language Design and Implementation},
month = jun,
year = {2024},
location = {Copenhagen, Denmark},
url = {http://adam.chlipala.net/papers/AtlPLDI24/}
}

@InProceedings{atl-popl,
author = {Amanda Liu and Gilbert Bernstein and Adam Chlipala and Jonathan Ragan-Kelley},
title = {Verified Tensor-Program Optimization Via High-level Scheduling Rewrites},
booktitle = {POPL'22: Proceedings of the 49th ACM SIGPLAN-SIGACT Symposium on Principles of Programming Languages},
month = jan,
year = {2022},
location = {Philadelphia, PA, USA},
url = {http://adam.chlipala.net/papers/AtlPOPL22/}
}

@misc{minotaur,
      title={Minotaur: A SIMD-Oriented Synthesizing Superoptimizer}, 
      author={Zhengyang Liu and Stefan Mada and John Regehr},
      year={2023},
      eprint={2306.00229},
      archivePrefix={arXiv},
      primaryClass={cs.PL},
      url={https://arxiv.org/abs/2306.00229}, 
}

@inproceedings{glenside,
author = {Smith, Gus Henry and Liu, Andrew and Lyubomirsky, Steven and Davidson, Scott and McMahan, Joseph and Taylor, Michael and Ceze, Luis and Tatlock, Zachary},
title = {Pure tensor program rewriting via access patterns (representation pearl)},
year = {2021},
isbn = {9781450384674},
publisher = {Association for Computing Machinery},
address = {New York, NY, USA},
url = {https://doi.org/10.1145/3460945.3464953},
doi = {10.1145/3460945.3464953},
abstract = {Tensor kernels in machine learning (ML) often correspond to pure mathematical expressions, making term rewriting an attractive strategy for optimization and mapping to specialized hardware accelerators. However, existing ML intermediate representations (IRs) tend to either be pure but high-level, making low-level rewrites to hardware targets inexpressible, or low-level but impure, hampering the use of term rewriting altogether. This paper introduces Glenside, a pure IR whose core abstraction—the access pattern—enables low-level, layout-aware, hardware-centric program rewrites. We demonstrate how term rewriting in Glenside can be used to map program fragments to hardware accelerator invocations and automatically discover classic data layout transformations like im2col. Glenside establishes a new foundation for exploring further term rewriting techniques in optimizing low-level tensor programs.},
booktitle = {Proceedings of the 5th ACM SIGPLAN International Symposium on Machine Programming},
pages = {21–31},
numpages = {11},
keywords = {machine learning compilers, term rewriting},
location = {Virtual, Canada},
series = {MAPS 2021}
}

@article{stoke,
author = {Schkufza, Eric and Sharma, Rahul and Aiken, Alex},
title = {Stochastic superoptimization},
year = {2013},
issue_date = {April 2013},
publisher = {Association for Computing Machinery},
address = {New York, NY, USA},
volume = {48},
number = {4},
issn = {0362-1340},
url = {https://doi.org/10.1145/2499368.2451150},
doi = {10.1145/2499368.2451150},
abstract = {We formulate the loop-free binary superoptimization task as a stochastic search problem. The competing constraints of transformation correctness and performance improvement are encoded as terms in a cost function, and a Markov Chain Monte Carlo sampler is used to rapidly explore the space of all possible programs to find one that is an optimization of a given target program. Although our method sacrifices completeness, the scope of programs we are able to consider, and the resulting quality of the programs that we produce, far exceed those of existing superoptimizers. Beginning from binaries compiled by llvm -O0 for 64-bit x86, our prototype implementation, STOKE, is able to produce programs which either match or outperform the code produced by gcc -O3, icc -O3, and in some cases, expert handwritten assembly.},
journal = {SIGPLAN Not.},
month = {mar},
pages = {305–316},
numpages = {12},
keywords = {x86-64, x86, superoptimization, stochastic search, smt, mcmc, markov chain monte carlo, binary, 64-bit}
}

@inproceedings{x86-sem1,
author = {Heule, Stefan and Schkufza, Eric and Sharma, Rahul and Aiken, Alex},
title = {Stratified synthesis: automatically learning the x86-64 instruction set},
year = {2016},
isbn = {9781450342612},
publisher = {Association for Computing Machinery},
address = {New York, NY, USA},
url = {https://doi.org/10.1145/2908080.2908121},
doi = {10.1145/2908080.2908121},
abstract = {The x86-64 ISA sits at the bottom of the software stack of most desktop and server software. Because of its importance, many software analysis and verification tools depend, either explicitly or implicitly, on correct modeling of the semantics of x86-64 instructions. However, formal semantics for the x86-64 ISA are difficult to obtain and often written manually through great effort. We describe an automatically synthesized formal semantics of the input/output behavior for a large fraction of the x86-64 Haswell ISA’s many thousands of instruction variants. The key to our results is stratified synthesis, where we use a set of instructions whose semantics are known to synthesize the semantics of additional instructions whose semantics are unknown. As the set of formally described instructions increases, the synthesis vocabulary expands, making it possible to synthesize the semantics of increasingly complex instructions. Using this technique we automatically synthesized formal semantics for 1,795 instruction variants of the x86-64 Haswell ISA. We evaluate the learned semantics against manually written semantics (where available) and find that they are formally equivalent with the exception of 50 instructions, where the manually written semantics contain an error. We further find the learned formulas to be largely as precise as manually written ones and of similar size.},
booktitle = {Proceedings of the 37th ACM SIGPLAN Conference on Programming Language Design and Implementation},
pages = {237–250},
numpages = {14},
keywords = {ISA specification, program synthesis, x86-64},
location = {Santa Barbara, CA, USA},
series = {PLDI '16}
}

@inproceedings{x86-sem2,
author = {Dasgupta, Sandeep and Park, Daejun and Kasampalis, Theodoros and Adve, Vikram S. and Ro\c{s}u, Grigore},
title = {A complete formal semantics of x86-64 user-level instruction set architecture},
year = {2019},
isbn = {9781450367127},
publisher = {Association for Computing Machinery},
address = {New York, NY, USA},
url = {https://doi.org/10.1145/3314221.3314601},
doi = {10.1145/3314221.3314601},
abstract = {We present the most complete and thoroughly tested formal semantics of x86-64 to date. Our semantics faithfully formalizes all the non-deprecated, sequential user-level instructions of the x86-64 Haswell instruction set architecture. This totals 3155 instruction variants, corresponding to 774 mnemonics. The semantics is fully executable and has been tested against more than 7,000 instruction-level test cases and the GCC torture test suite. This extensive testing paid off, revealing bugs in both the x86-64 reference manual and other existing semantics. We also illustrate potential applications of our semantics in different formal analyses, and discuss how it can be useful for processor verification.},
booktitle = {Proceedings of the 40th ACM SIGPLAN Conference on Programming Language Design and Implementation},
pages = {1133–1148},
numpages = {16},
keywords = {Formal Semantics, ISA specification, x86-64},
location = {Phoenix, AZ, USA},
series = {PLDI 2019}
}

@article{verify-c,
author = {Wang, Yuting and Zhang, Ling and Shao, Zhong and Koenig, J\'{e}r\'{e}mie},
title = {Verified Compilation of C Programs with a Nominal Memory Model},
year = {2022},
issue_date = {January 2022},
publisher = {Association for Computing Machinery},
address = {New York, NY, USA},
volume = {6},
number = {POPL},
url = {https://doi.org/10.1145/3498686},
doi = {10.1145/3498686},
journal = {Proc. ACM Program. Lang.},
month = {jan},
articleno = {25},
numpages = {31},
keywords = {Nominal Techniques, Memory Models, Verified Compilation}
}

@inproceedings{verify-c2,
author = {Koenig, J\'{e}r\'{e}mie and Shao, Zhong},
title = {CompCertO: Compiling Certified Open C Components},
year = {2021},
isbn = {9781450383912},
publisher = {Association for Computing Machinery},
address = {New York, NY, USA},
url = {https://doi.org/10.1145/3453483.3454097},
doi = {10.1145/3453483.3454097},
booktitle = {Proceedings of the 42nd ACM SIGPLAN International Conference on Programming Language Design and Implementation},
pages = {1095–1109},
numpages = {15},
keywords = {Compositional Compiler Correctness, Language Interface, Game Semantics, Simulation Convention},
location = {Virtual, Canada},
series = {PLDI 2021}
}

@inproceedings{CompCert-ERTS-2018,
  author = {Daniel Kästner and
            Ulrich Wünsche and
            Jörg Barrho and 
            Marc Schlickling and
            Bernhard Schommer and
            Michael Schmidt and
            Christian Ferdinand and
            Xavier Leroy and
            Sandrine Blazy},
  title = {{CompCert}: Practical experience on integrating and qualifying
           a formally verified optimizing compiler},
  booktitle = {ERTS 2018: Embedded Real Time Software and Systems},
  publisher = {SEE},
  year = 2018,
  month = jan,
  hal = {https://hal.inria.fr/hal-01643290},
  url = {http://xavierleroy.org/publi/erts2018_compcert.pdf},
  xtopic = {compcert}
}

@article{verify-c3,
author = {Wang, Yuting and Wilke, Pierre and Shao, Zhong},
title = {An Abstract Stack Based Approach to Verified Compositional Compilation to Machine Code},
year = {2019},
issue_date = {January 2019},
publisher = {Association for Computing Machinery},
address = {New York, NY, USA},
volume = {3},
number = {POPL},
url = {https://doi.org/10.1145/3290375},
doi = {10.1145/3290375},
journal = {Proc. ACM Program. Lang.},
month = {jan},
articleno = {62},
numpages = {30},
keywords = {certified compilers, machine code generation, compositional compilation, abstract stack, memory model}
}

@inproceedings{llvm-verify,
author = {Kasampalis, Theodoros and Park, Daejun and Lin, Zhengyao and Adve, Vikram S. and Ro\c{s}u, Grigore},
title = {Language-Parametric Compiler Validation with Application to LLVM},
year = {2021},
isbn = {9781450383172},
publisher = {Association for Computing Machinery},
address = {New York, NY, USA},
url = {https://doi.org/10.1145/3445814.3446751},
doi = {10.1145/3445814.3446751},
booktitle = {Proceedings of the 26th ACM International Conference on Architectural Support for Programming Languages and Operating Systems},
pages = {1004–1019},
numpages = {16},
keywords = {Simulation, Compilers, Translation Validation, Program Equivalence},
location = {Virtual, USA},
series = {ASPLOS '21}
}

@inproceedings{llvm-verify2,
author = {Zhao, Jianzhou and Nagarakatte, Santosh and Martin, Milo M.K. and Zdancewic, Steve},
title = {Formal Verification of SSA-Based Optimizations for LLVM},
year = {2013},
isbn = {9781450320146},
publisher = {Association for Computing Machinery},
address = {New York, NY, USA},
url = {https://doi.org/10.1145/2491956.2462164},
doi = {10.1145/2491956.2462164},
booktitle = {Proceedings of the 34th ACM SIGPLAN Conference on Programming Language Design and Implementation},
pages = {175–186},
numpages = {12},
keywords = {coq, single static assignment, llvm},
location = {Seattle, Washington, USA},
series = {PLDI '13}
}

@inproceedings{verify-c4,
	address = {Berlin, Heidelberg},
	author = {Demange, Delphine and Pichardie, David and Stefanesco, L{\'e}o},
	booktitle = {Compiler Construction},
	editor = {Franke, Bj{\"o}rn},
	isbn = {978-3-662-46663-6},
	pages = {233--252},
	publisher = {Springer Berlin Heidelberg},
	title = {Verifying Fast and Sparse SSA-Based Optimizations in Coq},
	year = {2015}}

@inproceedings{llvm-verify3,
author = {Zhao, Jianzhou and Nagarakatte, Santosh and Martin, Milo M.K. and Zdancewic, Steve},
title = {Formalizing the LLVM Intermediate Representation for Verified Program Transformations},
year = {2012},
isbn = {9781450310833},
publisher = {Association for Computing Machinery},
address = {New York, NY, USA},
url = {https://doi.org/10.1145/2103656.2103709},
doi = {10.1145/2103656.2103709},
booktitle = {Proceedings of the 39th Annual ACM SIGPLAN-SIGACT Symposium on Principles of Programming Languages},
pages = {427–440},
numpages = {14},
keywords = {memory safety, Coq, LLVM},
location = {Philadelphia, PA, USA},
series = {POPL '12}
}

@inproceedings{alive,
author = {Lopes, Nuno P. and Menendez, David and Nagarakatte, Santosh and Regehr, John},
title = {Provably correct peephole optimizations with alive},
year = {2015},
isbn = {9781450334686},
publisher = {Association for Computing Machinery},
address = {New York, NY, USA},
url = {https://doi.org/10.1145/2737924.2737965},
doi = {10.1145/2737924.2737965},
abstract = {Compilers should not miscompile. Our work addresses problems in developing peephole optimizations that perform local rewriting to improve the efficiency of LLVM code. These optimizations are individually difficult to get right, particularly in the presence of undefined behavior; taken together they represent a persistent source of bugs. This paper presents Alive, a domain-specific language for writing optimizations and for automatically either proving them correct or else generating counterexamples. Furthermore, Alive can be automatically translated into C++ code that is suitable for inclusion in an LLVM optimization pass. Alive is based on an attempt to balance usability and formal methods; for example, it captures---but largely hides---the detailed semantics of three different kinds of undefined behavior in LLVM. We have translated more than 300 LLVM optimizations into Alive and, in the process, found that eight of them were wrong.},
booktitle = {Proceedings of the 36th ACM SIGPLAN Conference on Programming Language Design and Implementation},
pages = {22–32},
numpages = {11},
keywords = {Alive, Compiler Verification, Peephole Optimization},
location = {Portland, OR, USA},
series = {PLDI '15}
}

@misc{souper,
      title={Souper: A Synthesizing Superoptimizer}, 
      author={Raimondas Sasnauskas and Yang Chen and Peter Collingbourne and Jeroen Ketema and Gratian Lup and Jubi Taneja and John Regehr},
      year={2018},
      eprint={1711.04422},
      archivePrefix={arXiv},
      primaryClass={cs.PL},
      url={https://arxiv.org/abs/1711.04422}, 
}

@article{bansal-superopt,
author = {Bansal, Sorav and Aiken, Alex},
title = {Automatic generation of peephole superoptimizers},
year = {2006},
issue_date = {December 2006},
publisher = {Association for Computing Machinery},
address = {New York, NY, USA},
volume = {40},
number = {5},
issn = {0163-5980},
url = {https://doi.org/10.1145/1168917.1168906},
doi = {10.1145/1168917.1168906},
abstract = {Peephole optimizers are typically constructed using human-written pattern matching rules, an approach that requires expertise and time, as well as being less than systematic at exploiting all opportunities for optimization. We explore fully automatic construction of peephole optimizers using brute force superoptimization. While the optimizations discovered by our automatic system may be less general than human-written counterparts, our approach has the potential to automatically learn a database of thousands to millions of optimizations, in contrast to the hundreds found in current peephole optimizers. We show experimentally that our optimizer is able to exploit performance opportunities not found by existing compilers; in particular, we show speedups from 1.7 to a factor of 10 on some compute intensive kernels over a conventional optimizing compiler.},
journal = {SIGOPS Oper. Syst. Rev.},
month = {oct},
pages = {394–403},
numpages = {10},
keywords = {code selection, peephole optimization, superoptimization}
}

@inproceedings{halide,
author = {Ragan-Kelley, Jonathan and Barnes, Connelly and Adams, Andrew and Paris, Sylvain and Durand, Fr\'{e}do and Amarasinghe, Saman},
title = {Halide: a language and compiler for optimizing parallelism, locality, and recomputation in image processing pipelines},
year = {2013},
isbn = {9781450320146},
publisher = {Association for Computing Machinery},
address = {New York, NY, USA},
url = {https://doi.org/10.1145/2491956.2462176},
doi = {10.1145/2491956.2462176},
abstract = {Image processing pipelines combine the challenges of stencil computations and stream programs. They are composed of large graphs of different stencil stages, as well as complex reductions, and stages with global or data-dependent access patterns. Because of their complex structure, the performance difference between a naive implementation of a pipeline and an optimized one is often an order of magnitude. Efficient implementations require optimization of both parallelism and locality, but due to the nature of stencils, there is a fundamental tension between parallelism, locality, and introducing redundant recomputation of shared values.We present a systematic model of the tradeoff space fundamental to stencil pipelines, a schedule representation which describes concrete points in this space for each stage in an image processing pipeline, and an optimizing compiler for the Halide image processing language that synthesizes high performance implementations from a Halide algorithm and a schedule. Combining this compiler with stochastic search over the space of schedules enables terse, composable programs to achieve state-of-the-art performance on a wide range of real image processing pipelines, and across different hardware architectures, including multicores with SIMD, and heterogeneous CPU+GPU execution. From simple Halide programs written in a few hours, we demonstrate performance up to 5x faster than hand-tuned C, intrinsics, and CUDA implementations optimized by experts over weeks or months, for image processing applications beyond the reach of past automatic compilers.},
booktitle = {Proceedings of the 34th ACM SIGPLAN Conference on Programming Language Design and Implementation},
pages = {519–530},
numpages = {12},
keywords = {autotuning, compiler, domain specific language, gpu, image processing, locality, optimization, parallelism, redundant computation, vectorization},
location = {Seattle, Washington, USA},
series = {PLDI '13}
}

@InProceedings{sheeran2000checking,
author="Sheeran, Mary
and Singh, Satnam
and St{\aa}lmarck, Gunnar",
editor="Hunt, Warren A.
and Johnson, Steven D.",
title="Checking Safety Properties Using Induction and a SAT-Solver",
booktitle="Formal Methods in Computer-Aided Design",
year="2000",
publisher="Springer Berlin Heidelberg",
address="Berlin, Heidelberg",
pages="127--144",
abstract="We take a fresh look at the problem of how to check safety properties of finite state machines. We are particularly interested in checking safety properties with the help of a SAT-solver. We describe some novel induction-based methods, and show how they are related to more standard fixpoint algorithms for invariance checking. We also present preliminary experimental results in the verification of FPGA cores. This demonstrates the practicality of combining a SAT-solver with induction for safety property checking of hardware in a real design flow.",
isbn="978-3-540-40922-9"
}

@article{grisette,
author = {Lu, Sirui and Bod\'{\i}k, Rastislav},
title = {Grisette: Symbolic Compilation as a Functional Programming Library},
year = {2023},
issue_date = {January 2023},
publisher = {Association for Computing Machinery},
address = {New York, NY, USA},
volume = {7},
number = {POPL},
url = {https://doi.org/10.1145/3571209},
doi = {10.1145/3571209},
abstract = {The development of constraint solvers simplified automated reasoning about programs and shifted the engineering burden to implementing symbolic compilation tools that translate programs into efficiently solvable constraints. We describe Grisette, a reusable symbolic evaluation framework for implementing domain-specific symbolic compilers. Grisette evaluates all execution paths and merges their states into a normal form that avoids making guards mutually exclusive. This ordered-guards representation reduces the constraint size 5-fold and the solving time more than 2-fold. Grisette is designed entirely as a library, which sidesteps the complications of lifting the host language into the symbolic domain. Grisette is purely functional, enabling memoization of symbolic compilation as well as monadic integration with host libraries. Grisette is statically typed, which allows catching programming errors at compile time rather than delaying their detection to the constraint solver. We implemented Grisette in Haskell and evaluated it on benchmarks that stress both the symbolic evaluation and constraint solving.},
journal = {Proc. ACM Program. Lang.},
month = {jan},
articleno = {16},
numpages = {33},
keywords = {State Merging, Symbolic Compilation}
}

@inproceedings{z3,
author = {De Moura, Leonardo and Bj\o{}rner, Nikolaj},
title = {Z3: an efficient SMT solver},
year = {2008},
isbn = {3540787992},
publisher = {Springer-Verlag},
address = {Berlin, Heidelberg},
abstract = {Satisfiability Modulo Theories (SMT) problem is a decision problem for logical first order formulas with respect to combinations of background theories such as: arithmetic, bit-vectors, arrays, and uninterpreted functions. Z3 is a new and efficient SMT Solver freely available from Microsoft Research. It is used in various software verification and analysis applications.},
booktitle = {Proceedings of the Theory and Practice of Software, 14th International Conference on Tools and Algorithms for the Construction and Analysis of Systems},
pages = {337–340},
numpages = {4},
location = {Budapest, Hungary},
series = {TACAS'08/ETAPS'08}
}

@InProceedings{cvc5,
author="Barbosa, Haniel
and Barrett, Clark
and Brain, Martin
and Kremer, Gereon
and Lachnitt, Hanna
and Mann, Makai
and Mohamed, Abdalrhman
and Mohamed, Mudathir
and Niemetz, Aina
and N{\"o}tzli, Andres
and Ozdemir, Alex
and Preiner, Mathias
and Reynolds, Andrew
and Sheng, Ying
and Tinelli, Cesare
and Zohar, Yoni",
editor="Fisman, Dana
and Rosu, Grigore",
title="cvc5: A Versatile and Industrial-Strength SMT Solver",
booktitle="Tools and Algorithms for the Construction and Analysis of Systems",
year="2022",
publisher="Springer International Publishing",
address="Cham",
pages="415--442",
abstract="cvc5 is the latest SMT solver in the cooperating validity checker series and builds on the successful code base of CVC4. This paper serves as a comprehensive system description of cvc5 's architectural design and highlights the major features and components introduced since CVC4  1.8. We evaluate cvc5 's performance on all benchmarks in SMT-LIB and provide a comparison against CVC4 and Z3.",
isbn="978-3-030-99524-9"
}

@online{jaxnamedaxes,
   title = {Named axes and easy-to-revise parallelism with xmap},
   url = {https://web.archive.org/web/20240320140219/https://jax.readthedocs.io/en/latest/notebooks/xmap_tutorial.html},
   year = {2024},
   urldate = {2024-03-20},
   note = {archived 2024-03-20},
   author = {The JAX Authors}
}

@inproceedings{relational-algebra-coq,
author = {Benzaken, V\'{e}ronique and Contejean, \'{E}velyne},
title = {A Coq mechanised formal semantics for realistic SQL queries: formally reconciling SQL and bag relational algebra},
year = {2019},
isbn = {9781450362221},
publisher = {Association for Computing Machinery},
address = {New York, NY, USA},
url = {https://doi.org/10.1145/3293880.3294107},
doi = {10.1145/3293880.3294107},
abstract = {In this article, we provide a Coq mechanised, executable, formal semantics for a realistic fragment of SQL consisting of "select [distinct] from where group by having" queries with null values, functions, aggregates, quantifiers and nested potentially correlated sub-queries. Relying on the Coq extraction mechanism to Ocaml, we further produce a Coq certified semantic analyser for a SQL compiler. We then relate this fragment to a Coq formalised (extended) relational algebra that enjoys a bag semantics hence recovering all well-known algebraic equivalences upon which are based most of compilation optimisations. By doing so, we provide the first formally mechanised proof of the equivalence of SQL and extended relational algebra.},
booktitle = {Proceedings of the 8th ACM SIGPLAN International Conference on Certified Programs and Proofs},
pages = {249–261},
numpages = {13},
keywords = {Coq, Formal Semantics, SQL},
location = {Cascais, Portugal},
series = {CPP 2019}
}

@online{pytorchnamedtensors,
    title = {Named Tensors},
    url = {https://web.archive.org/web/20240703124627/https://pytorch.org/docs/stable/named_tensor.html},
    year = {2024},
    urldate = {2024-07-03},
    note = {archived 2024-07-03},
    author = {PyTorch Contributors}
}

@online{xla-hlo,
    title = {XLA-HLO Operation Semantics},
    url = {https://openxla.org/xla/operation_semantics},
    year = {2024},
    author = {OpenXLA Contributors}
}

@online{onnx,
    title = {ONNX},
    url = {https://onnx.ai/onnx/operators/},
    year = {2024},
    author = {ONNX Contributors}
}

@inproceedings{torch-fx,
	author = {Reed, James and DeVito, Zachary and He, Horace and Ussery, Ansley and Ansel, Jason},
	booktitle = {Proceedings of Machine Learning and Systems},
	editor = {D. Marculescu and Y. Chi and C. Wu},
	pages = {638--651},
	title = {torch.fx: Practical Program Capture and Transformation for Deep Learning in Python},
	url = {https://proceedings.mlsys.org/paper_files/paper/2022/file/7c98f9c7ab2df90911da23f9ce72ed6e-Paper.pdf},
	volume = {4},
	year = {2022},
	bdsk-url-1 = {https://proceedings.mlsys.org/paper_files/paper/2022/file/7c98f9c7ab2df90911da23f9ce72ed6e-Paper.pdf}}

@inproceedings{fiat,
author = {Delaware, Benjamin and Pit-Claudel, Cl\'{e}ment and Gross, Jason and Chlipala, Adam},
title = {Fiat: Deductive Synthesis of Abstract Data Types in a Proof Assistant},
year = {2015},
isbn = {9781450333009},
publisher = {Association for Computing Machinery},
address = {New York, NY, USA},
url = {https://doi.org/10.1145/2676726.2677006},
doi = {10.1145/2676726.2677006},
abstract = {We present Fiat, a library for the Coq proof assistant supporting refinement of declarative specifications into efficient functional programs with a high degree of automation. Each refinement process leaves a proof trail, checkable by the normal Coq kernel, justifying its soundness. We focus on the synthesis of abstract data types that package methods with private data. We demonstrate the utility of our framework by applying it to the synthesis of query structures -- abstract data types with SQL-like query and insert operations. Fiat includes a library for writing specifications of query structures in SQL-inspired notation, expressing operations over relations (tables) in terms of mathematical sets. This library includes a suite of tactics for automating the refinement of specifications into efficient, correct-by-construction OCaml code. Using these tactics, a programmer can generate such an implementation completely automatically by only specifying the equivalent of SQL indexes, data structures capturing useful views of the abstract data. Throughout we speculate on the new programming modularity possibilities enabled by an automated refinement system with proved-correct rules.},
booktitle = {Proceedings of the 42nd Annual ACM SIGPLAN-SIGACT Symposium on Principles of Programming Languages},
pages = {689–700},
numpages = {12},
keywords = {mechanized derivation of abstract data types, deductive synthesis},
location = {Mumbai, India},
series = {POPL '15}
}

@article{halide-rewriter,
author = {Newcomb, Julie L. and Adams, Andrew and Johnson, Steven and Bodik, Rastislav and Kamil, Shoaib},
title = {Verifying and improving Halide’s term rewriting system with program synthesis},
year = {2020},
issue_date = {November 2020},
publisher = {Association for Computing Machinery},
address = {New York, NY, USA},
volume = {4},
number = {OOPSLA},
url = {https://doi.org/10.1145/3428234},
doi = {10.1145/3428234},
abstract = {Halide is a domain-specific language for high-performance image processing and tensor computations, widely adopted in industry. Internally, the Halide compiler relies on a term rewriting system to prove properties of code required for efficient and correct compilation. This rewrite system is a collection of handwritten transformation rules that incrementally rewrite expressions into simpler forms; the system requires high performance in both time and memory usage to keep compile times low, while operating over the undecidable theory of integers. In this work, we apply formal techniques to prove the correctness of existing rewrite rules and provide a guarantee of termination. Then, we build an automatic program synthesis system in order to craft new, provably correct rules from failure cases where the compiler was unable to prove properties. We identify and fix 4 incorrect rules as well as 8 rules which could give rise to infinite rewriting loops. We demonstrate that the synthesizer can produce better rules than hand-authored ones in five bug fixes, and describe four cases in which it has served as an assistant to a human compiler engineer. We further show that it can proactively improve weaknesses in the compiler by synthesizing a large number of rules without human supervision and showing that the enhanced ruleset lowers peak memory usage of compiled code without appreciably increasing compilation times.},
journal = {Proc. ACM Program. Lang.},
month = {nov},
articleno = {166},
numpages = {28},
keywords = {verification, term rewriting system, synthesis}
}

@InProceedings{comp-construction,
  author =	{Gross, Jason and Erbsen, Andres and Philipoom, Jade and Poddar-Agrawal, Miraya and Chlipala, Adam},
  title =	{{Accelerating Verified-Compiler Development with a Verified Rewriting Engine}},
  booktitle =	{13th International Conference on Interactive Theorem Proving (ITP 2022)},
  pages =	{17:1--17:18},
  series =	{Leibniz International Proceedings in Informatics (LIPIcs)},
  ISBN =	{978-3-95977-252-5},
  ISSN =	{1868-8969},
  year =	{2022},
  volume =	{237},
  editor =	{Andronick, June and de Moura, Leonardo},
  publisher =	{Schloss Dagstuhl -- Leibniz-Zentrum f{\"u}r Informatik},
  address =	{Dagstuhl, Germany},
  URL =		{https://drops.dagstuhl.de/entities/document/10.4230/LIPIcs.ITP.2022.17},
  URN =		{urn:nbn:de:0030-drops-167262},
  doi =		{10.4230/LIPIcs.ITP.2022.17},
  annote =	{Keywords: compiler verification, rewriting engines, cryptography}
}

@inproceedings{cakeml,
author = {Kumar, Ramana and Myreen, Magnus O. and Norrish, Michael and Owens, Scott},
title = {CakeML: a verified implementation of ML},
year = {2014},
isbn = {9781450325448},
publisher = {Association for Computing Machinery},
address = {New York, NY, USA},
url = {https://doi.org/10.1145/2535838.2535841},
doi = {10.1145/2535838.2535841},
abstract = {We have developed and mechanically verified an ML system called CakeML, which supports a substantial subset of Standard ML. CakeML is implemented as an interactive read-eval-print loop (REPL) in x86-64 machine code. Our correctness theorem ensures that this REPL implementation prints only those results permitted by the semantics of CakeML. Our verification effort touches on a breadth of topics including lexing, parsing, type checking, incremental and dynamic compilation, garbage collection, arbitrary-precision arithmetic, and compiler bootstrapping.Our contributions are twofold. The first is simply in building a system that is end-to-end verified, demonstrating that each piece of such a verification effort can in practice be composed with the others, and ensuring that none of the pieces rely on any over-simplifying assumptions. The second is developing novel approaches to some of the more challenging aspects of the verification. In particular, our formally verified compiler can bootstrap itself: we apply the verified compiler to itself to produce a verified machine-code implementation of the compiler. Additionally, our compiler proof handles diverging input programs with a lightweight approach based on logical timeout exceptions. The entire development was carried out in the HOL4 theorem prover.},
booktitle = {Proceedings of the 41st ACM SIGPLAN-SIGACT Symposium on Principles of Programming Languages},
pages = {179–191},
numpages = {13},
keywords = {verified type checking, verified parsing, verified garbage collection., read-eval-print loop, machine code verification, compiler verification, compiler bootstrapping, ML},
location = {San Diego, California, USA},
series = {POPL '14}
}

@online{jaxlax,
    title = {jax.lax module},
    url = {https://jax.readthedocs.io/en/latest/jax.lax.html},
    year = {2024},
    author = {JAX Contributors}
}

@online{xla,
    title = {OpenXLA Project},
    url = {https://web.archive.org/web/20241009145043/https://openxla.org/xla},
    year = {2024},
    author = {OpenXLA Contributors}
}

@software{tr-artifact,
  author       = {Arora, Jai and
                  Lu, Sirui and
                  Jain, Devansh and
                  Xu, Tianfan and
                  Houshmand, Farzin and
                  Phothilimthana, Phitchaya Mangpo and
                  Lesani, Mohsen and
                  Narayanan, Praveen and
                  Murthy, Karthik Srinivasa and
                  Bodik, Rastislav and
                  Sabne, Amit and
                  Mendis, Charith},
  title        = {{Artifact for "TensorRight: Automated Verification 
                   of Tensor Graph Rewrites"}},
  month        = nov,
  year         = 2024,
  publisher    = {Zenodo},
  version      = {v1},
  doi          = {10.5281/zenodo.14159871},
  url          = {https://doi.org/10.5281/zenodo.14159871}
}

\clearpage
\appendix

\section{Extended Operator Semantics} \seclabel{ext-sem}

We provide the denotational semantics for the following additional operators:

\begin{itemize}
    \item \semref{Pad}: The $\pad$ operator performs low ($S_l$), interior ($S_i$), and high ($S_h$) padding on the input tensor. 
    \begin{equation*}
        \inference{
            \mathsf{let} ~ S = \shape(e) & S_i \geq 0 \\
            \mathsf{let} ~ I = S + (S - 1) \times S_i & \mathsf{let} ~ L = I + S_l & \mathsf{let} ~ S_o = L + S_h \geq 0 \\
            \displaystyle \mathsf{let} ~ A' = \lambda A. ~ \frac{A - S_l}{S_i + 1} & \mathsf{let} ~ not\hbox{-}int\hbox{-}pad = \lambda A. ~ (A - S_l) ~ \% ~ (S_i + 1) = 0 \\
            \displaystyle \mathsf{let} ~ not\hbox{-}pad\hbox{-}area = \lambda A. ~ A \geq S_l \lland A < L \lland not\hbox{-}int\hbox{-}pad(A)
        }
        {
            \begin{gathered}
                \sem{\pad(e, v, S_l, S_h, S_i)} = \\
                \{A \mapsto
                \mathsf{if} ~ not\mbox{-}pad\mbox{-}area(A) ~ \mathsf{then} ~ \sem{e}[A'(A)] ~ \mathsf{else} ~ v ~|~ A \in \mathsf{Access}(S_o)\}
            \end{gathered}
        }
        [\textsc{Pad}]
        \semlabel{Pad}
    \end{equation*}
    
    \item \semref{Dot}: The $\tdot$ operator performs sum-of-products over the specified set of contracting \agg{}-axes ($X_c$).
    $X_b$ denotes the set of batch \agg{}-axes that stay independent across the computation.
    For this operation to be valid, $X_c$ and $X_b$ must be disjoint, and
    $X_c \cup X_b$ is the set of exactly those \agg{}-axes which are common in the input tensors $e_1$ and $e_2$.
    For verification purposes, we describe the semantics of the $\tdot$ operator by first expanding the two tensors to the same shape, followed by an element-wise multiplication, and then reducing on the contracting \agg{}-axes.
    The remaining set of \agg{}-axes in $e_1$, i.e., $S_1\setminus S_1\vert_{X_c\cup X_b}$, and in $e_2$, i.e., $S_2\setminus S_2\vert_{X_c\cup X_b}$, also called the spatial-axes, need to be disjoint.
    This condition is implicit in the $\expand$ operator below.

    \begin{equation*}
        \inference{
          \mathsf{let} ~ S_1 = \shape(e_1) &
          \mathsf{let} ~ S_2 = \shape(e_2) \\
          X_c\cup X_b = \axes(e_1) \cap \axes(e_2) &
          X_c\cap X_b=\varnothing \\
          \mathsf{let} ~ t =
          \sem{\reduce(+,\binary(\times, \expand(e_1,S_2\setminus S_2\vert_{X_c\cup X_b}), \expand(e_2,S_1\setminus S_1\vert_{X_c\cup X_b})), X_c)}
        }
        {
            \sem{\tdot(e_1, e_2, X_c, X_b)} = t
        }
        [\textsc{Dot}]
        \semlabel{Dot}
    \end{equation*}

    \item \semref{Conv}: We first model a basic convolution operator without padding and dilation.
    This can be expressed on top of $\slice$ (for extracting the window) and $\tdot$ (dot product of the window and the kernel) operators.
    Then we express the general convolution operator by using the base operator with the $\pad$ operator.
    In the inference rules, the subscript $i$ relates to the input and the subscript $w$ relates to the weights.
    $X_b$, $X_o$, $X_f$, and $X_{sp}$ are the sets of batch, output-feature, input-feature, and spatial \agg{}-axes, respectively.
    
    \[
    \inference{
        \mathsf{let} ~ S_i = \shape(e_i) &
        \mathsf{let} ~ S_w = \shape(e_2) &
        \mathsf{let} ~ X_i = \axes(e_i) \\
        \mathsf{let} ~ X_w = \axes(e_w) &
        X_b = X_i\setminus X_w &
        X_o = X_w\setminus X_i \\
        F\subseteq X_i\cap X_w &
        \mathsf{let} ~ X_{sp} = X_i \setminus (X_b\cup X_f) &
        \mathsf{dom}(I_p) = X_{sp} \\
        \displaystyle\mathsf{let} ~ S' = S_i \vert_{X_b} \cup S_w \vert_{X_o} \cup \left\lfloor\frac{S_i \vert_{X_{sp}} - S_w \vert_{X_{sp}}}{I_p} + 1\right\rfloor\\
        \mathsf{let}~sub = \lambda A_{sp}. \sem{\mathsf{dot}(\slice(e_i, A_{sp} \times I_p, A_{sp} \times I_p + S_w\vert_{X_{sp}}, 1), e_w, X_f\cup X_{sp}, \varnothing)}
    }
    {
        \sem{\convbase(e_i, e_w, X_b, X_f, X_o, I_p)} =
        \{A\mapsto sub(A\vert_{X_{sp}})[A\vert_{X_b\cup X_o}]\mid A\in\mathsf{Access}(S')\}
    }
    [\textsc{ConvBase}]
    \semlabel{ConvBase}
    \]
    %\jai{Explain the difference between dilation and padding}
    The $\conv$ operator uses the $\convbase$ operator by padding its operands with the specified arguments.
    Note that the inputs $S_i$ and $S_i'$ to the $\conv$ operator denote \emph{dilations}, which are the actual padding amounts incremented by 1.
    Therefore, we appropriately subtract 1 from these attributes before passing them to the $\pad$ operator.
    \[
    \inference{
        \sem{\convbase(\pad(e_1, 0, S_{l}, S_h, S_{i} - 1), \pad(e_2, 0, 0, 0, S'_{i} - 1), X_b, X_f, X_o, I_p)} = t
    }
    {
        \sem{\conv(e_i, e_w, X_b, X_f, X_o, S_{l}, S_{h}, S_i, S'_{i}, I_p)} = t
    }
    [\textsc{Conv}]
    \semlabel{Conv}
    \]
    
    \item \semref{Reverse}: The $\reverse$ operator reverses the order of elements in the input tensor along the specified set of \agg{}-axes $X_r$.
    \[
        \inference{
            \mathsf{let} ~ S = \shape(e) & X_r \subseteq \axes(e) \\
            \mathsf{let} ~ A' =  \lambda A. ~ (A \mysetminus A|_{X_r}) \cup (S|_{X_r} - A|_{X_r} - 1)
        }
        {
            \begin{gathered}
                \sem{\reverse(e, X_r)} =
                \{A \mapsto \sem{e}[A'(A)] ~|~ A \in \mathsf{Access}(S)\}
            \end{gathered}
        }
        [\textsc{Reverse}]
        \semlabel{Reverse}
    \]
    
    \item \semref{Select}: The $\select$ operator constructs an output tensor from elements of two input tensors $e_1$ and $e_2$, based on the values of a predicate tensor $e_b$.
    This operator is only valid if $e_b$ contains elements of type $\mathsf{Bool}$ and $e_1,e_2$ contain elements of the same type.
    \[
        \inference{
            \shape(e_b) = \shape(e_1) = \shape(e_2) \\
            \dtype(e_b) = \mathsf{Bool} & \dtype(e_1) = \dtype(e_2)
        }
        {
            \begin{gathered}
                \sem{\select(e_b, e_1, e_2)} = \\
                \{A \mapsto \mathsf{if} ~ \sem{e_b}[A] ~ \mathsf{then} ~ \sem{e_1}[A] ~ \mathsf{else} ~ \sem{e_2}[A] ~|~ A \in \mathsf{Access}(e)\}
            \end{gathered}
        }
        [\textsc{Select}]
        \semlabel{Select}
    \]
    
    \item \semref{Clamp}: The $\clamp$ operator clamps the input tensor ($e$) to within the range specified by the minimum ($e_{min}$) and maximum ($e_{max}$) tensors.
    \[
        \inference{
            \shape(e_{min}) = \shape(e) = \shape(e_{max}) \\
            \sem{\binary(\mathsf{min}, \binary(\mathsf{max}, e, e_{min}), e_{max})} = t
        }
        {
            \begin{gathered}
                \sem{\clamp(e_{min}, e, e_{max})} = t
            \end{gathered}
        }
        [\textsc{Clamp}]
        \semlabel{Clamp}
    \]
    
    \item \semref{ReduceSI}: The $\reduce$ operator takes a tensor and a set of \agg{}-axes $X$ as inputs and returns a tensor which contains uninterpreted reduction elements.
    The resulting tensor has the shape $S\setminus S\vert_X$, essentially removing all the \agg{}-axes in $X$.
    This results in the following semantics:
    \begin{equation*}
        \inference{
            \mathsf{let}~S=\shape(e) &
            \mathsf{let}~\{x_0\cdots x_k\}=X\subseteq \axes(e)\\
            \mathsf{let}~acc=\lambda A. \mathsf{Red}^\oplus_{I_0,\cdots,I_k}\sem{e}[\{x_0\mapsto I_0, \cdots,x_k\mapsto I_k\}\cup A]
        }
        {
                \sem{\reduce(\oplus, e, X)} =
                \{A\mapsto acc(A)\mid A\in\access(S \setminus S\vert_X)\}
        }
        [\textsc{ReduceOrig}]
        \semlabel{ReduceOrig}
    \end{equation*}
    The indices $I_0, \cdots, I_k$ in the function $acc$ are \emph{symbolic reduction} indices, hence making the reduction element uninterpreted.
    The term $\sem{e}[\{x_0\mapsto I_0, \cdots,x_k\mapsto I_k\}\cup A]$ represents a (multi-)set containing elements obtained by instantiating the indices $I_0, \cdots, I_k$ with concrete-maps.
    As the sizes of reduced axes are unbounded, we cannot compute this set of elements and hence cannot expand the reduction to sum all the elements being reduced.
    Thus, we leave the sum uninterpreted and provide special treatment for such elements during verification.

    Verifying rules with reductions would involve proving equivalence of reduction elements.
    Proving the equivalence of two reduction elements $\mathsf{Red}_Xf(X)$ and $\mathsf{Red}_Yg(Y)$ can \emph{often} be done by showing that the $\mathsf{LHS}$ and $\mathsf{RHS}$ are sums of the same values, i.e., $f(X)$ and $g(Y)$ represent the same (multi-)set.
    One way to prove set equivalence is to establish a bijection between $X$ and $Y$ and show each pair of values in $f(X)$ and $g(Y)$ are equal, regardless of tensor instantiation and operator attributes.
    
    In \project{}, the user provides a relation between $X$ and $Y$ as a hint.
    As we see in \semref{ReduceOrig}, $X$ and $Y$ contain symbolic indices.
    The current semantics suggest that we have to generate fresh symbolic variables for each invocation of $\reduce$.
    However, this precludes the user from defining hints as they cannot refer to these indices.
    %To allow to user to provide a relation between $X$ and $Y$, we need a way to refer to these indices.
    To solve this problem, we provide the symbolic reduction indices as input to the operator semantics.
    One can think of these extra arguments as symbolic \emph{names}, based on which we can define relations between them.
    The modified semantics are shown below.
    \begin{equation*}
        \inference{
            \mathsf{let}~S=\shape(e) &
             \mathsf{let}~\{x_0 \mapsto SI_0, \cdots, x_k \mapsto SI_k\}=SI \\
            \mathsf{let}~X=\mathsf{dom}(I) \subseteq \axes(e)\\
            \mathsf{let}~acc=\lambda A. \mathsf{Red}^\oplus_{SI_0,\cdots,SI_k}\sem{e}[SI \cup A]
        }
        {
                \sem{\reduce(\oplus, e, SI)} =
                \{A\mapsto acc(A)\mid A\in\access(S \setminus S\vert_X)\}
        }
        [\textsc{ReduceSI}]
        \semlabel{ReduceSI}
    \end{equation*}
    The $\reduce$ operator now takes an \agg{}-map $I$ as input, instead of the set of \agg{}-axes to reduce.
    $I$ contains the symbolic reduction indices $SI_0, \cdots, SI_k$, which are used the construct the uninterpreted reduction element.
    For any rewrite rule with reductions, the user can now use these supplied reduction indices from \textsf{LHS} and \textsf{RHS} to define hints, also called \emph{si-relations} (symbolic-index relations).
\end{itemize}
\section{Validity of Rewrite Rules} \seclabel{rewrite-validity}

Given the semantics of \dsl{}, we now define what it means for a rewrite rule to be valid.
\figref{extended-syntax} lists all the tensor operators supported in \dsl{}.
As a part of our verification methodology, we symbolically execute all operator semantics and generate symbolic representations of the \textsf{LHS} and \textsf{RHS} expressions.
The validity condition of a rewrite rule is simply the assertion of equality between the \textsf{LHS} and \textsf{RHS} expressions, under the specified precondition.
We now formalize this notion.

\begin{figure}
    \centering
    \begin{tabular}{llll}
            $\tau$& $\coloneq$ & $\mathsf{Int}\mid\mathsf{Bool}\mid\mathsf{Real}$&Type\\
            $a$&$\in$&$\mathcal{A}$&Named-axes\\
            $x$&$\in$&$\mathcal{X}=\mathcal{P}(\mathcal{A})$&Aggregated-axes\\
            % $d$&$\in$&$\mathcal{D}$&Aggregated dimensions\\
            $f$&$\in$&$\mathsf{list}[\mathsf{Int}]\to \mathsf{Int}$&Map function\\
            $m$&$\coloneq$&$\mathcal{M}\mid \mathsf{fmap}(f, m+)$&Maps\\
            $X$&$\in$&$\mathcal{P}(\mathcal{X})$&Set of \agg{}-axes\\
            $S,I$&$\in$&$m^\mathcal{X}$&Shapes and indices\\
            $R$&$\in$&$\mathcal{X}^\mathcal{X}$&Relabel maps\\
            % $v$&$\coloneq$&$i:\mathsf{Int}$ & Scalar literal \\
            % &$\mid$&$b:\mathsf{Bool}$ \\
            % &$\mid$&$r: \mathsf{Real}$\\
            % $v$&$\coloneq$&$i:\mathsf{Int}$&Scalar literal\\
            % &$\mid$&$b:\mathsf{Bool}\mid r: \mathsf{Real}$\\
            $v$&$\coloneq$&$i:\mathsf{Int}\mid b:\mathsf{Bool}\mid r: \mathsf{Real}$&Scalar literal\\
            $e$&$\coloneq$&$\mathcal{T}$ (Literal)&Tensor expression\\
              &$\mid$&$\mathcal{V}$ (Variable) \\
              &$\mid$&$\textsf{const}(v, S)$\\
              &$\mid$&$\tiota(S, x)$\\
              &$\mid$&$\expand(e, S)$\\
              &$\mid$&$\binary(\oplus, e_l, e_r)$\\
              &$\mid$&$\pad(e, v, S_l, S_h, S_i)$\\
              &$\mid$&$\slice(e, I_s, I_e, I_p)$\\
              &$\mid$&$\dyslice(e, I, S)$\\
              &$\mid$&$\dyupdateslice(e, e_u, I)$\\
              &$\mid$&$\reduce(\oplus, e, I)$\\
              &$\mid$&$\relabel(e, R)$\\
              &$\mid$&$\concat(e_l, e_h, x)$\\
               &$\mid$&$\convbase(e_i, e_w, X_b, X_f, X_o, I_p)$\\
               &$\mid$&$\conv(e_i, e_w, X_b, X_f, X_o, S_l, S_h, S_{i}, S_i', I_p)$\\
               &$\mid$&$\tdot(e_1, e_2, X_c, X_b)$\\
               &$\mid$&$\clamp(e_{min}, e, e_{max})$\\
               &$\mid$&$\select(e_b, e_1, e_2)$\\
               &$\mid$&$\reverse(e, X_r)$\\
            $g$&$\in$&$\mathsf{list}[\mathsf{Int}]\to \mathsf{Bool}$&Predicate function\\
            $P$&$\coloneq$&$\mathsf{fold}(g, m+)$&Precondition\\
            $Rule$&$\coloneq$&$e_{lhs}\Rightarrow_{P*} e_{rhs}$&Rewrite rule
        \end{tabular}
        \caption{Core rewrite rule representation with extended operators.}
        \figlabel{extended-syntax}
\end{figure}

% \begin{definition} \deflabel{syntactic-valid}
%     Statically Valid Rewrite Rule: A Rewrite Rule $\mathsf{LHS} \Rightarrow_C \mathsf{RHS}$ is statically valid if and only if
%     \begin{itemize}
%         \item $\mathsf{LHS}$ is a semantically valid expression
%         \item $\mathsf{RHS}$ is a semantically valid expression
%         \item $\shape(\mathsf{LHS}) = \shape(\mathsf{RHS})$, i.e, they have the same shape
%     \end{itemize}
% \end{definition}

\begin{definition}[\textsc{Valid Rewrite Rule}] \deflabel{semantic-valid}
    A rewrite rule $\mathsf{LHS} \Rightarrow_C \mathsf{RHS}$ is valid if and only if
    \begin{equation} \eqnlabel{rew-valid}
        \forall v \in \vars, ~ C \wedge \valexp(\mathsf{LHS}) \rightarrow \sem{\mathsf{LHS}} = \sem{\mathsf{RHS}} 
    \end{equation}
    where $\vars$ is the set of all variables appearing in the rewrite rule, such as tensor variables and operator attributes.
    The predicate $\valexp$ takes a tensor expression and returns a condition under which the expression is valid. 
    It is defined inductively on the structure of tensor expressions, collecting all assertions from the operator semantics, as shown in \figref{op-valid}.
\end{definition}

\begin{figure}
    \raggedright
    \textsc{Basis:}
    \begin{align*}
        &\valexp(\underline{t}) \coloneq \true & \underline{t} \in \mathcal{T} \\
        &\valexp(var) \coloneq \shape(var) \geq 0 & var \in \mathcal{V} \\
        &\valexp(\textsf{const}(v, S)) \coloneq S \geq 0 & \\
        &\valexp(\tiota(S, x)) \coloneq S \geq 0 \wedge x \in S & \\
    \end{align*}
    \textsc{Induction Step:}
    \begin{align*}
        &\valexp(\expand(e, S)) \coloneq \valexp(e) \wedge S \geq 0 \wedge \mathsf{dom}(S) \cap \axes(e) = \varnothing  \\
        &\valexp(\binary(\oplus, e_l, e_r)) \coloneq \valexp(e_l) \wedge \valexp(e_r) \wedge \shape(e_l) = \shape(e_r) \\
        % \pad(e, v, S_l, S_h, S_i)\\
        &\valexp(\slice(e, I_s, I_e, I_p)) \coloneq \valexp(e) \wedge 0 \leq I_s \leq I_e \leq \shape(e) \wedge I_p > 0 \\
        &\valexp(\dyslice(e, I, S)) \coloneq \valexp(e) \wedge I + S \leq \shape(e) \wedge S > 0 \wedge I \geq 0 \\
        % \dyupdateslice(e, e_u, I) \\
        &\phantom{\valexp(\dyslice(e, I, S))} \vdots
        % \reduce(\oplus, e, I)\\
        % \relabel(e, R)\\
        % \concat(e_l, e_h, x)\\
        % \convbase(e_i, e_w, X_b, X_f, X_o, I_p)\\
        % \conv(e_i, e_w, X_b, X_f, X_o, S_l, S_h, S_{i}, S_i', I_p)\\
        % \tdot(e_1, e_2, X_c, X_b)\\
        % \clamp(e_{min}, e, e_{max})\\
        % \select(e_b, e_1, e_2)\\
        % \reverse(e, X_r)
    \end{align*}
    \caption{The predicate $\valexp$ is defined inductively over the structure of tensor expressions. Other cases follow similarly, derived from the operator semantics.}
    \figlabel{op-valid}
\end{figure}

\defref{semantic-valid} states that a rewrite rule must be valid for \emph{all} valuations of all the variables appearing in a rewrite rule.
Note that we only consider valuations where the term prior to rewriting is valid, i.e., when $\mathsf{LHS}$ is valid.
The inner equality $\sem{\mathsf{LHS}} = \sem{\mathsf{RHS}}$ asserts that, once all the variables are instantiated, the resulting tensors on both sides are equal.
\secref{discussion-app} extends this definition to account for validity of the \textsf{RHS} expression.

Based on the denotational semantics of operators in \dsl{}, we can symbolically compute the shape of any tensor expression as a function of the shapes of input tensors and operator attributes.
We only consider rewrite rules where under the precondition, the \textsf{LHS} and \textsf{RHS} expressions have the same shape, since the rule would be invalid if the shapes do not match.
%As the rule is statically valid, both \textsf{LHS} and \textsf{RHS} have the same shape.
Under this assumption, semantic validity reduces to a pointwise equality of tensor values:
\begin{equation} \eqnlabel{rule-valid}
    \forall v \in \vars, C \wedge \valexp(\mathsf{LHS}) \rightarrow \forall A \in \mathsf{Access}(\mathsf{LHS}), ~ \sem{\mathsf{LHS}}[A] = \sem{\mathsf{RHS}}[A]
\end{equation}

\begin{definition}[\textsc{Invalid Rewrite Rule}] \deflabel{semantic-invalid}
    A rewrite rule $\mathsf{LHS} \Rightarrow_C \mathsf{RHS}$ is invalid if and only if
    \[
         \exists v \in \vars, C \wedge \valexp(\mathsf{LHS}) \wedge \exists A \in \mathsf{Access}(\mathsf{LHS}), ~ \sem{\mathsf{LHS}}[A] \neq \sem{\mathsf{RHS}}[A]
    \]
    where $\vars$ is the set of variables appearing in the rewrite rule.
    %This equation is the negation of \eqnref{rule-valid}.
\end{definition}

\defref{semantic-invalid} states that if a rewrite rule is invalid, then there exists some valuation of all variables and an access $A$, such that (1) the precondition is \true{}, (2) \textsf{LHS} is valid, and (3) \textsf{LHS} and \textsf{RHS} do not match at the access $A$.
Such a valuation of variables and the access $A$ together comprise a \emph{counterexample} for the rule.
% We refer to $A$ as the \emph{output counterexample access}.

\section{Preliminaries}

In \secref{rewrite-validity}, we defined what it means for a rewrite rule to be valid (\eqnref{rew-valid}).
A key challenge is that rewrite rules must apply to tensors with arbitrary ranks and arbitrary sizes along each axis.
Thus, we must verify rewrite rules in the \emph{unbounded setting}, where both the number of axes (rank) and the size along each axis are unbounded.
To handle arbitrary number of axes, we partition the \ind{}-axes of a tensor into finite groups or classes, called \agg{}-axes, where each \agg{}-axis can be instantiated to any rank.
Once the \agg{}-axes in a rewrite rule are instantiated, we obtain a concrete-rank instantiation of the rule, though the sizes along each axis remain unbounded.
Verifying the rewrite rule in the unbounded setting therefore reduces to verifying the rule for all valid instantiations of the \agg{}-axes.
In this section, we present our key observation (\secref{observation}), and in \secref{scalarf} we show how \project{} verifies any rewrite rule for a possibly infinite number of instantiations.

\subsection{Assumptions} \seclabel{assumptions}

To simplify the discussion, we make the following assumptions, which will be relaxed later.
Let $\mathsf{LHS} \Rightarrow_C \mathsf{RHS}$ be a rewrite rule such that:
\begin{itemize}
    \item The rewrite rule contains only one input tensor, denoted \textsf{X}
    \item \textsf{X} contains exactly one \agg{}-axis, say $x$, which has the \rclass{} $c$, i.e. $x : c$.
    %$x$ can have an arbitrary number of \ind{}-axes.
    \item The rewrite rule does not contain any $\reduce$, $\relabel$, $\expand$, $\tdot$, $\convbase$, and $\conv$ operators: these operators can add, remove or rename \agg{}-axes.
    For now, we keep the discussion restricted to a single \agg{}-axis.
\end{itemize}

\subsection{Observation} \seclabel{observation}

We can symbolically execute any rewrite rule $\mathsf{LHS} \Rightarrow_C \mathsf{RHS}$ under a general access $A$ using the operator semantics.
We observe that the equality $\sem{\mathsf{LHS}}[A] = \sem{\mathsf{RHS}}[A]$ (the inner equality in \eqnref{rule-valid}) can always be expressed in the following canonical form:
\definecolor{accesscolor}{HTML}{330066}
\begin{align} \eqnlabel{scalar-form}
    \textcolor{blue}{\scalarf}(&\textcolor{accesscolor}{\mathsf{X}[}\textcolor{red}{x \mapsto \fmap(e_1, \mathcal{M}_1)}\textcolor{accesscolor}{]}, ~ \cdots, ~\textcolor{accesscolor}{\mathsf{X}[}\textcolor{red}{x \mapsto \fmap(e_n, \mathcal{M}_n)} \textcolor{accesscolor}{]}, \nonumber \\
    &\textcolor{magenta}{\fold(g_1, \mathcal{N}_1)}, ~ \cdots, ~ \textcolor{magenta}{\fold(g_m, \mathcal{N}_m)})
\end{align}

We now illustrate, through examples, how symbolic evaluation of rewrite rules results in expressions of this form.

\subsection{Symbolic Evaluation Examples} \seclabel{symeval-examples}

\subsubsection{Pad-Low}

\paragraph{Setting}
Let $x : c$ be an \agg{}-axis with \rclass{} $c$.
It can be instantiated to an arbitrary number of \ind{}-axes.
Let \textsf{X} be a tensor with shape $\{x \mapsto s\}$, where $s$ contains sizes of \ind{}-axes in $x$.
Let the rewrite rule in consideration be,
\begin{align*}
    \padlow(\mathsf{X}, 0, S_l) \Rightarrow \padlow(\mathsf{X}, 1, S_l)
\end{align*}
where $S_l = \{x \mapsto s_l\}$ and $s_l$ contains padding sizes of \ind{}-axes in $x$.
The difference between \textsf{LHS} and \textsf{RHS} is that the former is padded with a value of 0, while the latter is padded with a value of 1.
Therefore, this rewrite rule is invalid.

\paragraph{Shape of the Output Tensors}
We compute the shape of the \textsf{LHS} (same as \textsf{RHS}) as follows:
\begin{align*}
    \shape(\padlow(\mathsf{X}, 0, S_l)) &= \shape(\mathsf{X}) + S_l \\
    &= \{x \mapsto s\} + \{x \mapsto s_l\} \\
    &= \{x \mapsto s + s_l\}
\end{align*}

Therefore, an arbitrary, valid access $A$ on the output tensors has the form $A = \{x \mapsto a\}$, where $a$ contains the index-values for the \ind{}-axes in $x$.
We symbolically evaluate the expression $\sem{\mathsf{LHS}}[A] = \sem{\mathsf{RHS}}[A]$ under the access $A$:
\begin{align*}
    & \sem{\mathsf{LHS}}[A] = \sem{\mathsf{RHS}}[A] \\
    \Leftrightarrow & ~ (\mathsf{if} ~ not\hbox{-}pad(A) ~ \mathsf{then} ~ \mathsf{X}[x \mapsto (a - s_l)] ~ \textsf{else} ~ 0) = \\
    & \quad\quad\quad\quad\quad(\mathsf{if} ~ not\hbox{-}pad(A) ~ \mathsf{then} ~ \mathsf{X}[x \mapsto (a - s_l)] ~ \textsf{else} ~ 1)
\end{align*}

We express the above in the canonical form given by \eqnref{scalar-form}, as follows:
\begin{align*}
    e(v, l) &\df v - l \\
    y &\df \mathsf{X}[x \mapsto \fmap(e, [a, s_l])]  = \mathsf{X}[x \mapsto a - s_l] \\
    g(v, l) &\df v \geq l \\
    b &\df not\hbox{-}pad(A) = A \geq S_l \\
    &= a \geq s_l = \fold(g, [a, s_l]) \\
    \scalarf(y, b) &\df (\mathsf{if} ~ b ~ \mathsf{then} ~ y ~ \mathsf{else} ~ 0) = (\mathsf{if} ~ b ~ \mathsf{then} ~ y ~ \mathsf{else} ~ 1) \\
    \sem{\mathsf{LHS}}[A] = \sem{\mathsf{RHS}}[A] &\df
    \scalarf(\mathsf{X}[x \mapsto \fmap(e, [a, s_l])], \fold(g, [a, s_l]))
\end{align*}

% Following examples demonstrate how the \agg{}-axis $x$ can be instantiated to concrete ranks:
% \begin{itemize}
%     \item $x$ has rank 2: Let $x$ contain two \ind{}-axes $i,j$. Then
%     \begin{align*}
%         a &= \{i \mapsto v_i, j \mapsto v_j\} \\
%         s_l &= \{i \mapsto l_i, j \mapsto l_j\} \\
%         &\scalarf(\mathsf{X}[x \mapsto \fmap(e, [a, s_l])], ~ \fold(g, [a, s_l])) \\
%         = \ & \scalarf(\mathsf{X}[x \mapsto \{i \mapsto v_i - l_i, j \mapsto v_j - l_j\}], ~ v_i \geq l_i \wedge v_j \geq l_j) 
%     \end{align*}
    
%     \item $x$ has rank 3: Let $x$ contain two \ind{}-axes $i,j, k$. Then
%     \begin{align*}
%         a &= \{i \mapsto v_i, j \mapsto v_j, k \mapsto v_k\} \\
%         s_l &= \{i \mapsto l_i, j \mapsto l_j, k \mapsto l_k\} \\
%         &\scalarf(\mathsf{X}[x \mapsto \fmap(e, [a, s_l])], ~ \fold(g, [a, s_l])) \\
%         = \ & \scalarf(\mathsf{X}[x \mapsto \{i \mapsto v_i - l_i, j \mapsto v_j - l_j, k \mapsto v_k - l_k\}], ~ v_i \geq l_i \wedge v_j \geq l_j \wedge v_k \geq l_k) 
%     \end{align*}
% \end{itemize}

\subsubsection{Slice}

\paragraph{Setting}
Let $x : c$ be an \agg{}-axis with \rclass{} $c$.
It can be instantiated to an arbitrary number of \ind{}-axes.
Let \textsf{X} be a tensor with shape $\{x \mapsto s\}$, where $s$ contains sizes of \ind{}-axes in $x$.
Let the rewrite rule in consideration be,
\begin{align*}
    \slice(\padlow(\mathsf{X}, 0, S_l), S_l, S + S_l, S_p) \Rightarrow \mathsf{X}
\end{align*}
where $S = \shape(\mathsf{X})$, $S_l = \{x \mapsto s_l\}$, and $S_p = \{x \mapsto 1\}$.

% $x \mapsto 3$ denotes that every \ind{}-axis in $x$ is low-padded with a size of 3.
The \textsf{LHS} first pads the input tensor by 0 with padding sizes $s_l$.
It then slices the resulting tensor by removing $s_l$ elements from the beginning along every \ind{}-axis, while skipping no elements till the end.
It is easy to see that the \textsf{LHS} is a no-op since it adds padding values and then removes them.
The \textsf{RHS} is simply the input tensor.
Therefore, this rewrite rule is valid.

\paragraph{Shape of the output tensors}
We compute the shape of the \textsf{LHS} (same as \textsf{RHS}) as follows:
\begin{align*}
    \shape(\slice(\padlow(\mathsf{X}, 0, S_l), S_l, S + S_l, S_p) &= S + S_l - S_l \\
    &= S \\
    &= \{x \mapsto s\}
\end{align*}

Therefore, an arbitrary, valid access $A$ on the output tensors has the form $A = \{x \mapsto a\}$, where $a$ contains the index-values for the \ind{}-axes in $x$.
We symbolically evaluate the expression $\sem{\mathsf{LHS}}[A] = \sem{\mathsf{RHS}}[A]$ under the access $A$:
\begin{align*}
    & \sem{\mathsf{LHS}}[A] = \sem{\mathsf{RHS}}[A] \\
    \Leftrightarrow & ~ \sem{\mathsf{pad}(\mathsf{X}, 0, S_l)}[S_l + A] = \mathsf{X}[A] \\
    \Leftrightarrow & ~ (\mathsf{if} ~ not\hbox{-}pad(S_l + A) ~ \mathsf{then} ~ \mathsf{X}[(A + S_l) - S_l] ~ \textsf{else} ~ 0) = \mathsf{X}[A] \\
    \Leftrightarrow & ~ (\mathsf{if} ~ not\hbox{-}pad(S_l + A) ~ \mathsf{then} ~ \mathsf{X}[A] ~ \textsf{else} ~ 0) = \mathsf{X}[A]
\end{align*}

We express the above in the canonical form given by \eqnref{scalar-form}, as follows:
\begin{align*}
    e(v) &\df v \\
    y &\df \mathsf{X}[x \mapsto \fmap(e, [a])]  = \mathsf{X}[x \mapsto a] \\
    g(v, l) &\df v + l \geq l \\
    b &\df not\hbox{-}pad(A + S_l) = A + S_l \geq S_l \\
    &= a + s_l \geq s_l = \fold(g, [a, s_l]) \\
    \scalarf(y, b) &\df (\mathsf{if} ~ b ~ \mathsf{then} ~ y ~ \mathsf{else} ~ 0) = y \\
    \sem{\mathsf{LHS}}[A] = \sem{\mathsf{RHS}}[A] &\df \scalarf(\mathsf{X}[x \mapsto \fmap(e, [a])], \fold(g, [a, s_l]))
\end{align*}

\subsubsection{Multiple \rclasses{}} \seclabel{ex-mul-rclass}

\paragraph{Setting}
Let $x_1 : c_1$ and $x_2 : c_2$ be \agg{}-axes with \rclasses{} $c_1$ and $c_2$ respectively.
They can be instantiated to an arbitrary number of \ind{}-axes.
Let \textsf{X} be a tensor with shape $\{x_1 \mapsto s_1\}$, where $s_1$ contains sizes of \ind{}-axes in $x_1$.

Let the rewrite rule rule in consideration be,
\begin{align*}
    \padlow(\expand(\mathsf{X}, \{x_2 \mapsto s_2\}), 0, S_{l_1}) \Rightarrow \expand(\padlow(\mathsf{X}, 0, S_{l_2}), \{x_2 \mapsto s_2\})
\end{align*}

where $S_{l_1} = \{x_1 \mapsto s_l, x_2 \mapsto 0\}$, $S_{l_2} = \{x_1 \mapsto s_l\}$, and $s_l$ contains padding sizes for \ind{}-axes in $x_1$.
The \textsf{LHS} first expands the tensor \textsf{X} with a new \agg{}-axis $x_2$ and then low-pads $x_1$ with $s_l$ and leaves $x_2$ unchanged.
The \textsf{RHS} first low-pads $x_1$ with $s_l$ and then expands with a new \agg{}-axis $x_2$.
This is a valid rewrite rule since \textsf{LHS} does not pad $x_2$ after the expansion, while \textsf{RHS} simply replicates the data across $x_2$.

\paragraph{Shape of the output tensors}
We compute the shape of the \textsf{LHS} (same as \textsf{RHS}) as follows:
\begin{align*}
    \shape(\padlow(\expand(\mathsf{X}, \{x_2 \mapsto s_2\}), 0, S_{l_1})) &= \shape(\expand(\mathsf{X}, \{x_2 \mapsto s_2\})) + S_{l_1} \\
    &= \shape(\mathsf{X}) \cup \{x_2 \mapsto s_2\} + S_{l_1} \\
    %&= \{x_1 \mapsto s_1\} \cup \{x_2 \mapsto s_2\} + S_{l_1} \\
    &= \{x_1 \mapsto s_1, x_2 \mapsto s_2\} + \{x_1 \mapsto s_l, x_2 \mapsto 0\} \\
    &= \{x_1 \mapsto s_1 + s_l, x_2 \mapsto s_2\}
\end{align*}

Therefore, an arbitrary, valid access $A$ on the output tensors has the form $A = \{x_1 \mapsto a_1, x_2 \mapsto a_!\}$, where $a_1$ and $a_2$ contain the index-values for the \ind{}-axes in $x_1$ and $x_2$ respectively.
We symbolically evaluate the expression $\sem{\mathsf{LHS}}[A] = \sem{\mathsf{RHS}}[A]$ under the access $A$:
\begin{align*}
    & \sem{\mathsf{LHS}}[A] = \sem{\mathsf{RHS}}[A] \\
    \Leftrightarrow & ~ (\mathsf{if} ~ not\hbox{-}pad(A) ~ \mathsf{then} ~ \sem{\expand(\mathsf{X}, \{x_2 \mapsto s_2\})}[A - S_{l_1}] ~ \mathsf{else} ~ 0) = \\
    & \quad\quad\quad\quad\quad\quad\quad\quad\sem{\padlow(\mathsf{X}, 0, S_{l_2})}[A|_{x_1}] \\
    \Leftrightarrow & ~ (\mathsf{if} ~ not\hbox{-}pad(A) ~ \mathsf{then} ~ \mathsf{X}[(A - S_{l_1})|_{x_1}] ~ \mathsf{else} ~ 0) = \\
    & \quad\quad\quad\quad\quad\quad\quad\quad (\mathsf{if} ~ not\hbox{-}pad(A|_{x_1}) ~ \mathsf{then} ~ \mathsf{X}[A|_{x_1} - S_{l_2}] ~ \mathsf{else} ~ 0) \\
    \Leftrightarrow & ~ (\mathsf{if} ~ not\hbox{-}pad(A) ~ \mathsf{then} ~ \mathsf{X}[x_1 \mapsto a_1 - s_l] ~ \mathsf{else} ~ 0) = \\
    & \quad\quad\quad\quad\quad\quad\quad\quad (\mathsf{if} ~ not\hbox{-}pad(A|_{x_1}) ~ \mathsf{then} ~ \mathsf{X}[x_1 \mapsto a_1 - s_l] ~ \mathsf{else} ~ 0)
\end{align*}

We express the above in the canonical form given by \eqnref{scalar-form}, as follows:
\begin{align*}
    e(v, l) &\df v - l \\
    y &\df \mathsf{X}[x_1 \mapsto \fmap(e, [a_1, s_l])]  = \mathsf{X}[x_1 \mapsto a_1 - s_l] \\
    g_1(v, l) &\df v \geq l \\
    g_2(v, l) &\df v \geq 0 \\
    b_1 &\df not\hbox{-}pad(A) = A \geq S_{l_1} \\
    &= a_1 \geq s_l \wedge a_2 \geq 0 = \fold(g_1, [a_1, s_l]) \wedge \fold(g_2, [a_2]) \\
    g_3 &\df \true \\
    b_2 &\df not\hbox{-}pad(A|_{x_1}) = a_1 \geq s_l \\
    &= a_1 \geq s_l \wedge \true = \fold(g_1, [a_1, s_l]) \wedge \fold(g_3, []) \\
    \scalarf(y, b_1, b_2) &\df (\mathsf{if} ~ b_1 ~ \mathsf{then} ~ y ~ \mathsf{else} ~ 0) = (\mathsf{if} ~ b_2 ~ \mathsf{then} ~ y ~ \mathsf{else} ~ 0) \\
    \sem{\mathsf{LHS}}[A] = \sem{\mathsf{RHS}}[A] &\df \scalarf(\fmap(e, [a_1, s_l]), \fold(g_1, [a_1, s_l]) \wedge \fold(g_2, [a_2]), \\
    &\qquad\qquad\qquad \fold(g_1, [a_1, s_l]) \wedge \fold(g_3, []))
\end{align*}

%\jai{Mention how $b_2$ was ``padded'' with extra conditions}
Here, the condition $b_2$ has been ``padded'' with a \emph{trivial} $\fold$ over the \agg{}-axis $x_2$, using a constant predicate function $g_3 = \true$.
This notational convention ensures that each condition contains exactly one $\fold$ per \agg{}-axis, some of which may be trivial.
Such trivial folds are introduced for uniformity and can be safely removed during postprocessing.

\subsubsection{Key Takeaway}
The observation that rewrite rules in \dsl{} can be symbolically evaluated and expressed in the form given by \eqnref{scalar-form}, extends to \emph{all} rewrite rules in the language.
This generalization can be formally proven by induction on the structure of tensor expressions.
We omit the proof, as the structure of the argument follows directly from the operator semantics.

\section{Scalar Function Explained} \seclabel{scalarf}

In \secref{symeval-examples}, we illustrated through examples how any rewrite rule $\mathsf{LHS} \Rightarrow_C \mathsf{RHS}$ in \dsl{} can be symbolically evaluated under a general access $A$ using the operator semantics.
The assumptions from \secref{assumptions} still apply here and will be incrementally relaxed in later sections.
The equality $\sem{\mathsf{LHS}}[A] = \sem{\mathsf{RHS}}[A]$ can be expressed in the following form:
\definecolor{accesscolor}{HTML}{330066}
\begin{align*}
    \textcolor{blue}{\scalarf}(&\textcolor{accesscolor}{\mathsf{X}[}\textcolor{red}{x \mapsto \fmap(e_1, \mathcal{M}_1)}\textcolor{accesscolor}{]}, ~ \cdots, ~\textcolor{accesscolor}{\mathsf{X}[}\textcolor{red}{x \mapsto \fmap(e_n, \mathcal{M}_n)} \textcolor{accesscolor}{]}, \\
    &\textcolor{magenta}{\fold(g_1, \mathcal{N}_1)}, ~ \cdots, ~ \textcolor{magenta}{\fold(g_m, \mathcal{N}_m)})
\end{align*}

We explain each component below:
\begin{itemize}
    \item \textcolor{accesscolor}{\textsf{X}[\_]}: this represents an access to the tensor \textsf{X}.
    
    \item $\textcolor{red}{x \mapsto \fmap(e_i, \mathcal{M}_i)}$: this represents the access expression for the $i^{\text{th}}$ access to \textsf{X}.
    Since \textsf{X} has only one \agg{}-axis $x$, the access expression contains the same \agg{}-axis, with the access map $\textcolor{red}{\fmap(e_i, \mathcal{M}_i)}$.
    
    The $\fmap$ construct takes a function and applies it pointwise to a list of maps with the same domain.
    It as defined a follows:
    \[
        \fmap(f, [m_1, m_2, \cdots]) = \{i \mapsto f(m_1(i),m_2(i),\cdots) \ \vert \ i \in \mathsf{dom}(m_1) \}
    \]
    For example, if $\mathcal{M} = [\{i \mapsto v_i, j \mapsto v_j\}, \{i \mapsto l_i, j \mapsto l_j\}]$ and $f = \lambda v,l. (v-l)$, then $\fmap(f, \mathcal{M}) = \{i \mapsto v_i - l_i, j \mapsto v_j - l_j\}$.
    %$\mathcal{M}_i$ is the list of maps appearing in this access.
    
    Here, $\mathcal{M}_i$ is the list of maps used in the $i^{\text{th}}$ access, which include access maps, attribute maps, etc. 
    Each function $e_i$ is independent of the rank of $x$, and only the maps in $\mathcal{M}_i$ vary with the rank of $x$.
    Thus, we are able to capture all the \emph{rank-independent} information in the function $e_i$.
    We refer to $e_i$ as an \emph{\indtrans{}} because it transforms output index-values to input index-values.
    % Every \ind{}-axis in $x$ is treated in the \emph{same way}, i.e., every \ind{}-axis index changes in the same way through $e_i$.
    For the $\fmap$ construct to be well-formed, the arity of $e_i$ must match the number of maps in $\mathcal{M}_i$.
    
    When \textsf{X} has multiple \agg{}-axes, each access must have separate access maps for each \agg{}-axis.
    For example, if $\axes(\mathsf{X}) = \{x_1, \cdots, x_p\}$, then the $i^{\text{th}}$ access to \textsf{X} has the following form:
    \[
        \mathsf{X}[x_1 \mapsto \fmap({}^1e_i, {}^1\mathcal{M}_i), \ \cdots \ , x_p \mapsto \fmap({}^pe_i, {}^p\mathcal{M}_i)]
    \]
    where ${}^je_i$ denotes the \indtrans{} for $x_j$ in the $i^{\text{th}}$ access, and ${}^j\mathcal{M}_i$ is the corresponding list of maps.
    Each ${}^je_i$ may vary across different \agg{}-axes.
    However, it is applied to all \ind{}-axes uniformly in a given \agg{}-axis.

    \item $\textcolor{magenta}{\fold(g_j, \mathcal{N}_j)}$: 
    these represent boolean values, referred to as \emph{\cond{}s}, that appear within \textsf{if-then-else} blocks.
    These conditions capture the dependency of output tensor values on the input tensor values and originate directly from the operator semantics.
    For instance, $not\hbox{-}pad$ in the $\pad$ semantics (\semref{Pad}) takes an access $A$ and determines if it lies in the padded area or not.
    
    The $\fold$ construct takes a predicate and applies it pointwise to a list of maps with the same domains.
    It returns \true{} if all predicate values are \true{}, and \false{} otherwise.
    It is defined as:
    \[
        \fold(g, [m_1,m_2,\cdots]) = \bigwedge_{i \in \mathsf{dom}(m_1)} g(m_1(i), m_2(i), \cdots)
    \]
    For example, if $\mathcal{N} = [\{i \mapsto v_i, j \mapsto v_j\}, \{i \mapsto l_i, j \mapsto l_j\}]$ and $g = \lambda v,l. (v\geq l)$, then $\fold(g, \mathcal{N}) = v_i \geq  l_i \wedge v_j \geq l_j$.
    %$\mathcal{M}_i$ is the list of maps appearing in this access.
    
    Here, $\mathcal{N}_j$ is the list of maps used in the $j^{\text{th}}$ condition, which include access maps, attribute maps, etc. 
    Each function $g_j$ is independent of the rank of the \agg{}-axis $x$, and only the maps in $\mathcal{N}_j$ vary with the rank of $x$.
    Thus, we are able to capture all the \emph{rank-independent} information in the function $g_j$.
    For the $\fold$ construct to be well-formed, the arity of $g_j$ must match the number of maps in $\mathcal{N}_j$.
    
    In case where the rewrite rule has multiple \agg{}-axes, we observe that all such conditions can be expressed as a conjunction of $\fold$s, one for each \agg{}-axis.
    More specifically, if $x_1 \cdots x_p$ are the \agg{}-axes appearing in a rule, then the $j^{\text{th}}$ condition has the form:
    \[
        \bigwedge_{i = 1\cdots p} \fold({}^ig_j, {}^i\mathcal{N}_j)
    \]
    where ${}^ig_j$ is the $j^{\text{th}}$ condition for the \agg{}-axis $x_i$, and ${}^i \mathcal{N}_j$ is the corresponding list of maps.
    Some important observations:
    \begin{itemize}
        \item Some of the ${}^ig_j's$ may be \emph{trivial}, i.e., they return \true{} for all inputs.
        In that case, the corresponding $\fold$ always evaluates to \true{} and does not contribute to the actual condition.
        \item Introducing trivial folds serves a notational purpose: every condition is uniformly represented as a conjunction of one $\fold$ per \agg{}-axis, regardless of whether that axis contributes meaningful conditions.
        This convention simplifies the representation and such trivial folds can be removed once they're no longer needed.

        \secref{ex-mul-rclass} illustrates an example where a condition only involving $x_1$ was extended with a trivial fold over $x_2$ to maintain uniformity.
        
        \item Each ${}^ig_j$ may vary across different \agg{}-axes.
        However, it is applied to all \ind{}-axes uniformly in a given \agg{}-axis.
    \end{itemize}
    
    \item $\textcolor{blue}{\scalarf}$:
    this is a scalar function which returns a boolean value, indicating whether the values of \textsf{LHS} and \textsf{RHS} tensors match at a given access $A$.
    We refer to it as $\scalarf$ since it captures the core, \emph{scalar computation} in the expression, typically consisting of arithmetic and conditionals over scalars.
    
    Based on the arguments to $\scalarf$, we say: $\scalarf$ comprises \emph{n accesses to \textsf{X}} and \emph{m \cond{}s}.
    Just like \indtrans{}s, $\scalarf$ is independent of the rank of $x$.
\end{itemize}

We now rewrite \eqnref{rule-valid} in terms of the $\scalarf$ function, accesses, and conditions, to get the validity condition of a rewrite rule:
\begin{align} \eqnlabel{rule-scalarf-expanded}
    \forall v \in \vars, C &\wedge \valexp(\mathsf{LHS}) \rightarrow \forall A \in \access(\mathsf{LHS}),  \nonumber \\
    \scalarf(&\mathsf{X}[x \mapsto \fmap(e_1, \mathcal{M}_1)], ~ \cdots, ~ \mathsf{X}[x \mapsto \fmap(e_n, \mathcal{M}_n) ], \nonumber\\
    &\fold(g_1, \mathcal{N}_1), ~ \cdots, ~ \fold(g_m, \mathcal{N}_m))
\end{align}

\subsection{Verification Methodology}

\begin{definition}[\textsc{\rcrank{} Map}]
    Let $R$ be a rewrite rule with \rclasses{} $c_1, \cdots, c_p$.
    An \emph{\rcrank{} map} is a mapping $\{c_1 \!\mapsto\!r_1, \cdots, c_p\!\mapsto\!r_p\}$, that assigns a concrete rank $r_i$ to each \rclass{} $c_i$, representing a specific rank instantiation of the rule $R$.
\end{definition}

\begin{definition}[\textsc{Validity for Concrete \rclass{} Ranks}]
    Let $R$ be a rewrite rule and $m$ be an \rcrank{} map for $R$.
    Then the predicate $\mathsf{valid}(R, m)$ holds if and only if:
    \[
        \mathsf{valid}(R, m) \Leftrightarrow R \text{ is valid for the number of \ind{}-axes as specified by } m 
    \]
    This corresponds to a concrete, bounded verification instance: the ranks of all \rclasses{} (and thus of all \agg{}-axes) are fixed according to $m$, while the sizes of the \ind{}-axes remain arbitrary.
\end{definition}

Let $R = \mathsf{LHS} \Rightarrow_C \mathsf{RHS}$ be the rewrite rule that we wish to verify in the unbounded setting.
For simplicity, the assumptions from \secref{assumptions} still apply.
We identify a \emph{sufficient} rank $k$ such that:
\begin{align} \eqnlabel{app-induction}
    \forall ~ i \geq k, ~ \mathsf{valid}(R, \{c \mapsto i\}) \rightarrow \mathsf{valid}(R, \{c \mapsto i\!+\!1\})
\end{align}

This means that for all $i \geq k$, if the rule is valid when the \rclass{} $c$ has rank $i$, then it is also valid when $c$ has rank $i\!+\!1$.
Given such a sufficient rank $k$, we can verify the rule in the unbounded setting using induction on tensor ranks:
\begin{itemize}
    \item \textsc{Basis}: Use bounded verification to prove that the rule is valid for all ranks up to $k$:
    \[
        \bigwedge_{i = 1 \cdots k} \mathsf{valid}(\{c \mapsto i\})
    \]
    \item \textsc{Induction Step}: Use induction on the rank of $c$ with \eqnref{app-induction} as induction hypothesis:
    \begin{equation*}
        \mathsf{valid}(\{c \mapsto k\}) \wedge \left[\forall i \geq k, ~ \mathsf{valid}(\{c \mapsto i\}) \rightarrow \mathsf{valid}(\{c \mapsto i\!+\!1\}) \right] ~
        \Rightarrow~\bigwedge_{i \geq k} \mathsf{valid}(\{c \mapsto i\})
    \end{equation*}
\end{itemize}
In other words, the rule is valid for any number of \ind{}-axes in the \rclass{} $c$.
What remains is to determine a sufficient rank $k$ for each \rclass{} in a given rule.

We begin by analyzing the validity condition of the rule when the \agg{}-axis $x$ is instantiated with a concrete rank $i$, i.e., when $\mathsf{valid}(\{c \mapsto i\})$ holds.
We first instantiate $x$ to $x^i$, a concrete \agg{}-axis containing $i$ distinct \ind{}-axes, namely $\{\underline{1}, \cdots, \underline{i}\}$.
We similarly instantiate the precondition $C$, input tensor \textsf{X}, list of maps $\mathcal{M}_j$ and $\mathcal{N}_j$, operator attributes, and tensor expressions.
Thus, we can expand the validity $\mathsf{valid}(\{c \mapsto i\})$ by rewriting \eqnref{rule-scalarf-expanded} as:
\begin{align} \eqnlabel{rule-scalarf-expanded-i}
    \forall v^i \in \vars^i, C^i &\wedge \valexp(\mathsf{LHS}^i) \rightarrow \forall A^i \in \access(\mathsf{LHS}^i),  \nonumber \\
    \scalarf(&\mathsf{X}^i[x^i \mapsto \fmap(e_1, \mathcal{M}_1^i)], ~ \cdots, ~ \mathsf{X}^i[x^i \mapsto \fmap(e_n, \mathcal{M}_n^i) ], \nonumber\\
    &\fold(g_1, \mathcal{N}_1^i), ~ \cdots, ~ \fold(g_m, \mathcal{N}_m^i))
\end{align}
As noted previously, the scalar function $\scalarf$, the index transformers $e_j$, and the condition predicates $g_j$ are independent of the rank of $x$.
The only components that vary with the rank $i$ are the variables $\vars^i$, access $A^i$, and list of maps $\mathsf{M}_j^i$ and $\mathcal{N}_j^i$.

To compute a sufficient rank $k$, we start from the inductive goal [$\mathsf{valid}(\{c \mapsto k\}) \rightarrow \mathsf{valid}(\{c \mapsto k\!+\!1\})$].
We instead work with its contrapositive,
\begin{equation*}
    \mathsf{valid}(\{c \mapsto k\}) \rightarrow \mathsf{valid}(\{c \mapsto k\!+\!1\})
    \Leftrightarrow \neg \mathsf{valid}(\{c \mapsto k\!+\!1\}) \rightarrow \neg \mathsf{valid}(\{c \mapsto k\})
\end{equation*}

Intuitively, $\neg\mathsf{valid}(\{c \mapsto i\})$ holds if there exists a counterexample to the rewrite rule at rank $i$.
Our goal is to find a sufficient rank $k$ such that any counterexample at rank $k\!+\!1$ (or higher) can be \emph{lowered} to a counterexample at rank $k$.

On expanding the validity condition using \eqnref{rule-scalarf-expanded-i}, the contrapositive becomes:
\begin{align*}
    \exists ~ v^{k+1} \in \vars^{k+1}, ~ &C^{k+1} \wedge \valexp(\mathsf{LHS}^{k+1}) \wedge \exists A^{k+1} \in \access(\mathsf{LHS}^{k+1}),  \nonumber \\
    \neg\scalarf(&\mathsf{X}^{k+1}[x^{k+1} \mapsto \fmap(e_1, \mathcal{M}_1^{k+1})], ~ \cdots, ~ \mathsf{X}^{k+1}[x^{k+1} \mapsto \fmap(e_n, \mathcal{M}_n^{k+1}) ], \nonumber\\
    &\fold(g_1, \mathcal{N}_1^{k+1}), ~ \cdots, ~ \fold(g_m, \mathcal{N}_m^{k+1})) \\
    &\quad\quad\quad\quad{\big\downarrow} \\
    \exists ~ v^{k} \in \vars^{k}, ~ &C^{k} \wedge \valexp(\mathsf{LHS}^{k}) \wedge \exists A^{k} \in \access(\mathsf{LHS}^{k}),  \nonumber \\
    \neg\scalarf(&\mathsf{X}^{k}[x^{k} \mapsto \fmap(e_1, \mathcal{M}_1^{k})], ~ \cdots, ~ \mathsf{X}^k[x^k \mapsto \fmap(e_n, \mathcal{M}_n^{k}) ], \nonumber\\
    &\fold(g_1, \mathcal{N}_1^{k}), ~ \cdots, ~ \fold(g_m, \mathcal{N}_m^{k})) \\
\end{align*}

This means that we are given a counterexample at rank $k\!+\!1$, which comprises:
\begin{itemize}
    \item A tensor $\mathsf{X}^{k+1}$ with $k\!+\!1$ \ind{}-axes in $x^{k+1}$, having the shape $\{x^{k+1} \mapsto m\}$, where $m = \{\underline{1} \mapsto n_1, \ \cdots, \ \underline{k\!+\!1} \mapsto n_{k+1}\}$,
    % \begin{equation*}
    %     m = \{\underline{1} \mapsto n_1, \ \cdots, \ \underline{k+1} \mapsto n_{k+1}\}
    % \end{equation*}
    \item A valuation of variables $\vars^{k+1}$ such that the precondition $C^{k+1}$ is satisfied and the expression $\mathsf{LHS}^{k+1}$ is valid, and
    \item An access $A^{k+1} \in \access(\mathsf{LHS}^{k+1})$, where the output tensors do not match.
\end{itemize}

To lower this counterexample from rank $k\!+\!1$ to rank $k$, we must construct:
\begin{itemize}
    \item A tensor $\mathsf{X}^{k}$ with $k$ \ind{}-axes in $x^{k}$,
    \item A valuation of variables $\vars^k$ such that the precondition $C^k$ is satisfied and the expression $\mathsf{LHS}^{k}$ is valid, and
    \item An access $A^k \in \access(\mathsf{LHS}^k)$, where the output tensors do not match.
\end{itemize}

Our goal is to derive lower bounds on $k$ while ensuring that any $(k\!+\!1)$-ranked counterexample can be lowered to a $k$-ranked counterexample.
This lower bound serves as a sufficient rank for the inductive argument.

\subsubsection{Counterexample Construction}

Our counterexample construction algorithm involves \emph{projecting} the $(k\!+\!1)$-ranked \agg{}-axis $x^{k+1}$ to a $k$-ranked \agg{}-axis $x^k$ by choosing $k$ \ind{}-axes from $\{\underline{1}, \cdots, \underline{k\!+\!1}\}$.
There are $k\!+\!1$ such projections in total, but not all of them necessarily yield a valid counterexample at rank $k$.
We express this construction using a set of equations and use them to derive \emph{constraints} on valid projections.

\paragraph{Projection Notation}
Let $\Gamma \subset x$ be any projection of the \agg{}-axis $x$.
We define projection operations on different constructs as follows:
\begin{itemize}
    \item Map Projection:
    Given a map $m$ defined on an \agg{}-axis $x$ (i.e., $\mathsf{dom}(m) = x$), its projection to $\Gamma$ is defined as:
    \[
        \cproject{m}{x}{\Gamma} \df \{\underline{j} \mapsto m(\underline{j})~\mid~ \underline{j} \in \Gamma\}
    \]
    If $m$ is defined on a different \agg{}-axis $x' \ne x$, then $\cproject{m}{x}{\Gamma} \df m$.

    \item Aggregated-Map Projection:
    Given an \agg{}-map $M$ that contains the \agg{}-axis $x$, its projection to $\Gamma$ is defined as:
    \[
        \cproject{M}{x}{\Gamma} \df (M \mysetminus \{x \mapsto M(x)\}) \cup \{\Gamma \mapsto \cproject{M(x)}{x}{\Gamma}\}
    \]
    If $M$ does not contain $x$, then $\cproject{M}{x}{\Gamma} \df M$.

    \item Tensor Projection:
    Given a tensor $\mathsf{X}$, its projection is a tensor with shape $\cproject{\shape(\mathsf{X})}{x}{\Gamma}$.
    The projected tensor $\cproject{\mathsf{X}}{x}{\Gamma}$ has unspecified values (unless determined by other constraints).

    \item Valuation Projection:
    Given a set of variable valuations $\vars$, its projection is defined pointwise as:
    \[
        \cproject{\vars}{x}{\Gamma} \df \{\cproject{v}{x}{\Gamma} \ \mid \ v \in \vars\}
    \]

    \item Map List projection:
    Given a list of maps $\mathcal{M} = [m_1, \cdots, m_r]$, its projection is defined elementwise as:
    \[
        \cproject{\mathcal{M}}{x}{\Gamma} \df [\cproject{m_1}{x}{\Gamma}, \cdots, \cproject{m_r}{x}{\Gamma}]
    \]
    \item Bijection Projection:
    Given a bijection $\mu : x \leftrightarrow y$, its projection is defined as:
    \[
        \cproject{\mu}{x}{\Gamma} \df \{\Gamma(i) \in y \ \mid \ i \in x\}
    \]
    The resulting function $\cproject{\mu}{x}{\Gamma} : \Gamma \leftrightarrow \Gamma(x)$ is also a bijection.

    \item Successive Projections:
    For any construct $\alpha$ which can be projected (map, aggregated map, tensor, valuation, bijection, etc.), successive projections are defined as:
     \[
        \cproject{\alpha}{x_1 \cdots x_p}{\Gamma_1 \cdots \Gamma_p} \df \cproject{\cproject{\alpha}{x_1}{\Gamma_1} \cdots}{x_p}{\Gamma_p}
    \]
\end{itemize}

\paragraph{Map-List Notation}
Let $\mathcal{M} = [m_1, m_2, \cdots]$ be a list of maps sharing a common domain, and let $i \in \mathsf{dom}(m_1)$ be a \ind{}-axis.
We write $\mathcal{M}(i)$ for the list obtained by applying each map to $i$:
\[
    \mathcal{M}(i) \df [m_1(i), m_2(i), \cdots]
\]

Let $\Gamma \subset \{\underline{1}, \cdots, \underline{k\!+\!1}\}$ be a projection of size $k$, representing a selection of $k$ \ind{}-axes from $x^{k+1}$.
Note that $\Gamma$ is same as $x^k$, and we use them interchangeably.
At this point, the \ind{}-axes in $\Gamma$ are unknown.
As part of constructing a $k$-ranked counterexample from a rank $k\!+\!1$ counterexample, we define the $k$-ranked components by projecting their $(k\!+\!1)$-ranked counterparts using $\Gamma$:
\begin{itemize}
    \item Tensors: $\mathsf{X}^k = \cproject{\mathsf{X}^{k+1}}{x}{\Gamma}$,
    \item Variable Valuations: $\vars^k = \cproject{\vars^{k+1}}{x}{\Gamma}$,
    \item Output Counterexample Access: $A^k = \cproject{A^{k+1}}{x}{\Gamma}$,
    \item Access Map Lists: for all $i \in \{1 \cdots n\}$, $\mathcal{M}_i^k = \cproject{\mathcal{M}_i^{k+1}}{x}{\Gamma}$, and
    \item Condition Map Lists: for all $j \in \{1 \cdots m\}$, $\mathcal{N}_j^k = \cproject{\mathcal{N}_j^{k+1}}{x}{\Gamma}$
\end{itemize}
In all of the projections listed above, we omit the superscript from the \agg{}-axis for brevity, i.e., $\cproject{\alpha}{x}{\Gamma}$ as shorthand for $\cproject{\alpha}{x^{k+1}}{\Gamma}$.

Note that each entry in the map lists $\mathcal{M}_i$ and $\mathcal{N}_j$ also appears within the variable valuations $\vars$, so projecting them separately is consistent with projecting $\vars$.
At this point, the $k$-ranked components are also unknown.

To complete the construction of a valid $k$-ranked counterexample, we require that the arguments passed to $\scalarf$ in the $k$-ranked setting match those in the $(k\!+\!1)$-ranked setting.
This imposes a set of \emph{constraints} on the projection $\Gamma$.
These constraints play a key role in computing a lower bound on the rank $k$ for which counterexample lowering is valid, as described in \secref{proof}.

\subsubsection{Handling Generalizations}

All the assumptions from \secref{assumptions} can be relaxed and the verification methodology can be extended accordingly.
To do so, we simply assume a more general symbolic form of the rule, i.e., the general form of $\scalarf$, accesses, and conditions.
More concretely, suppose the rule has \rclasses{} $c_1, \cdots, c_p$:
\begin{itemize}
    \item Consider any \rcrank{} map $m = \{c_1 \mapsto r_1, \cdots, c_p \mapsto r_p \}$ for the rewrite rule.

    \item For each \rclass{} $c_l$, compute a rank $k_l$ such that
    \[
        \forall i \geq k_l, \ \mathsf{valid}(R, m[c_l \mapsto i]) \rightarrow \mathsf{valid}(R, m[c_l \mapsto i\!+\!1]).
    \]
    Similar to the single-\agg{}-axis case, we reason using the contrapositive:
    \begin{equation*}
        \neg\mathsf{valid}(R, m[c_l \mapsto k_l\!+\!1]) \rightarrow \neg\mathsf{valid}(R, m[c_l \mapsto k_l])
    \end{equation*}
    This states that, given a counterexample at rank $k_l\!+\!1$ for all \agg{}-axes of \rclass{} $c_l$, we must construct a counterexample at rank $k_l$ for all those \agg{}-axes.
    Importantly, all \agg{}-axes with the same \rclass{} must have the same rank in any valid instantiation, so we project all such \agg{}-axes together.
    The ranks of all other \rclasses{} remain fixed throughout the lowering.

    Expanding the validity condition using \eqnref{rule-scalarf-expanded-i}, we obtain:
    \begin{align} \eqnlabel{starter-eq}
        \exists ~ v^{k_l+1} \in \vars^{k_l+1}, ~ &C^{k_l+1} \wedge \valexp(\mathsf{LHS}^{k_l+1}) \wedge \exists A^{k_l+1} \in \access(\mathsf{LHS}^{k_l+1}), \neg \ \scalarf(\ \cdots \ ) \nonumber \\
        &\quad\quad\quad\quad{\big\downarrow} \nonumber \\
        \exists ~ v^{k_l} \in \vars^{k_l}, ~ &C^{k_l} \wedge \valexp(\mathsf{LHS}^{k_l}) \wedge \exists A^{k_l} \in \access(\mathsf{LHS}^{k_l}), \neg \ \scalarf(\ \cdots \ ) 
    \end{align}
    We observe that the computed rank $k_l$ is independent of the ranks of other \rclasses{} in $m$.

    \item After computing this sufficient rank for each \rclass{}, we obtain the \rcrank{} map $\{c_1 \mapsto k_1, \cdots, c_p \mapsto k_p\}$.

    \item Finally, we verify the rewrite rule for all concrete instantiations within these bounds:
    \[
        \bigwedge_{1 \leq r_1 \leq k_1} \cdots \bigwedge_{1 \leq r_p \leq k_p} \mathsf{valid}(R, \{c_1 \mapsto r_1, \cdots, c_p \mapsto r_p\})
    \]
    If this conjunction is a tautology, then the rewrite rule $R$ is verified in the unbounded setting.
\end{itemize}

\section{Proof and Helper Lemmas} \seclabel{proof}

In this section, we compute a sufficient rank for each \rclass{} in any rewrite rule.
We begin under the assumptions listed in \secref{assumptions} and incrementally relax them to handle more general settings.
Our approach is to reason from \eqnref{starter-eq}: we begin with a counterexample at rank $k\!+\!1$ and construct a corresponding counterexample at rank $k$.
In each case, we assume a particular form for the scalar function $\scalarf$, the accesses, and the conditions, and then derive a sufficient rank for that form.
% By showing that counterexamples at higher ranks can be “lowered” to those at sufficient rank, we demonstrate that verifying all instantiations up to that rank suffices for correctness in the unbounded setting.

\subsection{Sufficient Rank for Arbitrary Number of Conditions} \seclabel{arbitrary-cond}

\begin{lemma}
\lemmalabel{arbitrary-cond}
Let $R = \mathsf{LHS} \Rightarrow_C \mathsf{RHS}$ be a rewrite rule such that:
\begin{itemize}
    \item[(1)] $R$ contains at most one input tensor, denoted \textsf{X},
    \item[(2)] \textsf{X} has exactly one \agg{}-axis $x : c$,
    \item[(3)] $R$ does not contain any of the following operators: $\reduce$, $\expand$, $\tdot$, $\conv$, $\convbase$, $\relabel$, and
    \item[(4)] The scalar function $\scalarf$ comprises one access to \textsf{X} and $m$ conditions.
\end{itemize}
Then, the following holds:
\[
    \mathsf{max}(m, 1) \leq k \implies \mathsf{valid}(R, \{c \mapsto k\}) \rightarrow \mathsf{valid}(R, \{c \mapsto k\!+\!1\})
\]
\end{lemma}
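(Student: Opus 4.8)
The plan is to prove the contrapositive of the induction step: under $k \geq \mathsf{max}(m,1)$, every counterexample at rank $k{+}1$ can be \emph{lowered} to a counterexample at rank $k$, i.e. $\neg\mathsf{valid}(R,\{c\mapsto k{+}1\}) \rightarrow \neg\mathsf{valid}(R,\{c\mapsto k\})$. Concretely I would instantiate \eqnref{starter-eq} for the \rclass{} $c$ at ranks $k{+}1$ and $k$ and invoke the canonical scalar form of \eqnref{scalar-form}. By hypothesis (3) the rule contains no axis-adding, -removing, or -renaming operators, so the symbolic evaluation stays over the single \agg{}-axis $x:c$, and by hypotheses (1), (2), and (4) the form specializes to a single access term $\mathsf{X}[x \mapsto \fmap(e_1,\mathcal{M}_1)]$ together with $m$ conditions $\fold(g_1,\mathcal{N}_1),\cdots,\fold(g_m,\mathcal{N}_m)$. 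I would then fix a witnessing $(k{+}1)$-ranked counterexample — a tensor $\mathsf{X}^{k+1}$ with \ind{}-axes $\{\underline{1},\cdots,\underline{k{+}1}\}$, a valuation of $\vars^{k+1}$ satisfying $C \wedge \valexp(\mathsf{LHS})$, and an access $A^{k+1}$ with $\neg\scalarf(\cdots)$ — and construct the rank-$k$ analogue by choosing a projection $\Gamma \subset \{\underline{1},\cdots,\underline{k{+}1}\}$ of size $k$ and defining each rank-$k$ component as the $\Gamma$-projection of its rank-$(k{+}1)$ counterpart, using the projection notation of \secref{scalarf}.

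The heart of the argument is to collect the constraints on $\Gamma$ needed for the projected data to remain a counterexample. Since $\scalarf$, each $e_i$, and each $g_j$ are rank-independent, it suffices to force the arguments of $\scalarf$ to agree across the two ranks. For the single access I would set the value of $\mathsf{X}^k$ at the one point $\{x^k \mapsto \fmap(e_1,\mathcal{M}_1^k)\}$ equal to $\mathsf{X}^{k+1}$ at $\{x^{k+1} \mapsto \fmap(e_1,\mathcal{M}_1^{k+1})\}$; because there is only one access, no two access expressions can collide at a shared projected point, so this imposes \emph{no} constraint on $\Gamma$. For each condition $\fold(g_j,\mathcal{N}_j)$ I would run the case analysis on the number of clauses that are \true{} at rank $k{+}1$, exactly as done for the \textsc{PadLowCombine} rule in \secref{bound-example}: if the fold is \true{} then all clauses are \true{} and projection preserves it; if it is \false{} with at least two \false{} clauses then dropping any single \ind{}-axis leaves a \false{} clause; and only when it is \false{} with a unique \false{} clause at some $\underline{l}$ must I require $\underline{l} \in \Gamma$. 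Hence each condition contributes at most one required \ind{}-axis, and the full constraint set $\mathcal{C}$ (the union over the $m$ conditions) satisfies $|\mathcal{C}| \leq m$.

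To finish the construction, I would use that $k \geq \mathsf{max}(m,1) \geq |\mathcal{C}|$ together with $|\Gamma| = k$: I can extend $\mathcal{C}$ by any $k-|\mathcal{C}|$ \ind{}-axes drawn from $\{\underline{1},\cdots,\underline{k{+}1}\} \setminus \mathcal{C}$ (a pool of size $k{+}1-|\mathcal{C}| \geq k-|\mathcal{C}|$) to obtain a valid $\Gamma \supseteq \mathcal{C}$, while $k \geq 1$ rules out an empty \agg{}-axis. It then remains to check that the projected data genuinely witnesses $\neg\mathsf{valid}(R,\{c\mapsto k\})$. The precondition $C$ and the validity assertions collected in $\valexp(\mathsf{LHS})$ are themselves conjunctions of folds that hold in the $(k{+}1)$-ranked counterexample; being all-\true{} folds, they remain \true{} after restriction to $\Gamma$, so the precondition stays satisfied and $\mathsf{LHS}^k$ stays valid. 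The single access value agrees by construction, every condition agrees by the constraint analysis, and therefore $\neg\scalarf$ is preserved, yielding the rank-$k$ counterexample and hence the contrapositive.

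The main obstacle I anticipate is the per-condition case analysis and its bookkeeping: making precise why a \false{} fold with a unique \false{} clause forces exactly one \ind{}-axis into $\Gamma$ while every other case forces none, and checking that this equisatisfiability reasoning composes correctly across all $m$ conditions when the individual constraints are unioned. A secondary care point is verifying that projection preserves both the precondition and the $\valexp(\mathsf{LHS})$ assertions; this reduces to the observation that restricting an all-\true{} fold to a sub-domain stays \true{}, but it must be invoked cleanly since those predicates arise from the operator semantics rather than directly from $\scalarf$.
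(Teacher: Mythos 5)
Your proposal is correct and follows essentially the same route as the paper's proof: the contrapositive counterexample-lowering argument via a size-$k$ projection $\Gamma$, the identical trichotomy for each $\fold$ (all clauses \true{}; \false{} with at least two \false{} clauses; \false{} with a unique \false{} clause forcing one \ind{}-axis into $\Gamma$), the observation that the single access contributes no constraint, and the same bookkeeping $|\mathcal{C}|\leq m$ with $k\geq\mathsf{max}(m,1)$ enabling the extension of $\mathcal{C}$ to $\Gamma$. Your closing care point about the precondition and $\valexp(\mathsf{LHS})$ surviving projection is exactly what the paper isolates as \theoremref{precond-proj} (folds restricted to a sub-domain stay \true{}, lifted through the formula structure by substitution monotonicity), so no gap remains.
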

\begin{proof}
The $\scalarf$ of $R$ has the following canonical form, comprising one access and $m$ conditions:
\[
    \scalarf(\mathsf{X}[x \mapsto \fmap(e, \mathcal{M})],~ \fold(g_1, \mathcal{N}_1), \cdots, \fold(g_m, \mathcal{N}_m))
\]
where $e$ is an \indtrans{} and $g_1, \cdots, g_m$ are condition predicates.

Our goal is to find a sufficient rank $k$ such that any counterexample at rank $k\!+\!1$ can be lowered to a counterexample at rank $k$.
Let $x^{k+1} = \{\underline{1}, \cdots, \underline{k\!+\!1}\}$ denote a ($k\!+\!1$)-ranked instantiation of $x$.
Let $\Gamma \subset x^{k+1}$ be a projection of size $k$.

Starting from \eqnref{starter-eq}, we have:
\begin{align*}
    \exists ~ v^{k+1} \in \vars^{k+1}, C^{k+1} &\wedge \valexp(\mathsf{LHS}^{k+1}) \wedge \exists A^{k+1} \in \access(\mathsf{LHS}^{k+1}), \\
    &\neg\scalarf(\mathsf{X}^{k+1}[x^{k+1} \mapsto \fmap(e, \mathcal{M}^{k+1})],~ \fold(g_1, \mathcal{N}_1^{k+1}), \cdots, \fold(g_m, \mathcal{N}_m^{k+1})) \\
    &\quad\quad\quad\quad\quad\quad\quad{\big\downarrow} \\
    \exists ~ v^{k} \in \vars^{k}, C^k \wedge &\ \valexp(\mathsf{LHS}^{k}) \wedge \exists A^k \in \access(\mathsf{LHS}^{k}), \\
    &\neg\scalarf(\mathsf{X}^k[x^k \mapsto \fmap(e, \mathcal{M}^k)],~ \fold(g_1, \mathcal{N}_1^{k}), \cdots, \fold(g_m, \mathcal{N}_m^{k}))
\end{align*}

We define the $k$-ranked components via projection with $\Gamma$:
\begin{itemize}
    \item $\mathsf{X}^k = \cproject{\mathsf{X}^{k+1}}{x}{\Gamma}$,
    \item $\vars^k = \cproject{\vars^{k+1}}{x}{\Gamma}$,
    \item $A^k = \cproject{A^{k+1}}{x}{\Gamma}$,
    \item $\mathcal{M}^k = \cproject{\mathcal{M}^{k+1}}{x}{\Gamma}$, and
    \item for all $j \in \{1 \cdots m\}$, $\mathcal{N}_j^k = \cproject{\mathcal{N}_j^{k+1}}{x}{\Gamma}$
\end{itemize}

%These components are unknown since $\Gamma$ is unknown.
Since $C^{k+1} \Rightarrow C^k$ and $\valexp(\mathsf{LHS}^{k+1}) \Rightarrow \valexp(\mathsf{LHS}^{k})$, the precondition remains satisfied and the \textsf{LHS} expression remains valid in the $k$-ranked counterexample (\theoremref{precond-proj}).
What remains is ensuring equivalence of $\scalarf$ arguments in both counterexamples and deriving constraints on $\Gamma$:
\begin{itemize}
    \item Conditions: For all $l \in \{1 \cdots m\}$, we require that $\fold(g_l, \mathcal{N}_l^k)$ and $\fold(g_l, \mathcal{N}_l^{k+1})$ must have the same truth value.
    Let $\mathcal{N}_l^{k+1} = [m_1^{k+1}, m_2^{k+1}, \cdots]$.
    Then:
    \[
        \mathcal{N}_l^{k} = \mathcal{N}_l^{k+1}|^x_{\Gamma} = [m_1^{k+1}|^x_{\Gamma}, m_2^{k+1}|^x_{\Gamma}, \cdots] = [m_1^k, m_2^k, \cdots]
    \]
    Let $C_l$ be the set of constraints we get from this equation. We expand the definition of $\fold$:
    \begin{align*}
        \fold(g_l, \mathcal{N}_l^{k+1}) = \fold(g_l, [m_1^{k+1}, m_2^{k+1}, \cdots]) &= \bigwedge_{i = 1}^{k+1} g_l(m_1^{k+1}(\underline{i}), m_2^{k+1}(\underline{i}), \cdots) \\
        \fold(g_l, \mathcal{N}_l^k) = \fold(g_k, [m_1^k, m_2^k, \cdots]) &= \bigwedge_{\underline{j} \in \Gamma} g_l(m_1^{k}(\underline{j}), m_2^{k}(\underline{j}), \cdots)
    \end{align*}
    Let $b = \fold(g_l, \mathcal{N}_l^{k+1})$, which contains $k\!+\!1$ clauses, and $c = \fold(g_l, \mathcal{N}_l^k)$, which contains $k$ clauses.
    The value of $b$ is known since it depends entirely on the $(k\!+\!1)$-ranked counterexample.
    Let $r$ be the number of clauses in $b$ which evaluate to \true{}.
    The remaining $(k\!+\!1) - r$ clauses evaluate to \false{}.
    We perform a case analysis on $r$:
    \begin{itemize}
        \item $r < k$: Then $b = \false$.
        We require $c = \false$.
        We can see that for any size-$k$ projection $\Gamma$, at least one \false{} clause will be included.
        Hence, $c = \false$ always.
        There are no constraints on $\Gamma$, i.e., $C_l = \emptyset$.
        
        \item $r = k$: Then $b = \false$.
        We require $c = \false$.
        There exists exactly one $\underline{j} \in x^{k+1}$, for which the corresponding clause is \false{}.
        To ensure $c = \false$, $\Gamma$ must include the \ind{}-axis $\underline{j}$.
        Leaving out this \ind{}-axis will lead to $c$ being \true{}.
        Therefore, $C_l = \{\underline{j}\}$.
        
        \item $r = k\!+\!1$: Then $b = \true$.
        We require $c = \true$.
        We can see that for any size-$k$ projection $\Gamma$, $c$ will be \true{}.
        There are no constraints on $\Gamma$, i.e., $C_l = \emptyset$.
    \end{itemize}
    
    We get at most one constraint for each $g_l$, i.e., $|C_l| \leq 1$.
    % Since there are $m$ such conditions, all of the constraints need to be adhered to.
    The set of all constraints from the conditions is given by:
    \[
        C = \bigcup_{l = 1}^m C_l
    \]
    
    \item Accesses:
    We require the following,
    \[
        \mathsf{X}^k[x^k \mapsto \fmap(e, \mathcal{M}^k)] = \mathsf{X}^{k+1}[x^{k+1} \mapsto \fmap(e, \mathcal{M}^{k+1})]
    \]
    This can always be satisfied since we can just assign the value of $\mathsf{X}^{k+1}$ at the access $\{x^{k+1} \mapsto \fmap(e, \mathcal{M}^{k+1})\}$ to $\mathsf{X}^k[x^k \mapsto \fmap(e, \mathcal{M}^k)]$, irrespective of the projection.
    Therefore, no constraints on $\Gamma$ are introduced by this equation.
\end{itemize}

The final set of constraints is $C$.
%If $|C| > k$, then we cannot get a valid projection.
To construct a valid projection of size $k$, we require $|C| \leq k$.
We know that $|C| \leq m$, as each condition results in at most one constraint.
We also require $1 \leq k$ to avoid empty \agg{}-axes.
Therefore, $\mathsf{max}(m, 1) \leq k$ is a sufficient condition for a valid counterexample lowering.

Finally, we fully construct the $k$-ranked counterexample as follows:
\begin{itemize}
    \item Projection: The \ind{}-axes in $C$ must be a part of the projection $\Gamma$, but the other axes are unspecified.
    To get the final projection $\Gamma$, extend $C$ by any $(k - |C|)$ \ind{}-axes from $\{\underline{1}, \cdots, \underline{k\!+\!1}\} \mysetminus C$.
    \item Construct all $k$-ranked components as described above using $\Gamma$.
    Performing a projection ensures that the tensor shapes, access maps, and attributes are valid and the precondition is satisfied in the $k$-ranked counterexample (\theoremref{precond-proj}).
    \item Tensor values: Let $v = \mathsf{X}^{k+1}[x^{k+1} \mapsto \fmap(e, \mathcal{M}^{k+1})]$.
    We only require $\mathsf{X}^k$ to have the value $v$ at the access $\{x^k \mapsto \fmap(e, \mathcal{M}^k)\}$, and the values at other accesses are unspecified.
\end{itemize}
Thus, $k = \mathsf{max}(m, 1)$ is a sufficient rank for verifying the rewrite rule $R$.
\end{proof}

\subsection{Sufficient Rank for Arbitrary Number of Accesses} \seclabel{arbitrary-access}

\begin{lemma}
\lemmalabel{arbitrary-access}
Let $R = \mathsf{LHS} \Rightarrow_C \mathsf{RHS}$ be a rewrite rule such that:
\begin{itemize}
    \item[(1)] $R$ contains at most one input tensor, denoted \textsf{X},
    \item[(2)] \textsf{X} has exactly one \agg{}-axis $x : c$,
    \item[(3)] $R$ does not contain any of the following operators: $\reduce$, $\expand$, $\tdot$, $\conv$, $\convbase$, $\relabel$, and
    \item[(4)] The scalar function $\scalarf$ comprises $n$ accesses to \textsf{X} and $m$ conditions.
\end{itemize}
Then, the following holds:
\[
    \mathsf{max}(m + \binom{n}{2}, 1) \leq k \Rightarrow \mathsf{valid}(R, \{c \mapsto k\}) \rightarrow \mathsf{valid}(R, \{c \mapsto k\!+\!1\})
\]
\end{lemma}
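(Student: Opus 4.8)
The plan is to mirror the proof of \lemmaref{arbitrary-cond} and establish the contrapositive of the induction step: from a counterexample to $R$ at rank $k+1$ I would construct one at rank $k$ by \emph{projecting} the $(k+1)$-ranked \agg{}-axis $x^{k+1} = \{\underline{1}, \cdots, \underline{k+1}\}$ onto a size-$k$ subset $\Gamma$. By the canonical form developed in \secref{scalarf}, the falsified predicate $\neg\scalarf(\cdots)$ unfolds to
\[
    \scalarf(\mathsf{X}[x \mapsto \fmap(e_1, \mathcal{M}_1)], \cdots, \mathsf{X}[x \mapsto \fmap(e_n, \mathcal{M}_n)], \fold(g_1, \mathcal{N}_1), \cdots, \fold(g_m, \mathcal{N}_m))
\]
with $n$ accesses and $m$ conditions. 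As before, I would define every $k$-ranked component ($\mathsf{X}^k$, $\vars^k$, $A^k$, and each $\mathcal{M}_i^k$, $\mathcal{N}_j^k$) as the $\Gamma$-projection of its $(k+1)$-ranked counterpart, and invoke \theoremref{precond-proj} so that the precondition $C^k$ and $\valexp(\mathsf{LHS}^k)$ are inherited automatically. What then remains is to choose $\Gamma$ so that every argument passed to $\scalarf$ retains its rank-$(k+1)$ value.

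The condition arguments require no new work: the case analysis on the number $r$ of true clauses in each $\fold(g_l, \mathcal{N}_l^{k+1})$ is identical to \lemmaref{arbitrary-cond} and shows that preserving the truth value of each $\fold$ forces at most one \ind{}-axis into $\Gamma$. Hence the $m$ conditions jointly yield a constraint set $C_{\mathrm{cond}}$ with $|C_{\mathrm{cond}}| \leq m$, and I would cite this rather than repeat it.

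The genuinely new ingredient, and the main obstacle, is the interaction among the $n$ accesses, which was vacuous when $n=1$. Each access $i$ imposes the equation $\mathsf{X}^k[x^k \mapsto \fmap(e_i, \mathcal{M}_i^k)] = \mathsf{X}^{k+1}[x^{k+1} \mapsto \fmap(e_i, \mathcal{M}_i^{k+1})] =: v_i$, so I am prescribing $n$ values $v_1, \cdots, v_n$ at the projected access points of $\mathsf{X}^k$. This assignment is well-defined \emph{only} when no two access points carrying \emph{different} prescribed values are collapsed onto the same point by the projection. I would therefore analyze the $\binom{n}{2}$ pairs $(i,j)$: if the $(k+1)$-ranked access maps $\fmap(e_i, \mathcal{M}_i^{k+1})$ and $\fmap(e_j, \mathcal{M}_j^{k+1})$ coincide, or if they differ but $v_i = v_j$, the pair is harmless and contributes nothing; otherwise the two maps disagree on at least one \ind{}-axis $\underline{l}_{ij}$, and forcing that single witnessing axis into $\Gamma$ keeps the projected accesses distinct, since projection preserves an inequality as soon as one differing coordinate survives. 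Thus every pair contributes at most one \ind{}-axis, giving an access-constraint set $C_{\mathrm{acc}}$ with $|C_{\mathrm{acc}}| \leq \binom{n}{2}$.

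Finally, setting $C = C_{\mathrm{cond}} \cup C_{\mathrm{acc}}$ gives $|C| \leq m + \binom{n}{2}$. Whenever $k \geq \max(m + \binom{n}{2}, 1)$, I can extend $C$ to a size-$k$ subset $\Gamma \subset x^{k+1}$ (the floor at $1$ rules out an empty \agg{}-axis), define all $k$-ranked components by $\Gamma$-projection, and populate $\mathsf{X}^k$ with the values $v_i$ at the surviving access points while leaving the rest arbitrary. By construction every $\scalarf$-argument agrees with its rank-$(k+1)$ value, so $\neg\scalarf(\cdots)$ still holds and a rank-$k$ counterexample exists, which proves the contrapositive. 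The delicate point to get exactly right is the consistency argument for the value assignment: I must argue carefully that the only obstructions are clashes between distinct-valued accesses and that each such clash is discharged by retaining a single differing coordinate, not an entire family of axes, which is what keeps the per-pair contribution at one.
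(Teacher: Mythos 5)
Your proposal is correct and follows essentially the same route as the paper's proof: contrapositive counterexample lowering via a size-$k$ projection $\Gamma$, reuse of the condition analysis from \lemmaref{arbitrary-cond} (at most $m$ constraints), a pairwise consistency analysis of the $n$ accesses yielding at most $\binom{n}{2}$ constraints, and a union bound to conclude $\mathsf{max}(m + \binom{n}{2}, 1) \leq k$ suffices. The only cosmetic difference is that you always force one witnessing differing axis into $\Gamma$ for each unequal-valued access pair, whereas the paper's case analysis on the set $E$ of agreeing coordinates shows this is needed only when exactly one differing axis exists; both give at most one constraint per pair and the identical bound.
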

\begin{proof}
The $\scalarf$ of $R$ has the following canonical form, comprising $n$ accesses and $m$ conditions:
\begin{align*}
    \scalarf(&\mathsf{X}[x \mapsto \fmap(e_1, \mathcal{M}_1)], \cdots, \mathsf{X}[x \mapsto \fmap(e_n, \mathcal{M}_n)], ~ \\
    &\fold(g_1, \mathcal{N}_1), \cdots, \fold(g_m, \mathcal{N}_m))
\end{align*}
where $e_1, \cdots, e_n$ \indtrans{}s and $g_1, \cdots, g_m$ are condition predicates.

Our goal is to find a sufficient rank $k$ such that any counterexample at rank $k\!+\!1$ can be lowered to a counterexample at rank $k$.
Let $x^{k+1} = \{\underline{1}, \cdots, \underline{k\!+\!1}\}$ denote a ($k\!+\!1$)-ranked instantiation of $x$.
Let $\Gamma \subset x^{k+1}$ be a projection of size $k$.

Starting from \eqnref{starter-eq}, we have:
\begin{align*}
    & \exists ~ v^{k+1} \in \vars^{k+1}, C^{k+1} \wedge \valexp(\mathsf{LHS}^{k+1}) \wedge \exists A^{k+1} \in \access(\mathsf{LHS}^{k+1}), \\
    &\quad\quad\quad\quad\neg\scalarf(\mathsf{X}^{k+1}[x^{k+1} \mapsto \fmap(e_1, \mathcal{M}_1^{k+1})], \cdots, \mathsf{X}^{k+1}[x^{k+1} \mapsto \fmap(e_n, \mathcal{M}_n^{k+1})], \\
    & \quad\quad\quad\quad\quad\quad\qquad\fold(g_1, \mathcal{N}_1^{k+1}), \cdots, \fold(g_m, \mathcal{N}_m^{k+1})) \\
    &\phantom{\scalarf(\mathsf{X}^{k+1}[x^{k+1} \mapsto \fmap(e_1, \mathcal{M}_1^{k+1})], \cdots}{\big\downarrow} \\
    &\exists ~ v^k \in \vars^k, C^k \wedge \valexp(\mathsf{LHS}^k) \wedge \exists A^k \in \access(\mathsf{LHS}^{k}), \\
    &\quad\quad\quad\quad\neg\scalarf(\mathsf{X}^{k}[x^k \mapsto \fmap(e_1, \mathcal{M}_1^k)],\cdots, \mathsf{X}^{k}[x^k \mapsto \fmap(e_n, \mathcal{M}_n^k)], \\
    &\quad\quad\quad\quad\quad\quad\qquad\fold(g_1, \mathcal{N}_1^k), \cdots, \fold(g_m, \mathcal{N}_m^k))
\end{align*}

We define the $k$-ranked components via projection with $\Gamma$:
\begin{itemize}
    \item $\mathsf{X}^k = \cproject{\mathsf{X}^{k+1}}{x}{\Gamma}$,
    \item $\vars^k = \cproject{\vars^{k+1}}{x}{\Gamma}$,
    \item $A^k = \cproject{A^{k+1}}{x}{\Gamma}$,
    \item for all $i \in \{1 \cdots n\}$, $\mathcal{M}_i^k = \cproject{\mathcal{M}_i^{k+1}}{x}{\Gamma}$, and
    \item for all $j \in \{1 \cdots m\}$, $\mathcal{N}_j^k = \cproject{\mathcal{N}_j^{k+1}}{x}{\Gamma}$
\end{itemize}

Since $C^{k+1} \Rightarrow C^k$ and $\valexp(\mathsf{LHS}^{k+1}) \Rightarrow \valexp(\mathsf{LHS}^{k})$, the precondition remains satisfied and the \textsf{LHS} expression remains valid in the $k$-ranked counterexample (\theoremref{precond-proj}).
What remains is ensuring equivalence of $\scalarf$ arguments in both counterexamples and deriving constraints on $\Gamma$:
\begin{itemize}
    \item Conditions: For all $l \in \{1 \cdots m\}$, we require that $\fold(g_l, \mathcal{N}_l^k)$ and $\fold(g_l, \mathcal{N}_l^{k+1})$ must have the same truth value.
    We assume the conditions to be independent of the accesses in the worst case.
    Therefore, the constraints as given by $C_1 = \bigcup_{l = 1}^m C_l$ still hold (\lemmaref{arbitrary-cond}).

    \item Accesses: For all $l \in \{1 \cdots n\}$,  we require:
    \[
        \mathsf{X}^k[x^k \mapsto \fmap(e_l, \mathcal{M}_l^k)] = \mathsf{X}^{k+1}[x^{k+1} \mapsto \fmap(e_l, \mathcal{M}_l^{k+1})]
    \]
    Let $v_l$ be the value of the tensor $X^{k+1}$ at the access $\{x^{k+1} \mapsto \fmap(e_l, \mathcal{M}_l^{k+1})\}$.
    The projection $\Gamma$ should not lead to any inconsistency with respect to the elements of $\mathsf{X}^k$.
    
    For example, if $v_1 \neq v_2$, then these values correspond to different accesses in the tensor $\mathsf{X}^{k+1}$, i.e.,
    $\fmap(e_1, \mathcal{M}_1^{k+1}) \neq \fmap(e_2, \mathcal{M}_2^{k+1})$.
    Therefore, there exists some $j \in \{1 \cdots (k\!+\!1)\}$ such that $e_1(\mathcal{M}_1^{k+1}(\underline{j})) \neq e_2(\mathcal{M}_2^{k+1}(\underline{j}))$, i.e., the access indices are unequal.
    
    We require that $\fmap(e_1, \mathcal{M}_1^{k})$ and $\fmap(e_2, \mathcal{M}_2^{k})$ are different accesses in $\mathsf{X}^k$, otherwise the same access in $\mathsf{X}^k$ would be assigned two different values $v_1$ and $v_2$ in the counterexample lowering.
    Selecting $\underline{j}$ as one of the \ind{}-axes in the projection will make sure that elements of $\mathsf{X}^k$ are consistent.
    
    Consider any arbitrary pair of values $v_r$ and $v_s$, where $r,s \in \{1 \cdots n\}$ and $r \neq s$.
    These correspond to two accesses in $\mathsf{X}^{k+1}$.
    There are $\binom{n}{2}$ such access pairs.
    Let $C_{r,s}$ be the set of constraints we get from this pair.
    The following cases arise:
    \begin{itemize}
        \item $v_r = v_s$: Then any projection will result in a valid counterexample since the lowering will not lead to any inconsistency. 
        For this case, $C_{r,s} = \emptyset$
        \item $v_r \neq v_s$: Then these values correspond to different accesses in $\mathsf{X}^{k+1}$, i.e.,
        \begin{equation*}
            \fmap(e_r, \mathcal{M}_r^{k+1}) \neq \fmap(e_s, \mathcal{M}_s^{k+1})
        \end{equation*}
        We require that they also correspond to different accesses in $\mathsf{X}^k$, i.e.,
        \begin{equation} \eqnlabel{ensure}
            \fmap(e_r, \mathcal{M}_r^k) \neq \fmap(e_r, \mathcal{M}_s^k)
        \end{equation}

        Let $E = \{\underline{l} \in x^{k+1} ~ | ~ e_r(\mathcal{M}_r^{k+1}(\underline{l})) = e_s(\mathcal{M}_s^{k+1}(\underline{l}))\}$ denote the set of \ind{}-axes having the same index value in accesses $r$ and $s$.
        The following cases arise:
        
        \begin{itemize}
            \item $|E| < k$: Then any size-$k$ projection will ensure that \eqnref{ensure} holds, since there exists at least one \ind{}-axis with different index values.
            Therefore, $C_{r,s} = \emptyset$.
            
            \item $|E| = k$: Then there exists exactly one $\underline{l} \in x^{k+1}$ for which $e_r(\mathcal{M}_r^{k+1}(\underline{l})) \neq e_s(\mathcal{M}_s^{k+1}(\underline{l}))$.
            To make sure that \eqnref{ensure} holds, $\underline{l}$ needs to be in the projection.
            If we leave this \ind{}-axis, then the accesses $r$ and $s$ will be the same in the tensor $\mathsf{X}^k$, but would be assigned two different values $v_r$ and $v_s$.
            Therefore, $C_{r,s} = \{\underline{l}\}$.
            
            \item $|E| = k\!+\!1$: This means that the accesses $r$ and $s$ are the same in $\mathsf{X}^{k+1}$.
            This implies that $v_r = v_s$, which is a contradiction.
            Therefore, such a case cannot arise.
        \end{itemize}
    \end{itemize}
\end{itemize}

% Each pair $(r, s)$ contributes at most one constraint, hence $|C_2| \leq \binom{n}{2}$.
% Since there are $\binom{n}{2}$ such pairs, all of the constraints need to be adhered to.
The set of all constraints from accesses is given by:
\[
    C_2 = \bigcup_{(r,s)} C_{r,s}
\]

We know that $|C_2| \leq \binom{n}{2}$, as each access pair results in at most one constraint.
The final set of constraints is $C = C_1 \cup C_2$.
The set of constraints $C$ is bounded by:
\begin{align*}
    |C| = |C_1 \cup C_2| &\leq |C_1| + |C_2| \leq m + \binom{n}{2}
\end{align*}

To construct a valid projection of size $k$, we require $|C| \leq k$.
We also require $1 \leq k$ to avoid empty \agg{}-axes.
Therefore, $\mathsf{max}(m + \binom{n}{2}, 1) \leq k$ is a sufficient condition for a valid counterexample lowering.

Finally, we fully construct the $k$-ranked counterexample as follows:
\begin{itemize}
    \item Projection: The \ind{}-axes in $C$ must be a part of the projection $\Gamma$, but the other axes are unspecified.
    To get the final projection $\Gamma$, extend $C$ by any $(k - |C|)$ \ind{}-axes from $\{\underline{1}, \cdots, \underline{k\!+\!1}\} \mysetminus C$.
    \item Construct all $k$-ranked components as described above using $\Gamma$.
    Performing a projection ensures that the tensor shapes, access maps, and attributes are valid and the precondition is satisfied in the $k$-ranked counterexample (\theoremref{precond-proj}).
    \item Tensor values: Let $v_l = \mathsf{X}^{k+1}[x^{k+1} \mapsto \fmap(e_l, \mathcal{M}_l^{k+1})]$ be the value at the $l^{\text{th}}$ access to $\mathsf{X}^{k+1}$.
    We only require $\mathsf{X}^k$ to have the value $v_l$ at the access $\{x^k \mapsto \fmap(e_l, \mathcal{M}_l^k)\}$, and the values at other accesses are unspecified.
    This should hold for all $l \in \{1 \cdots n\}$.
\end{itemize}
Thus, $k = \mathsf{max}(m + \binom{n}{2}, 1)$ is a sufficient rank for verifying the rewrite rule $R$.
\end{proof}

\subsection{Sufficient Rank for Arbitrary Number of Aggregated-Axes} \seclabel{arbitrary-agg}

\begin{lemma}
\lemmalabel{arbitrary-agg}
Let $R = \mathsf{LHS} \Rightarrow_C \mathsf{RHS}$ be a rewrite rule such that:
\begin{itemize}
    \item[(1)] $R$ contains at most one input tensor, denoted \textsf{X},
    \item[(2)] All \agg{}-axes appearing in the rule have the same \rclass{}, denoted $c$,
    \item[(3)] $R$ does not contain any of the following operators: $\reduce$, $\tdot$, $\conv$, $\convbase$, and
    \item[(4)] The scalar function $\scalarf$ comprises $n$ accesses to \textsf{X} and $m$ conditions.
\end{itemize}
Then, the following holds:
\[
    \mathsf{max}(m + \binom{n}{2}, 1) \leq k \Rightarrow \mathsf{valid}(R, \{c \mapsto k\}) \rightarrow \mathsf{valid}(R, \{c \mapsto k\!+\!1\})
\]
\end{lemma}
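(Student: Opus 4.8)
The plan is to mirror the argument of \lemmaref{arbitrary-access}, now working with the \emph{fully general} canonical form of \secref{scalarf} in which each access to \textsf{X} carries one access map per \agg{}-axis,
\[
    \mathsf{X}[x_1 \mapsto \fmap({}^1e_i, {}^1\mathcal{M}_i), \cdots, x_p \mapsto \fmap({}^pe_i, {}^p\mathcal{M}_i)],
\]
and each \cond{} is a conjunction of $\fold$s, $\bigwedge_{i=1}^p \fold({}^ig_j, {}^i\mathcal{N}_j)$, one per \agg{}-axis. The decisive observation is that since \emph{all} \agg{}-axes share the \rclass{} $c$, instantiating $c$ to rank $r$ forces every \agg{}-axis $x_1, \cdots, x_p$ to rank $r$, and the canonical bijection $\mathsf{MapAxes}(c, \cdot, \cdot)$ lets me index the \ind{}-axes of all \agg{}-axes by a single common position set $\{\underline 1, \cdots, \underline r\}$. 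This is precisely what licenses $\expand$ and $\relabel$ here: the former introduces a fresh \agg{}-axis which, by hypothesis (2), again lies in $c$, and the latter merely renames \agg{}-axes within $c$; both keep every expression inside the canonical form above.

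Given a counterexample at rank $k\!+\!1$, I would take a single \emph{coupled} projection $\Gamma \subset \{\underline 1, \cdots, \underline{k\!+\!1}\}$ of size $k$ and apply it simultaneously to every \agg{}-axis, transporting $\Gamma$ across the $x_i$ through $\mathsf{MapAxes}$ and forming $\cproject{\cdot}{x_i}{\Gamma}$ for each. As in \lemmaref{arbitrary-access}, preservation of the precondition and of $\valexp(\mathsf{LHS})$ under this projection is delegated to \theoremref{precond-proj}, so the only real work is to equate the arguments of $\scalarf$ across the two ranks and to bound the resulting constraints on $\Gamma$.

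The heart of the proof is to re-run the case analysis of \lemmaref{arbitrary-cond} and \lemmaref{arbitrary-access} against the multi-\agg{}-axis structure. A \cond{} $\bigwedge_{i=1}^p \fold({}^ig_l, {}^i\mathcal{N}_l)$ is a conjunction of $p(k\!+\!1)$ scalar clauses, and removing the single position $\underline{j_0} \notin \Gamma$ deletes exactly the $p$ clauses sitting at $\underline{j_0}$. Hence the projected \cond{} can change truth value only when \emph{every} false clause of the full \cond{} is concentrated at one position $\underline{j_0}$; there is at most one such position, giving at most one constraint per \cond{}, so $|C_1| \le m$. Symmetrically, two accesses $r \ne s$ with $v_r \ne v_s$ must differ at some position $\underline j$ in at least one \agg{}-axis, and the projection collapses them onto the same point only if they differ at a \emph{unique} position, yielding at most one constraint per pair and $|C_2| \le \binom{n}{2}$. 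Taking $C = C_1 \cup C_2$ gives $|C| \le m + \binom{n}{2}$, and extending $C$ to any size-$k$ projection (requiring $k \ge 1$ to avoid an empty \agg{}-axis) completes the lowering, establishing $\mathsf{max}(m + \binom{n}{2}, 1) \le k$ as sufficient.

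I expect the main obstacle to be exactly this generalization of the per-\cond{} constraint count to the conjunction-of-$\fold$s form: I must argue rigorously that a coupled projection, deleting a whole position slice across all \agg{}-axes at once, cannot spawn more than one constraint from a single conjunctive condition (and likewise one per access pair). The ``single-position concentration'' argument above is the crux, and it hinges on the shared \rclass{} forcing the projection to act uniformly — one position index removed from every \agg{}-axis together — rather than independently per axis, which would otherwise inflate the constraint count beyond $m + \binom{n}{2}$.
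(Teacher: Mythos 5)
Your proposal is correct, and its overall strategy — the contrapositive induction step, counterexample lowering by projection, and the count of at most one constraint per \cond{} and per access pair yielding $m + \binom{n}{2}$ — is the same as the paper's. The genuine difference is in the bookkeeping for multiple \agg{}-axes. The paper keeps a \emph{separate} projection $\Gamma_i$ for each \agg{}-axis $x_i$, derives per-axis constraint sets ${}^iC$ and ${}^iD$ whose total size is bounded by $m+\binom{n}{2}$, and must then reconcile these sets so that $\mu_{i,j}^{k+1}(\Gamma_i)=\Gamma_j$ holds; this requires the \textsc{ComputeProjections} routine (\algref{expand-proj}), which closes each constraint set under the bijection images of the others, together with a separate argument that the closed sets $F_j'$ still have size at most $k$. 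You instead exploit from the outset that the canonical bijections identify the \ind{}-axes of all \agg{}-axes with a single common position set, and take one \emph{coupled} projection $\Gamma$ of positions; constraints then live directly in that common set, the union bound $|C|\le m+\binom{n}{2}$ applies immediately, and the bijection-consistency laws hold by construction, so no closure step is needed. Your clause-level concentration argument (a projected \cond{} can flip only if every false clause sits at the one removed position, and two distinct accesses can collapse only if every differing position is the removed one) is a sound, slightly sharper restatement of the paper's per-fold and per-axis case analysis, and it delivers the same counts. What the paper's per-axis formulation buys in exchange is that it transfers essentially verbatim to \lemmaref{arbitrary-rclass}, where \agg{}-axes of other \rclasses{} are left unprojected and conditions and accesses carry extra invariant factors; your coupled view extends there too, but only the axes of \rclass{} $c$ may be coupled and the invariance of the remaining factors must be argued explicitly. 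One small point to tighten: the lowering should also state the value assignment for $\mathsf{X}^k$ (set it at the image of the $l$-th access to $v_l$, leaving other entries unspecified), whose consistency is exactly what your access-pair constraints guarantee — the paper spells this step out.
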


\begin{proof}
Let $x_1, \cdots, x_p$ denote all the \agg{}-axes that appear in the rewrite rule $R$, each belonging to the same \rclass{} $c$ by assumption.
Among these, let $x_1, \cdots, x_h$ be the \agg{}-axes that index the input tensor $\mathsf{X}$.
The remaining \agg{}-axes $x_{h+1}, \cdots, x_p$ occur as arguments to operators such as $\expand$ but do not index $\mathsf{X}$ directly.
Since all \agg{}-axes belong to the same class $c$, they are instantiated to the same rank in any valid instantiation of the rule.

The $\scalarf$ of $R$ has the following canonical form, comprising $n$ accesses and $m$ conditions:
\begin{align*}
    \scalarf(&\mathsf{X}[x_1 \mapsto \fmap({}^1e_1, {}^1\mathcal{M}_1), \cdots, x_h \mapsto \fmap({}^he_1, {}^h\mathcal{M}_1)], \cdots, \\
    & \phantom{\cdots}\mathsf{X}[x_1 \mapsto \fmap({}^1e_n, {}^1\mathcal{M}_n), \cdots, x_h \mapsto \fmap({}^he_n, {}^h\mathcal{M}_n)],\\
    & \phantom{\cdots\cdots} \bigwedge_{j = 1}^p \fold({}^jg_1, {}^j\mathcal{N}_1), \cdots, \bigwedge_{j = 1}^p \fold({}^jg_m, {}^j\mathcal{N}_m))
\end{align*}
where each ${}^je_i$ is an \indtrans{} and each ${}^jg_i$ is a condition predicate.

Our goal is to find a sufficient rank $k$ for the \rclass{} $c$, such that any counterexample at rank $k\!+\!1$ can be lowered to a counterexample at rank $k$.
Let $x_i^{k+1} = \{\underline{1}^i, \cdots, \underline{k\!+\!1}^i\}$ denote a $(k\!+\!1)$-ranked instantiation of each \agg{}-axis $x_i$.
Since all \agg{}-axes belong to the same \rclass{} $c$, for all $i, j \in \{1 \cdots p\}$, there exists a canonical bijection:
\[
    \mu_{i,j}^{k+1} = \mathsf{MapAxes}(c, x_i^{k+1}, x_j^{k+1}) : x_i^{k+1} \leftrightarrow x_j^{k+1}
\]

To construct a valid counterexample, we project each \agg{}-axis $x_i$ since they all have the same \rclass{} and are instantiated to the same rank.
Let $\Gamma_i \subset x_i^{k+1}$ be a projection of size $k$ for $x_i$.
These projections must be pairwise disjoint and must respect the canonical bijections:
\begin{equation} \eqnlabel{ensure-proj}
    \forall i, j \in \{1 \cdots p\}, \quad \mu_{i,j}^{k+1}(\Gamma_i) = \Gamma_j
\end{equation}

Starting from \eqnref{starter-eq}, we have:
\begin{align*}
    & \exists ~ v^{k+1} \in \vars^{k+1}, C^{k+1} \wedge \valexp(\mathsf{LHS}^{k+1}) \wedge \exists A^{k+1} \in \access(\mathsf{LHS}^{k+1}), \\
    &\phantom{\cdots\cdots\cdots\cdots}\neg\scalarf(v_1, \cdots, v_n, \, b_1, \cdots, b_m) \\
    &\phantom{\scalarf(\mathsf{X}^{k+1}[x^{k+1} \mapsto \fmap}{\big\downarrow} \\
    &\qquad\exists ~ v^k \in \vars^k, C^k \wedge \valexp(\mathsf{LHS}^k) \wedge \exists A^k \in \access(\mathsf{LHS}^{k}), \\
    &\phantom{\cdots\cdots\cdots\cdots}\neg\scalarf(w_1, \cdots, w_n, \, c_1, \cdots, c_m)
\end{align*}
where,
\begin{itemize}
    \item For each $l \in \{1 \cdots n\}$, $v_l$ denotes the $l^{\text{th}}$ access to $\mathsf{X}^{k+1}$ in the $(k\!+\!1)$-ranked counterexample:
    \[
        v_l = \mathsf{X}^{k+1}[x_1^{k+1} \mapsto \fmap({}^1e_l, {}^1\mathcal{M}_l^{k+1}), \cdots, x_h^{k+1} \mapsto \fmap({}^he_l, {}^h\mathcal{M}_l^{k+1})]
    \]
    \item For each $l \in \{1 \cdots n\}$, $w_l$ denotes the $l^{\text{th}}$ access to $\mathsf{X}^{k}$ in the $k$-ranked counterexample:
    \[
        w_l = \mathsf{X}^{k}[x_1^{k} \mapsto \fmap({}^1e_l, {}^1\mathcal{M}_l^{k}), \cdots, x_h^{k} \mapsto \fmap({}^he_l, {}^h\mathcal{M}_l^{k})]
    \]
    \item For each $l \in \{1 \cdots m\}$, $b_l$ denotes the $l^{\text{th}}$ condition in the $(k\!+\!1)$-ranked counterexample:
    \[
        b_l = \bigwedge_{j = 1}^p \fold({}^jg_l, {}^j\mathcal{N}_l^{k+1})
    \]
    \item For each $l \in \{1 \cdots m\}$, $c_l$ denotes the $l^{\text{th}}$ condition in the $k$-ranked counterexample:
    \[
        c_l = \bigwedge_{j = 1}^p \fold({}^jg_l, {}^j\mathcal{N}_l^{k})
    \]
\end{itemize}

We define the $k$-ranked components via projections with $\Gamma_1, \cdots, \Gamma_p$:
\begin{itemize}
    \item $\mathsf{X}^k = \cproject{\mathsf{X}^{k+1}}{x_1, \cdots, x_p}{\Gamma_1, \cdots, \Gamma_p}$,
    \item $\vars^k = \cproject{\vars^{k+1}}{x_1, \cdots, x_p}{\Gamma_1, \cdots, x_p}$,
    \item $A^k = \cproject{A^{k+1}}{x_1,\cdots,x_p}{\Gamma_1,\cdots,\Gamma_p}$,
    \item for all $l \in \{1 \cdots n\}, i \in \{1 \cdots h\}$, ${}^i\mathcal{M}_l^k = \cproject{{}^i\mathcal{M}_l^{k+1}}{x_i}{\Gamma_i}$,
    \item for all $l \in \{1 \cdots m\}, j \in \{1 \cdots p\}$, ${}^j\mathcal{N}_l^k = \cproject{{}^j\mathcal{N}_l^{k+1}}{x_j}{\Gamma_j}$, and
    \item for all $i, j \in \{1 \cdots p\}$, the canonical bijection $\mu_{i,j}^k = \mathsf{MapAxes}(c, \Gamma_i, \Gamma_j)$ is simply $\cproject{\mu_{i,j}^{k+1}}{x_i}{\Gamma_i}$.
    The resulting canonical bijections must satisfy:
    \begin{itemize}
        \item $\mu_{i,j}^k \circ \mu_{j,i}^k = \mathsf{id}$, and
        \item $\mu_{j,l}^k \circ \mu_{i,j}^k = \mu_{i,l}^k$.
    \end{itemize}
    To ensure that these properties hold, we must ensure that \eqnref{ensure-proj} is satisfied.
\end{itemize}

Since $C^{k+1} \Rightarrow C^k$ and $\valexp(\mathsf{LHS}^{k+1}) \Rightarrow \valexp(\mathsf{LHS}^{k})$, the precondition remains satisfied and the \textsf{LHS} expression remains valid in the $k$-ranked counterexample (\theoremref{precond-proj}).
What remains is ensuring equivalence of $\scalarf$ arguments in both counterexamples and deriving constraints on $\Gamma_1, \cdots, \Gamma_p$:
\begin{itemize}
    \item Conditions: For all $l \in \{1 \cdots m\}$, $c_l = \bigwedge_{j = 1}^p \fold({}^jg_l, {}^j\mathcal{N}_l^k)$ and $b_l = \bigwedge_{j = 1}^p \fold({}^jg_l, {}^j\mathcal{N}_l^{k+1})$ must have the same truth value.
    Let ${}^1C_l, \cdots, {}^pC_l$ be the set of constraints we get on $\Gamma_1, \cdots, \Gamma_p$ from this equation, respectively.
    
    The value of $b_l$ is known since it depends entirely on the $(k\!+\!1)$-ranked counterexample.
    Both $b_l$ and $c_l$ contain $p$ $\fold$s.
    Let $r$ be the number of $\fold$s in $b_l$ that evaluate to \true{}.
    The remaining $p - r$ $\fold$s evaluate to \false{}.
    We perform a case analysis on $r$:
    \begin{itemize}
        \item $r = p$: Then $b_l = \true$.
        We require $c_l = \true$.
        We can see that for any size-$k$ projections $\Gamma_1, \cdots, \Gamma_p$, all $\fold$s in $c_l$ will be \true{}.
        Therefore, ${}^1C_l = \cdots = {}^pC_l = \emptyset$.
        
        \item $r < p$: Then $b_l = \false$.
        We require $c_l = \false$.
        This implies that there exists some \agg{}-axis, say $x_i^{k+1}$, whose $\fold$ evaluates to \false{} in $b_l$, i.e., $\fold({}^ig_l, {}^i\mathcal{N}_l^{k+1}) = \false$.
        We ensure that the $\fold$ of $x_i^k$, i.e., $\fold({}^ig_l, {}^i\mathcal{N}_l^{k})$ also evaluates to \false{}, resulting in $c_l = \false$.
        This is similar to \lemmaref{arbitrary-cond}, and we get at most one constraint on $\Gamma_i$ in ${}^iC_l$ and no constraints on other projections.
        Therefore, $|{}^iC_l| \leq 1$ and ${}^jC_l = \emptyset$ for all $j \neq i$.
    \end{itemize}
    Each condition contributes at most one constraint to exactly one projection.
    % Since there are $m$ such conditions, all of the constraints need to be adhered to.
    The set of all constraints through the conditions is given by, where ${}^iC$ denotes the constraints on $\Gamma_i$:
    \begin{gather*}
        {}^1C = \bigcup_{l = 1}^m {}^1C_l, \ \cdots \ , {}^pC = \bigcup_{l = 1}^m {}^pC_l
    \end{gather*}
    Since we get at most $m$ constraints partitioned into these $p$ sets, we conclude that the total number of constraints is bounded by:
    \[
        |{}^1C| + \ \cdots \ + |{}^pC| \leq m
    \]

    \item Accesses: For all $l \in \{1 \cdots n\}$, $w_l$ must be the same as $v_l$.
    The projections $\Gamma_1, \cdots, \Gamma_p$ should not lead to inconsistencies with respect to the elements of $\mathsf{X}^k$.
    
    Consider any arbitrary pair of values $v_r$ and $v_s$, where $r,s \in \{1 \cdots n\}$ and $r \neq s$.
    These correspond to two accesses to $\mathsf{X}^{k+1}$.
    There are $\binom{n}{2}$ such access pairs.
    Let ${}^1D_{r,s}, \cdots, {}^hD_{r,s}$ be the constraints we get on $\Gamma_1, \cdots, \Gamma_h$ from this equation, respectively.
    We exclude ${}^{h+1}D_{r,s}, \cdots, {}^pD_{r,s}$ from the analysis because the corresponding \agg{}-axes do not index the tensor \textsf{X}, and hence do not introduce any inconsistencies.
    Thus, their constraints are equal to $\emptyset$.
    The following cases arise:
    \begin{itemize}
        \item $v_r = v_s$: Then any projection will result in a valid counterexample since the lowering will not lead to any inconsistency. 
        For this case, ${}^iD_{r,s} = \emptyset$ for all $i$.
        \item $v_r \neq v_s$: Then these values correspond to different accesses in $\mathsf{X}^{k+1}$, i.e.,
        \begin{align*}
            \{x_1^{k+1} \mapsto \fmap({}^1e_r, {}^1\mathcal{M}_r^{k+1}), \cdots, x_h^{k+1} \mapsto \fmap({}^he_r, {}^h\mathcal{M}_r^{k+1})\} \neq \\
            \{x_1^{k+1} \mapsto \fmap({}^1e_s, {}^1\mathcal{M}_s^{k+1}), \cdots, x_h^{k+1} \mapsto \fmap({}^he_s, {}^h\mathcal{M}_s^{k+1})\}
        \end{align*}
        We require that they also correspond to different accesses in $\mathsf{X}^k$, i.e.,
        \begin{align} \eqnlabel{ensure-agg}
            \{x_1^{k} \mapsto \fmap({}^1e_r, {}^1\mathcal{M}_r^{k}), \cdots, x_h^{k} \mapsto \fmap({}^he_r, {}^h\mathcal{M}_r^{k})\} \neq \nonumber \\
            \{x_1^{k} \mapsto \fmap({}^1e_s, {}^1\mathcal{M}_s^{k}), \cdots, x_h^{k} \mapsto \fmap({}^he_s, {}^h\mathcal{M}_s^{k})\}
        \end{align}
        
        Since the accesses do not match in $\mathsf{X}^{k+1}$, there exists some \agg{}-axis, say $x_i^{k+1}$, for $i \in \{1 \cdots h\}$, such that its corresponding access maps in the two accesses are unequal:
        \[
            \fmap({}^ie_r, {}^i\mathcal{M}_r^{k+1}) \neq \fmap({}^ie_s, {}^i\mathcal{M}_s^{k+1})
        \]
        We ensure that the access maps of $x_i^k$ are unequal in the two accesses, to ensure that \eqnref{ensure-agg} holds:
        \[
            \fmap({}^ie_r, {}^i\mathcal{M}_r^{k}) \neq \fmap({}^ie_s, {}^i\mathcal{M}_s^{k})
        \]
        This is similar to \lemmaref{arbitrary-access}, and we get at most one constraint on $\Gamma_i$ in ${}^iD_{r,s}$ and no constraints on other projections.
        Therefore, $|{}^iD_{r,s}| \leq 1$ and ${}^jD_{r,s} = \emptyset$ for all $j \neq i$.
    \end{itemize}
    Each access pair contributes at most one constraint to exactly one projection.
    The set of all constraints through the accesses is given by, where ${}^iD$ denotes the constraints on $\Gamma_i$:
    \begin{gather*}
        {}^1D = \bigcup_{(r,s)} {}^1D_{r,s}, \ \cdots \ , {}^pD = \bigcup_{(r,s)} {}^pD_{r,s}
    \end{gather*}
    Since we get at most $\binom{n}{2}$ constraints partitioned into these $p$ sets, we conclude that the total number of constraints is bounded by:
    \[
        |{}^1D| + \ \cdots \ + |{}^pD| \leq \binom{n}{2}
    \]
\end{itemize}

We now define the combined set of constraints (from both conditions and accesses) on each projection $\Gamma_i$ as $F_i = {}^iC \cup {}^iD$.
The total number of constraints across all projections is bounded by:
\begin{align*}
    |F_1| + \ \cdots \ + |F_p| &= |{}^1C \cup {}^1D| + \ \cdots \ + |{}^pC \cup {}^pD| \\
    &\leq |{}^1C| + |{}^1D| + \ \cdots \ + |{}^pC| + |{}^pD| \\
    &= |{}^1C| + \ \cdots \ + |{}^pC| + |{}^1D| + \ \cdots \ + |{}^pD| \\
    &\leq m + \binom{n}{2}
\end{align*}
Thus, the total number of constraints is at most $m + \binom{n}{2}$, partitioned across the $p$ constraint sets.
This implies that $\mathsf{max}_i |F_i| \leq m + \binom{n}{2}$ (the maximum possible constraints on any projection).
For all the projections to be valid, we require $|F_1| \leq k, \cdots, |F_p| \leq k$, or equivalently, $\mathsf{max}_i |F_i| \leq k$.
We also need $1 \leq k$ to avoid empty \agg{}-axes.
Therefore, $\mathsf{max}(m + \binom{n}{2}, 1) \leq k$ is a sufficient condition for a valid counterexample lowering.

Finally, we fully construct the $k$-ranked counterexample as follows:
\begin{itemize}
    \item Projection: For all $j \in \{1 \cdots p\}$, the \ind{}-axes in $F_j$ must be a part of the projection $\Gamma_j$.
    Additionally, the projections should respect the canonical bijections of the $(k\!+\!1)$-ranked counterexample.
    We compute the final projections using the \textsc{ComputeProjections} routine described in \algref{expand-proj}.
    It takes the final constraints $F_1, \cdots, F_p$ as input and returns the projections $\Gamma_1, \cdots, \Gamma_p$.
    In lines 4-9, we expand each $F_j$ using the canonical bijection images of other constraint sets.
    
    After line 9, the sets $F_1', \cdots, F_p'$ are of the same size.
    In fact, there exists a bijection between all the sets: the bijection between $F_i'$ and $F_j'$ is simply $\cproject{\mu_{i,j}^{k+1}}{x_i}{F_i'}$.
    % We prove this by showing that there exists a bijection between all the sets.
    % Each projection $\Gamma_i$ must include the \ind{}-axes in $F_i$, and the projections must respect the canonical bijections of the $(k\!+\!1)$-ranked counterexample.
    % This means that if $v \in F_i'$, then its images under the bijections must be included in all the other sets $F_j'$.
    % Since the bijections are one-to-one and onto, every element that appears in any $F_i'$
    % has a unique counterpart in each other $F_j'$.
    This ensures that all $F_i'$ contain the exact same number of elements, since their elements are simply relabelings of each other via the bijections.
    
    We also prove that the size of the sets $F_1', \cdots, F_p'$ is bounded by $k$, the projection size.
    For all $i, j$, $|\Delta_{i,j}| = |F_i|$ since $\mu_{i,k}^{k+1}$ is a bijection.
    For all $j \in \{1 \cdots p\}$, the set $F_j'$ is $\displaystyle F_j \cup \bigcup_{i = 1, i \neq j}^p \Delta_{i,j}$, whose size is bounded by:
    \begin{align*}
        |F_j'| &= \left| F_j \cup \bigcup_{i = 1, i \neq j}^p \Delta_{i,j}\right| \\
        &\leq |F_j| + \sum_{i = 1, i \neq j}^p |\Delta_{i, j}| \\
        &= |F_j| + \sum_{i = 1, i \neq j}^p |F_i| = \sum_{i = 1}^p |F_i| \leq m + \binom{n}{2} \leq k
    \end{align*}
    Therefore, the number of constraints in each set does not exceed the projection size $k$.
    On line 10, we add any $(k - |F_1'|)$ \ind{}-axes from $x_1^{k+1} \mysetminus F_1'$ to complete the projection $\Gamma_1$.
    In lines 11-13, we add their bijection images to complete the other projections.
    
    \item Construct all $k$-ranked components as described above using $\Gamma$.
    Performing a projection ensures that the tensor shapes, access maps, and attributes are valid and the precondition is satisfied in the $k$-ranked counterexample (\theoremref{precond-proj}).

    \item Tensor values: For all $i \in \{1 \cdots n\}$, we require $w_i$, the value of $\mathsf{X}^k$ at the $i^{\text{th}}$ access, to be same as $v_i$, the value of $\mathsf{X}^{k+1}$ at the $i^{\text{th}}$ access.
    The values at other accesses in $\mathsf{X}^k$ are unspecified.
\end{itemize}

\SetKwComment{Comment}{/*}{ */}
\begin{algorithm}[t]
\caption{\textsc{ComputeProjections}} \alglabel{expand-proj}
\KwIn{Constraint sets $F_1, \cdots, F_p$}
\KwOut{Final projections $\Gamma_1, \cdots, \Gamma_p$}
\SetKwProg{Fn}{Function}{ :}{}
\For{$i \in \{1 \cdots p\}$}{
    $F_i' \gets F_i$\;
}
\For{$j \in \{1 \cdots p\}$}{
    \For{$i \in \{1 \cdots p\}, i \neq j$}{
        $\Delta_{i,j} \gets \mu_{i,j}^{k+1}(F_i)$ \;
        $F_j' \gets F_j' \cup \Delta_{i,j}$ \;
    }
}
$R \gets \textsc{Choice}(x_1^{k+1} \mysetminus F_1', \ k - |F_1'|)$\;
\For{$i \in \{1 \cdots p\}$}{
    $\Gamma_i \gets F_i' \cup \mu_{1,i}^{k+1}(R)$\;
}
\Return $\Gamma_1, \cdots, \Gamma_p$
\end{algorithm}

Thus, $k = \mathsf{max}(m + \binom{n}{2}, 1)$ is a sufficient rank for verifying the rewrite rule $R$.

\end{proof}

\subsection{Arbitrary Number of RClasses} \seclabel{arbitrary-rclass}

\begin{lemma}
\lemmalabel{arbitrary-rclass}
    Let $R = \mathsf{LHS} \Rightarrow_C \mathsf{RHS}$ be a rewrite rule such that
    \begin{itemize}
        \item[(1)] $R$ contains at most one input tensor, denoted \textsf{X},
        \item[(2)] $R$ does not contain any of the following operators: $\reduce, \tdot, \conv$, and $\convbase$,
        \item[(3)] $M$ is an \rcrank{} map for $R$ and $c$ is an \rclass{} appearing in the rule,
        \item[(4)] The scalar function $\scalarf$ comprises: (1) $n$ accesses to \textsf{X} and (2) $m$ conditions.
    \end{itemize}
    Let $n_\mathsf{X} = n$ if an \agg{}-axis of \textsf{X} has \rclass{} $c$, and $n_\mathsf{X} = 0$ otherwise.
    Then the following holds:
    \[
        \mathsf{max}(m + \binom{n_\mathsf{X}}{2}, 1) \leq k \Rightarrow \mathsf{valid}(R, M[c \mapsto k]) \rightarrow \mathsf{valid}(R, M[c \mapsto k\!+\!1])
    \]
\end{lemma}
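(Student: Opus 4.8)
The plan is to generalize \lemmaref{arbitrary-agg} from the single-\rclass{} setting to a rule with several \rclasses{}, isolating the analysis to the chosen \rclass{} $c$ while holding the ranks of every other \rclass{} fixed at the values prescribed by $M$. Concretely, I partition the \agg{}-axes of $R$ into those with \rclass{} $c$, say $x_1, \cdots, x_q$, and those with some other \rclass{}. Only the $c$-axes are instantiated at rank $k+1$ and projected down to rank $k$; the remaining axes keep their fixed ranks throughout (some $c$-axes may be non-indexing, introduced by $\expand$, but they are projected along with the rest). Starting from the contrapositive of \eqnref{starter-eq}, I assume a counterexample at rank $k+1$ for $c$ and build one at rank $k$ by choosing projections $\Gamma_1 \subset x_1^{k+1}, \cdots, \Gamma_q \subset x_q^{k+1}$ of size $k$ that respect the canonical bijections $\mathsf{MapAxes}(c, x_i, x_j)$, exactly as in \lemmaref{arbitrary-agg}. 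Projection preserves precondition satisfaction and \textsf{LHS} validity via \theoremref{precond-proj}, so the work reduces to equating the $\scalarf$ arguments across the two ranks and counting the constraints this imposes on the $c$-projections.

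I would then analyze the two kinds of $\scalarf$ arguments. Each \cond{} is a conjunction $\bigwedge_j \fold({}^jg_l, {}^j\mathcal{N}_l)$ ranging over all \agg{}-axes; the folds attached to non-$c$ axes are untouched by the projection, so they take identical values at both ranks and impose no constraint on the $c$-axes. A condition whose folds over $c$-axes are all trivial therefore contributes nothing, while a $c$-involving condition matches iff its sub-conjunction over the $c$-axes matches — precisely the situation of \lemmaref{arbitrary-agg}, giving at most one constraint placed on a single $c$-axis, for at most $m$ in total. For the accesses, the key new observation is the role of $n_\mathsf{X}$: when two accesses $v_r \ne v_s$ to $\mathsf{X}^{k+1}$ differ only in non-$c$ components, those components survive the projection unchanged and keep the accesses distinct at rank $k$, so no constraint is needed; a constraint arises only when a $c$-axis distinguishes them, in which case at most one such $c$-axis must be retained. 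Hence if $\mathsf{X}$ has no \agg{}-axis of \rclass{} $c$ ($n_\mathsf{X} = 0$), every access to $\mathsf{X}$ is invariant under the $c$-projection and no access pair forces a constraint, so $\binom{n_\mathsf{X}}{2} = 0$; otherwise ($n_\mathsf{X} = n$) the pairwise count of \lemmaref{arbitrary-access} bounds the access constraints by $\binom{n}{2}$.

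Combining, the constraints on the $c$-axis projections number at most $m + \binom{n_\mathsf{X}}{2}$, distributed across the $q$ projection sets. I then invoke the \textsc{ComputeProjections} routine of \algref{expand-proj} on the $c$-axes alone: it expands each constraint set by the canonical-bijection images of the others so that all $q$ sets become equal in size and mutually coherent under the bijections, with common size bounded, by the same disjoint-union estimate as in \lemmaref{arbitrary-agg}, by $m + \binom{n_\mathsf{X}}{2}$. Whenever $k \ge m + \binom{n_\mathsf{X}}{2}$, each expanded set fits inside a size-$k$ projection, and enforcing $1 \le k$ to exclude empty \agg{}-axes yields the sufficient condition $\mathsf{max}(m + \binom{n_\mathsf{X}}{2}, 1) \le k$. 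Since no step altered the ranks of the other \rclasses{}, the bound is independent of $M$ restricted to those classes, which is exactly what the inductive step $\mathsf{valid}(R, M[c \mapsto k]) \to \mathsf{valid}(R, M[c \mapsto k+1])$ demands.

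The main obstacle I anticipate is cleanly justifying that the single-\rclass{} projection is sound in the presence of other \rclasses{}: that holding the non-$c$ axes at fixed rank still produces a well-formed $k$-ranked instantiation whose canonical bijections on the $c$-axes satisfy the coherence laws, and that the constraint bookkeeping genuinely decouples — non-$c$ folds never constrain the $c$-projection, and access pairs distinguished only by non-$c$ components contribute nothing. The $n_\mathsf{X} = 0$ case is the crisp payoff of this decoupling and the place where the argument most clearly departs from \lemmaref{arbitrary-agg}, so I would make the invariance of $\mathsf{X}$'s accesses under the $c$-projection explicit rather than folding it silently into the pairwise analysis.
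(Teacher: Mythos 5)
Your proposal is correct and follows essentially the same route as the paper's proof: isolating the \rclass{} $c$, projecting only its \agg{}-axes while the non-$c$ components (the paper's $Y_l$ access maps and $\phi_l$ fold-conjunctions) pass through unchanged, bounding condition constraints by $m$ and access constraints by $\binom{n_\mathsf{X}}{2}$ via the decoupling argument, and completing coherent projections with \textsc{ComputeProjections} under \theoremref{precond-proj}. Your explicit treatment of the $n_\mathsf{X}=0$ case is a slightly more spelled-out version of what the paper handles implicitly through its canonical form, but the substance is identical.
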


\begin{proof}
Let $x_1,\cdots, x_p$ denote all the \agg{}-axes of \rclass{} $c$ that appear in the rewrite rule $R$ .
Among these, let $x_1, \cdots, x_h$ be the \agg{}-axes that index the input tensor \textsf{X}.
The \agg{}-axes $x_{h+1}, \cdots, x_p$ occur as arguments to operators such as $\expand$ but do not index \textsf{X} directly.
All other \agg{}-axes in $R$ belong to \rclasses{} other than $c$.
The $\scalarf$ of $R$ has the following canonical form, comprising $n_\mathsf{X}$ relevant accesses to \textsf{X} and $m$ conditions:
\begin{align*}
    \scalarf(&\mathsf{X}[x_1 \mapsto \fmap({}^1e_1, {}^1\mathcal{M}_1), \cdots, x_h \mapsto \fmap({}^he_1, {}^h\mathcal{M}_1) \cup Y_1], \cdots, \\
    & \phantom{\cdots}\mathsf{X}[x_1 \mapsto \fmap({}^1e_{n_\mathsf{X}}, {}^1\mathcal{M}_{n_\mathsf{X}}), \cdots, x_h \mapsto \fmap({}^he_{n_\mathsf{X}}, {}^h\mathcal{M}_{n_\mathsf{X}}) \cup Y_{n_\mathsf{X}}],\\
    & \phantom{\cdots\cdots} \phi_1 \wedge \bigwedge_{j = 1}^p \fold({}^jg_1, {}^j\mathcal{N}_1), \cdots, \phi_m \wedge \bigwedge_{j = 1}^p \fold({}^jg_m, {}^j\mathcal{N}_m))
\end{align*}
where each ${}^je_i$ is an \indtrans{} and each ${}^jg_i$ is a condition predicate.
The maps $Y_i$ denote the access maps for the remaining \agg{}-axes (those whose \rclass{} is not $c$), while formulas $\phi_i$ collect the $\fold$-conditions for these remaining \agg{}-axes.

Our goal is to find a sufficient rank $k$ for the \rclass{} $c$, such that any counterexample at rank $k\!+\!1$ can be lowered to a counterexample at rank $k$.
Let $x_i^{k+1} = \{\underline{1}^i, \cdots, \underline{k\!+\!1}^i\}$ denote a $(k\!+\!1)$-ranked instantiation of each \agg{}-axis $x_i$.
Since all \agg{}-axes $x_1, \cdots, x_p$ belong to the same \rclass{} $c$, there exists a canonical bijection between any pair:
\[
    \mu_{i,j}^{k+1} = \mathsf{MapAxes}(c, x_i^{k+1}, x_j^{k+1}) : x_i^{k+1} \leftrightarrow x_j^{k+1}
\]

To construct a valid counterexample, we project each \agg{}-axis $x_i$ since they all have the same \rclass{} and are instantiated to the same rank.
Let $\Gamma_i \subset x_i^{k+1}$ be a projection of size $k$ for $x_i$.
These projections must be pairwise disjoint and must respect the canonical bijections:
\begin{equation} \eqnlabel{ensure-proj-1}
    \forall i, j \in \{1 \cdots p\}, \quad \mu_{i,j}^{k+1}(\Gamma_i) = \Gamma_j
\end{equation}

Starting from \eqnref{starter-eq}, we have:
\begin{align*}
    & \exists ~ v^{k+1} \in \vars^{k+1}, C^{k+1} \wedge \valexp(\mathsf{LHS}^{k+1}) \wedge \exists A^{k+1} \in \access(\mathsf{LHS}^{k+1}), \\
    &\phantom{\cdots\cdots\cdots\cdots}\neg\scalarf(v_1, \cdots, v_{n_\mathsf{X}}, \, b_1, \cdots, b_m) \\
    &\phantom{\scalarf(\mathsf{X}^{k+1}[x^{k+1} \mapsto \fmap}{\big\downarrow} \\
    &\qquad\exists ~ v^k \in \vars^k, C^k \wedge \valexp(\mathsf{LHS}^k) \wedge \exists A^k \in \access(\mathsf{LHS}^{k}), \\
    &\phantom{\cdots\cdots\cdots\cdots}\neg\scalarf(w_1, \cdots, w_{n_\mathsf{X}}, \, c_1, \cdots, c_m)
\end{align*}
where,
\begin{itemize}
    \item For each $l \in \{1 \cdots n_\mathsf{X}\}$, $v_l$ denotes the $l^{\text{th}}$ access to $\mathsf{X}^{k+1}$ in the $(k\!+\!1)$-ranked counterexample:
    \[
        v_l = \mathsf{X}^{k+1}[x_1^{k+1} \mapsto \fmap({}^1e_l, {}^1\mathcal{M}_l^{k+1}), \cdots, x_h^{k+1} \mapsto \fmap({}^he_l, {}^h\mathcal{M}_l^{k+1}) \cup Y_l^{k+1}]
    \]
    % where $S_l$ is the access map for the \agg{}-axes in \textsf{X} which do not belong to the \rclass{} $c$.
    \item For all $l \in \{1 \cdots n_\mathsf{X}\}$, $w_l$ denotes the $l^{\text{th}}$ access to $\mathsf{X}^{k}$ in the $k$-ranked counterexample:
    \[
        w_l = \mathsf{X}^{k}[x_1^{k} \mapsto \fmap({}^1e_l, {}^1\mathcal{M}_l^{k}), \cdots, x_h^{k} \mapsto \fmap({}^he_l, {}^h\mathcal{M}_l^{k}) \cup Y_l^k]
    \]
    % Note that we keep $S_l$ unchanged since the rank of other \rclasses{} is unchanged.
    \item For each $l \in \{1 \cdots m\}$, $b_l$ denotes the $l^{\text{th}}$ condition in the $(k\!+\!1)$-ranked counterexample:
    \[
        b_l = \phi_l^{k+1} \wedge \bigwedge_{j = 1}^p \fold({}^jg_l, {}^j\mathcal{N}_l^{k+1})
    \]
    % where $\phi_l$ is the conjunction of $\fold$-condition for the \agg{}-axes which do not belong to the \rclass{} $c$.
    \item For all $l \in \{1 \cdots m\}$, $c_l$ denotes the $l^{\text{th}}$ condition in the $k$-ranked counterexample:
    \[
        c_l = \phi_l^{k} \wedge \bigwedge_{j = 1}^p \fold({}^jg_l, {}^j\mathcal{N}_l^{k})
    \]
    % Note that we keep $\phi_l$ unchanged since the rank of other \rclasses{} is unchanged.
\end{itemize}

We define the $k$-ranked components via projections with $\Gamma_1, \cdots, \Gamma_p$:
\begin{itemize}
    \item $\mathsf{X}^k = \cproject{\mathsf{X}^{k+1}}{x_1, \cdots, x_p}{\Gamma_1, \cdots, \Gamma_p}$,
    \item $\vars^k = \cproject{\vars^{k+1}}{x_1, \cdots, x_p}{\Gamma_1, \cdots, x_p}$,
    \item $A^k = \cproject{A^{k+1}}{x_1,\cdots,x_p}{\Gamma_1,\cdots,\Gamma_p}$,
    \item for all $l \in \{1 \cdots n_\mathsf{X}\}, i \in \{1 \cdots h\}$, ${}^i\mathcal{M}_l^k = \cproject{{}^i\mathcal{M}_l^{k+1}}{x_i}{\Gamma_i}$,
    \item for all $l \in \{1 \cdots n_\mathsf{X}\}$, $Y_l^k = \cproject{Y_l^{k+1}}{x_1,\cdots,x_p}{\Gamma_1,\cdots,\Gamma_p} = Y_l^{k+1}$, since $Y_l$ is independent of the \rclass{} $c$,
    \item for all $l \in \{1 \cdots m\}, j \in \{1 \cdots p\}$, ${}^j\mathcal{N}_l^k = \cproject{{}^j\mathcal{N}_l^{k+1}}{x_j}{\Gamma_j}$,
    \item for all $l \in \{1 \cdots m\}$, $\phi_l^k = \cproject{\phi_l^{k+1}}{x_1,\cdots,x_p}{\Gamma_1,\cdots,\Gamma_p} = \phi_l^{k+1}$, since $\phi_l$ is independent of the \rclass{} $c$, and
    \item for all $i, j \in \{1 \cdots p\}$, the canonical bijection $\mu_{i,j}^k = \mathsf{MapAxes}(c, \Gamma_i, \Gamma_j)$ is simply $\cproject{\mu_{i,j}^{k+1}}{x_i}{\Gamma_i}$.
    The resulting canonical bijections must satisfy:
    \begin{itemize}
        \item $\mu_{i,j}^k \circ \mu_{j,i}^k = \mathsf{id}$, and
        \item $\mu_{j,l}^k \circ \mu_{i,j}^k = \mu_{i,l}^k$.
    \end{itemize}
    To ensure that these properties hold, we must ensure that \eqnref{ensure-proj-1} is satisfied.
\end{itemize}
To obtain the $k$-ranked counterexample, we project each component of the $(k\!+\!1)$-ranked counterexample along the projections $\Gamma_i$.
Components not involving the \rclass{} $c$ are unchanged.

Since $C^{k+1} \Rightarrow C^k$ and $\valexp(\mathsf{LHS}^{k+1}) \Rightarrow \valexp(\mathsf{LHS}^{k})$, the precondition remains satisfied and the \textsf{LHS} expression remains valid in the $k$-ranked counterexample (\theoremref{precond-proj}).
What remains is ensuring equivalence of $\scalarf$ arguments in both counterexamples and deriving constraints on $\Gamma_1, \cdots, \Gamma_p$:
\begin{itemize}
    \item Conditions: For all $l \in \{1 \cdots m\}$, we require $c_l = \phi_l^k \wedge \bigwedge_{j = 1}^p \fold({}^jg_l, {}^j\mathcal{N}_l^k)$ and $b_l = \phi_l^{k+1} \wedge \bigwedge_{j = 1}^p \fold({}^jg_l, {}^j\mathcal{N}_l^{k+1})$ must have the same truth value.
    
    The value of $b_l$ is known since it depends entirely on the $(k\!+\!1)$-ranked counterexample.
    To make sure that $c_l$ and $b_l$ have the same truth value, it is sufficient to make the truth values of $c_l' = \bigwedge_{j = 1}^p \fold({}^jg_l, {}^j\mathcal{N}_l^k)$ and $b_l' = \bigwedge_{j = 1}^p \fold({}^jg_l, {}^j\mathcal{N}_l^{k+1})$ the same since $\phi_l^k = \phi_l^{k+1}$.
    This is similar to \lemmaref{arbitrary-agg}, and we get at most one constraint on exactly one projection.
    
    The set of all constraints through the conditions is given by, where ${}^iC$ denotes the constraints on $\Gamma_i$.
    Since we get at most $m$ constraints partitioned into these $p$ sets, we conclude that the total number of constraints is bounded by:
    \[
        |{}^1C| + \ \cdots \ + |{}^pC| \leq m
    \]

    \item Accesses: For all $l \in \{1 \cdots n_\mathsf{X}\}$, $w_l$ must be the same as $v_l$.
    The projections $\Gamma_1, \cdots, \Gamma_p$ should not lead to inconsistencies with respect to the elements of $\mathsf{X}^k$.
    
    Consider any arbitrary pair of values $v_r$ and $v_s$, where $r,s \in \{1 \cdots n_\mathsf{X}\}$ and $r \neq s$.
    These correspond to two accesses to $\mathsf{X}^{k+1}$.
    There are $\binom{n_\mathsf{X}}{2}$ such access pairs.
    Let ${}^1D_{r,s}, \cdots, {}^hD_{r,s}$ be the constraints we get on $\Gamma_1, \cdots, \Gamma_h$ from this equation, respectively.
    We exclude ${}^{h+1}D_{r,s}, \cdots, {}^pD_{r,s}$ from the analysis because the corresponding \agg{}-axes do not index the tensor \textsf{X}, and hence do not introduce any inconsistencies.
    Thus, their constraints are equal to $\emptyset$.
    The following cases arise:
    \begin{itemize}
        \item $v_r = v_s$: Then any projection will result in a valid counterexample since the lowering will not lead to any inconsistency. 
        For this case, ${}^iD_{r,s} = \emptyset$ for all $i$.
        \item $v_r \neq v_s$: Then these values correspond to different accesses in $\mathsf{X}^{k+1}$, i.e.,
        \begin{align*}
            \{x_1^{k+1} \mapsto \fmap({}^1e_r, {}^1\mathcal{M}_r^{k+1}), \cdots, x_h^{k+1} \mapsto \fmap({}^he_r, {}^h\mathcal{M}_r^{k+1})\} \cup Y_r^{k+1} \neq \\
            \{x_1^{k+1} \mapsto \fmap({}^1e_s, {}^1\mathcal{M}_s^{k+1}), \cdots, x_h^{k+1} \mapsto \fmap({}^he_s, {}^h\mathcal{M}_s^{k+1})\} \cup Y_s^{k+1}
        \end{align*}
        We require that they also correspond to different accesses in $\mathsf{X}^k$, i.e.,
        \begin{align} \eqnlabel{ensure-rclass}
            \{x_1^{k} \mapsto \fmap({}^1e_r, {}^1\mathcal{M}_r^{k}), \cdots, x_h^{k} \mapsto \fmap({}^he_r, {}^h\mathcal{M}_r^{k})\} \cup Y_r^k \neq \nonumber \\
            \{x_1^{k} \mapsto \fmap({}^1e_s, {}^1\mathcal{M}_s^{k}), \cdots, x_h^{k} \mapsto \fmap({}^he_s, {}^h\mathcal{M}_s^{k})\} \cup Y_s^k
        \end{align}
        
        Since the accesses do not match in $\mathsf{X}^{k+1}$, two cases arise:
        (1) $Y_r^{k+1} \neq Y_s^{k+1}$. This implies $Y_r^k \neq Y_s^k$, ensuring that \eqnref{ensure-rclass} holds. This case results in no constraints on any projection,
        (2) there exists some \agg{}-axis, say $x_i^{k+1}$, for $i \in \{1 \cdots h\}$, such that its corresponding access maps in the two accesses are unequal:
        \[
            \fmap({}^ie_r, {}^i\mathcal{M}_r^{k+1}) \neq \fmap({}^ie_s, {}^i\mathcal{M}_s^{k+1})
        \]
        We ensure that the access maps of $x_i^k$ are unequal in the two accesses, to ensure that \eqnref{ensure-rclass} holds:
        \[
            \fmap({}^ie_r, {}^i\mathcal{M}_r^{k}) \neq \fmap({}^ie_s, {}^i\mathcal{M}_s^{k})
        \]
        This is similar to \lemmaref{arbitrary-access}, and we get at most one constraint on $\Gamma_i$ in ${}^iD_{r,s}$ and no constraints on other projections.
        Therefore, $|{}^iD_{r,s}| \leq 1$ and ${}^jD_{r,s} = \emptyset$ for all $j \neq i$.
    \end{itemize}
    Each access pair contributes at most one constraint to exactly one projection.
    The set of all constraints through the accesses is given by, where ${}^iD$ denotes the constraints on $\Gamma_i$:
    \begin{gather*}
        {}^1D = \bigcup_{(r,s)} {}^1D_{r,s}, \ \cdots \ , {}^pD = \bigcup_{(r,s)} {}^pD_{r,s}
    \end{gather*}
    Since we get at most $\binom{n_\mathsf{X}}{2}$ constraints partitioned into these $p$ sets, we conclude that the total number of constraints is bounded by:
    \[
        |{}^1D| + \ \cdots \ + |{}^pD| \leq \binom{n_\mathsf{X}}{2}
    \]
\end{itemize}

We now define the combined set of constraints (from both conditions and accesses) on each projection $\Gamma_i$ as $F_i = {}^iC \cup {}^iD$.
The total number of constraints across all projections is bounded by:
\begin{align*}
    |F_1| + \ \cdots \ + |F_p| &= |{}^1C \cup {}^1D| + \ \cdots \ + |{}^pC \cup {}^pD| \\
    &\leq |{}^1C| + |{}^1D| + \ \cdots \ + |{}^pC| + |{}^pD| \\
    &= |{}^1C| + \ \cdots \ + |{}^pC| + |{}^1D| + \ \cdots \ + |{}^pD| \\
    &\leq m + \binom{n_\mathsf{X}}{2}
\end{align*}

Similar to \lemmaref{arbitrary-agg}, we get $\mathsf{max}(m + \binom{n_\mathsf{X}}{2}, 1) \leq k$ as a sufficient condition for a valid counterexample lowering.
Finally, we fully construct the $k$-ranked counterexample as follows:
\begin{itemize}
    \item Projection: For all $j \in \{1 \cdots p\}$, the \ind{}-axes in $F_j$ must be a part of the projection $\Gamma_j$.
    Additionally, the projections should respect the canonical bijections of the $(k\!+\!1)$-ranked counterexample.
    We compute the final projections using the \textsc{ComputeProjections} routine described in \algref{expand-proj}.
    
    \item Construct all $k$-ranked components as described above using $\Gamma$.
    Performing a projection ensures that the tensor shapes, access maps, and attributes are valid and the precondition is satisfied in the $k$-ranked counterexample (\theoremref{precond-proj}).

    \item Tensor values: For all $i \in \{1 \cdots n_\mathsf{X}\}$, we require $w_i$, the value of $\mathsf{X}^k$ at the $i^{\text{th}}$ access, to be same as $v_i$, the value of $\mathsf{X}^{k+1}$ at the $i^{\text{th}}$ access.
    The values at other accesses in $\mathsf{X}^k$ are unspecified.
\end{itemize}

Thus, $k = \mathsf{max}(m + \binom{n_\mathsf{X}}{2}, 1)$ is a sufficient rank for the \rclass{} $c$.

\end{proof}

\subsection{Arbitrary Number of Tensors} \seclabel{arbitrary-tensor}

\begin{lemma}
\lemmalabel{arbitrary-tensor}
    Let $R = \mathsf{LHS} \Rightarrow_C \mathsf{RHS}$ be a rewrite rule such that
    \begin{itemize}
        \item[(1)] $R$ does not contain any of the following operators: $\reduce, \tdot, \conv$, and $\convbase$,
        \item[(2)] $M$ is an \rcrank{} map for $R$ and $c$ is an \rclass{} appearing in the rule,
        \item[(3)] $R$ contains $q$ tensors $\mathsf{X}_1,\cdots,\mathsf{X}_q$ that have an \agg{}-axis belonging to the \rclass{} $c$,
        \item[(3)] The scalar function $\scalarf$ comprises: (1) $n_{\mathsf{X}_i}$ accesses to $\mathsf{X}_i$ and (2) $m$ conditions.
    \end{itemize}
    Then the following holds:
    \[
        \mathsf{max}(m + \sum_{i=1}^q \binom{n_{\mathsf{X}_i}}{2}, 1) \leq k \Rightarrow \mathsf{valid}(R, M[c \mapsto k]) \rightarrow \mathsf{valid}(R, M[c \mapsto k\!+\!1])
    \]
\end{lemma}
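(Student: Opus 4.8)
The plan is to lift \lemmaref{arbitrary-rclass} from a single relevant tensor to the collection $\mathsf{X}_1, \ldots, \mathsf{X}_q$, reusing the entire counterexample-lowering machinery and changing only the accounting of access-induced constraints. As before, I would write the canonical $\scalarf$ form of \eqnref{scalar-form}, now exposing the accesses to each $\mathsf{X}_i$ separately; fix the \rcrank{} map $M$ and the target \rclass{} $c$; let $x_1, \ldots, x_p$ be all \agg{}-axes of \rclass{} $c$ appearing in the rule; and instantiate them at rank $k\!+\!1$ with the canonical bijections $\mu_{i,j}^{k+1}$. Starting from \eqnref{starter-eq}, I would define the $k$-ranked counterexample by projecting every \agg{}-axis of \rclass{} $c$ through size-$k$ projections $\Gamma_1, \ldots, \Gamma_p$ that respect those bijections, while holding all other \rclasses{} fixed. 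Validity of the precondition and of $\mathsf{LHS}$ under projection is inherited from \theoremref{precond-proj}, so the real work is, again, only to force the arguments of $\scalarf$ to agree across the two ranks and to collect the resulting constraints on the $\Gamma_j$.

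The constraint accounting splits into conditions and accesses. The $m$ conditions are handled exactly as in \lemmaref{arbitrary-agg} and \lemmaref{arbitrary-rclass}: each condition is a conjunction of $\fold$s over the \agg{}-axes, and the case analysis on the number of true $\fold$s shows each condition pins at most one \ind{}-axis in exactly one projection, contributing at most $m$ constraints in total. For accesses I would treat each tensor $\mathsf{X}_i$ independently: within $\mathsf{X}_i$, for every pair among its $n_{\mathsf{X}_i}$ accesses whose $(k\!+\!1)$-ranked values differ, the argument of \lemmaref{arbitrary-access} yields at most one constraint (an \ind{}-axis at which the two access-index expressions disagree must be retained), giving at most $\binom{n_{\mathsf{X}_i}}{2}$ constraints from $\mathsf{X}_i$. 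The decisive new step is to argue that \emph{cross-tensor} access pairs contribute nothing: since each tensor is modeled as an independent uninterpreted function, fixing $\mathsf{X}_i^k$ at one access never constrains the value of a \emph{different} tensor $\mathsf{X}_j^k$ at any access, so no consistency requirement links accesses of $\mathsf{X}_i$ to those of $\mathsf{X}_j$. Hence the access constraints form a disjoint union over tensors, bounded by $\sum_{i=1}^q \binom{n_{\mathsf{X}_i}}{2}$.

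Combining both sources, the total constraint set, partitioned across the $p$ projections $F_1, \ldots, F_p$, has size at most $m + \sum_{i=1}^q \binom{n_{\mathsf{X}_i}}{2}$, so $\max_j |F_j|$ is bounded by the same quantity. Requiring $|F_j| \leq k$ for every $j$ (and $k \geq 1$ to avoid empty \agg{}-axes), then completing the projections via \algref{expand-proj} so that they stay bijection-consistent with each $|F_j'| \leq k$ exactly as in \lemmaref{arbitrary-agg}, establishes that $k = \mathsf{max}(m + \sum_{i=1}^q \binom{n_{\mathsf{X}_i}}{2}, 1)$ suffices. I expect the only genuinely new obstacle to be the careful justification of cross-tensor independence: spelling out that the verifier assigns values to \emph{distinct} uninterpreted functions, so the tensor-value step of the lowering (setting $w_l = v_l$ per tensor) can always be satisfied without interaction between tensors, while the shared projections $\Gamma_j$ of the \rclass{}-$c$ axes remain mutually consistent across all $\mathsf{X}_i$. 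The remaining bookkeeping is a routine replay of \lemmaref{arbitrary-rclass} with the summed access count.
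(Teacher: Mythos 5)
Your proposal is correct and takes essentially the same approach as the paper: it replays the counterexample lowering of \lemmaref{arbitrary-rclass}, bounds condition constraints by $m$ and per-tensor access constraints by $\binom{n_{\mathsf{X}_i}}{2}$, and invokes the non-aliasing of distinct tensors (modeled as independent uninterpreted functions) to show that cross-tensor access pairs contribute no constraints — precisely the argument the paper gives, including its deferral of the aliasing caveat to the discussion section. The paper's own proof is in fact terser than yours, treating the lemma as a routine summation of per-tensor contributions on top of \lemmaref{arbitrary-rclass}.
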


\begin{proof}

Only tensors that contain an \agg{}-axis belonging to the \rclass{} $c$ can introduce constraints that on the sufficient rank for $c$.
Accesses to any other tensors retain their values during the counterexample lowering, and hence do not contribute to the bound.
For each tensor $\mathsf{X}_i$ that contains an \agg{}-axis with the \rclass{} $c$, every pair of its accesses can introduce at most one constraint to exactly one projection.
Therefore, the total number of constraints through accesses of $\mathsf{X}_i$ is bounded by $\binom{n_{\mathsf{X}_i}}{2}$.

The contributions from different tensors are summed independently because accesses across tensors do not lead to any constraints (we assume that distinct tensors do not alias; more \secref{discussion-app}).
Consequently, the total number of constraints is simply bounded by sum of per-tensor contributions, together with the $m$ constraints from the conditions in the worst case.
The counterexample construction is similar to \lemmaref{arbitrary-rclass}.
We introduce a routine \textsc{TensorsWithRClass} which takes an \rclass{} $c$ and returns all the input tensors which have an \agg{}-axis belonging to $c$.
For each tensor, we use the \textsc{NumTensorAccess} routine to get the number of distinct accesses to that tensor.

Thus, $k = \mathsf{max}(m + \sum_{i=1}^q \binom{n_{\mathsf{X}_i}}{2}, 1)$ is a sufficient rank for the \rclass{} $c$ to verify the rewrite rule $R$, where $\{\mathsf{X_1},\cdots,\mathsf{X}_q\} = \textsc{TensorsWithRClass}(c)$ and $n_{\mathsf{X}_i} = \textsc{NumTensorAccess}(\mathsf{X}_i)$.
\end{proof}

\section{Preconditions} \seclabel{preconditions}

In this section, we prove that if the precondition holds in the $(k\!+\!1)$-ranked counterexample, then the precondition holds in the $k$-ranked counterexample, i.e., $C^{k+1} \Rightarrow C^k$.

\subsection{A Language for Preconditions}

We first define a simple language for preconditions and a useful substitution lemma.

\begin{definition}[Precondition language]
\deflabel{precond-lang}
Let $\mathcal{A}$ be a set of atomic propositions (atoms).
We define a language $\Phi_{\wedge,\vee}$ by the grammar:
\[
    \phi \coloneq a \in \mathcal{A} \mid \phi \wedge \phi \mid \phi \vee \phi
\]
Similarly, the language $\Phi_{\wedge}$ is defined by the grammar:
\[
    \phi \coloneq a \in \mathcal{A} \mid \phi \wedge \phi
\]
Note that $\Phi_\wedge \subseteq \Phi_{\wedge,\vee}$.

We write $\subst{\phi}{a}{b}$ for the formula obtained from $\phi$ by simultaneously replacing every occurrence of atom $a$ by atom $b$ (also called a \emph{substitution}).
\end{definition}

\begin{lemma}[Substitution Monotonicity]
\lemmalabel{substitution}
Let $\phi \in \Phi_{\wedge,\vee}$ and let $a,b \in \mathcal{A}$ be atoms.
If the implication $a \Rightarrow b$ holds, then
\[
    \phi \Rightarrow \subst{\phi}{a}{b}
\]
The same holds for $\Phi_{\wedge}$.
\end{lemma}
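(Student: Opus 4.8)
The plan is to prove the statement by \emph{structural induction} on the formula $\phi$, exploiting the fact that the grammar of $\Phi_{\wedge,\vee}$ contains no negation, so every atom occurs \emph{positively}. This positivity is exactly what makes the substitution monotone; the lemma would fail outright if negation were permitted (e.g.\ $\phi = \neg a$ with $a \Rightarrow b$), which is why I expect the only conceptual point to be isolating where positivity is used, rather than any real computation.

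First I would set up the induction and dispatch the base case, where $\phi = c$ is a single atom. There are two sub-cases. If $c = a$, then $\subst{\phi}{a}{b} = b$ and the goal $\phi \Rightarrow \subst{\phi}{a}{b}$ is precisely the hypothesis $a \Rightarrow b$. If $c \ne a$, then $a$ does not occur in $\phi$, so $\subst{\phi}{a}{b} = \phi$ and the implication is trivial.

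Next I would handle the two inductive cases, each resting on the observation that substitution commutes with the connectives, namely $\subst{(\phi_1 \wedge \phi_2)}{a}{b} = \subst{\phi_1}{a}{b} \wedge \subst{\phi_2}{a}{b}$ and likewise for $\vee$. For $\phi = \phi_1 \wedge \phi_2$, the induction hypotheses give $\phi_1 \Rightarrow \subst{\phi_1}{a}{b}$ and $\phi_2 \Rightarrow \subst{\phi_2}{a}{b}$; monotonicity of conjunction (if $p \Rightarrow p'$ and $q \Rightarrow q'$ then $p \wedge q \Rightarrow p' \wedge q'$) then yields the goal. The case $\phi = \phi_1 \vee \phi_2$ is identical, substituting monotonicity of disjunction for that of conjunction.

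Finally, the claim for $\Phi_\wedge$ requires no separate argument: since $\Phi_\wedge \subseteq \Phi_{\wedge,\vee}$, the established result specializes directly, and the induction above only ever invokes the atom and $\wedge$ cases when $\phi \in \Phi_\wedge$. The main thing to get right is the bookkeeping that substitution distributes over each connective and that no other formula shapes arise; I do not anticipate a genuine obstacle, as the absence of negation guarantees monotonicity uniformly throughout the syntax tree. This is precisely the ingredient needed downstream to show $C^{k+1} \Rightarrow C^k$ under the projection operation.
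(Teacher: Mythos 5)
Your proof is correct and follows essentially the same route as the paper's: structural induction on $\phi$, with the base case split on whether the atom equals $a$, and the inductive cases for $\wedge$ and $\vee$ handled by distributing the substitution over the connective and applying monotonicity of conjunction and disjunction. Your additional remark that the absence of negation (positivity of atoms) is what makes the argument go through is a sound observation, but the underlying proof is the same as the paper's.
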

\begin{proof}
By structural induction on $\phi$.

\noindent \textbf{Basis.} If $\phi = a$, then $\subst{\phi}{a}{b} = b$, and the claim follows from the hypothesis $a \Rightarrow b$.  
If $\phi = c$ for $c \neq a$, then $\subst{\phi}{a}{b} = c$ and $c \Rightarrow c$ holds trivially.

\noindent \textbf{Inductive step.}  
\begin{itemize}
  \item If $\phi = \phi_1 \wedge \phi_2$, then by the induction hypothesis, $\phi_i \Rightarrow \subst{\phi_i}{a}{b}$ for $i = 1, 2$.
  Since $\subst{\phi}{a}{b} = \subst{\phi_1}{a}{b} \wedge \subst{\phi_2}{a}{b}$ and $\phi_1 \wedge \phi_2 \Rightarrow \subst{\phi_1}{a}{b} \wedge \subst{\phi_2}{a}{b}$, we obtain the desired implication.
  \item If $\phi = \phi_1 \vee \phi_2$, similarly the induction hypothesis yields $\phi_1 \vee \phi_2 \Rightarrow \subst{\phi_1}{a}{b} \vee \subst{\phi_2}{a}{b}$.
\end{itemize}
This completes the induction proof.
\end{proof}

% Now, we use \lemmaref{substitution} to preconditions in \project{}.

\subsection{Conditions in \project{}}

The preconditions in \project{} are drawn from the language $\Phi_{\wedge,\vee}$ (\defref{precond-lang}), whose atoms $\mathcal{A}$ consist precisely of all $\fold$-conditions.
The expression validity conditions (such as those returned by $\valexp$) are drawn from $\Phi_{\wedge}$.

The $\fold$ construct takes a predicate and applies it pointwise to a list of maps with the same domains.
It returns \true{} if all predicate values are \true{}, and \false{} otherwise.
It is defined as:
\[
    \fold(g, [m_1, m_2, \cdots]) = \bigwedge_{i \in \mathsf{dom}(m_1)} g(m_1(i), m_2(i), \cdots)
\]

\begin{theorem}
\theoremlabel{precond-proj}
Let $C^{k+1}, C^{k} \in \Phi_{\wedge,\vee}$ be the preconditions obtained from the same symbolic precondition $C$ by instantiating every fold over a \rclass{} $c$ at rank $k\!+\!1$ and at rank $k$, respectively
During the counterexample construction, each $(k\!+\!1)$-fold is projected to the corresponding $k$-fold.
Then,
\[
    C^{k+1} \Rightarrow C^{k}
\]
Similarly, for any tensor expression $e$, $\valexp(e^{k+1}) \Rightarrow \valexp(e^k)$.
\end{theorem}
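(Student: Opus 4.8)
The plan is to reduce the whole-formula implication to an atom-by-atom implication between corresponding folds, and then lift it through the boolean structure using the monotonicity already established in \lemmaref{substitution}.

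First I would establish the key atom-level fact: for every $\fold$-condition occurring as an atom of $C$, its rank-$(k\!+\!1)$ instantiation implies its rank-$k$ instantiation obtained by projecting along $\Gamma$. Unfolding the definition, a fold over the rank class $c$ at rank $k\!+\!1$ is the conjunction $\bigwedge_{i=1}^{k+1} g(m_1^{k+1}(\underline{i}), m_2^{k+1}(\underline{i}), \cdots)$, and projecting its map list to $\Gamma$ (the chosen size-$k$ subset) yields $\bigwedge_{\underline{j}\in\Gamma} g(m_1^{k+1}(\underline{j}), m_2^{k+1}(\underline{j}), \cdots)$, because the map restriction $\cproject{m}{x}{\Gamma}$ preserves the value $m(\underline{j})$ for every $\underline{j}\in\Gamma$. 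Hence the projected fold is \emph{literally} a sub-conjunction of the original, so $\fold^{k+1} = \fold^{k} \wedge (\text{extra clauses for } \underline{j}\notin\Gamma)$, giving $\fold^{k+1}\Rightarrow\fold^{k}$; note this is exactly the direction required, since the $k$-level condition is weaker. For a fold over any other rank class, projecting $c$ leaves it unchanged, so the implication holds trivially as an equality. When several aggregated axes share the rank class $c$, a condition takes the form $\bigwedge_j \fold({}^jg, {}^j\mathcal{N})$ with each fold projected by its own $\Gamma_j$; each conjunct shrinks independently, so the conjunction still implies its projection.

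Next I would treat $C^{k}$ as the result of simultaneously substituting, inside $C^{k+1}$, each rank-$(k\!+\!1)$ fold atom by its rank-$k$ counterpart. Since every such substitution replaces an atom $a$ by an atom $b$ with $a\Rightarrow b$, I would invoke \lemmaref{substitution} to conclude $C^{k+1}\Rightarrow C^{k}$. The lemma is stated for a single atom, but its proof is a structural induction over $\Phi_{\wedge,\vee}$ that goes through verbatim for the simultaneous replacement of a finite set of distinct atoms (the folds are syntactically distinct, so replacing one does not disturb the others); I would state this generalization and apply it once. The $\vee$ case of the induction is precisely why the precondition language is allowed to contain disjunctions, and it introduces no extra difficulty.

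For the validity condition, since $\valexp(e)\in\Phi_{\wedge}\subseteq\Phi_{\wedge,\vee}$ and its atoms are again folds (the comparison conditions collected from the operator semantics in \figref{op-valid}, such as $S\geq 0$ or $0\leq I_s\leq I_e\leq\shape(e)$), the identical substitution-monotonicity argument applies and yields $\valexp(e^{k+1})\Rightarrow\valexp(e^{k})$. The one extra ingredient here is a commutation property—that $\valexp(e^{k})$ coincides with the $\Gamma$-projection of $\valexp(e^{k+1})$, equivalently that $\shape(e^{k})$ is the $\Gamma$-projection of $\shape(e^{k+1})$—which I would discharge by structural induction on $e$, following each operator's clause in the definition of $\valexp$; this is routine because projection distributes over the shape and attribute maps from which these folds are built. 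I expect the main obstacle to be exactly this bookkeeping: verifying at the atom level that projection produces a sub-conjunction with unchanged clause values, and, in the multi-axis case, that the projections $\Gamma_j$ induced by the canonical bijections of the rank class keep every per-axis fold a genuine sub-conjunction. Once that is pinned down, the lift through the boolean connectives is immediate from \lemmaref{substitution}.
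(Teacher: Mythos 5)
Your proposal is correct and follows essentially the same route as the paper's proof: you establish the atom-level fact that each $(k\!+\!1)$-ranked fold implies its $\Gamma$-projected $k$-ranked fold (the latter being a sub-conjunction of the former), and then lift this through the boolean structure of $\Phi_{\wedge,\vee}$ via \lemmaref{substitution}. Your extra care about simultaneous versus repeated atom substitution, folds over other rank classes, and the commutation of projection with $\valexp$ only makes explicit bookkeeping that the paper's proof treats implicitly.
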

\begin{proof}
It suffices to show the implication for each atomic fold and then lift it to the whole precondition using \lemmaref{substitution}.

Let $f^{k+1} = \mathsf{fold}(g,\mathcal{N}^{k+1})$ be one of the atomic $\fold$s in $C^{k+1}$, defined over an \agg{}-axis in \rclass{} $c$.
Let $\mathcal{N}_l^{k+1} = [m_1^{k+1}, m_2^{k+1}, \cdots]$.
It has the form:
\[
    \mathsf{fold}(g,\mathcal{N}^{k+1}) = \bigwedge_{i \in \mathsf{dom}(m_1)}^{k+1} g(m_1^{k+1}(i), m_2^{k+1}(i), \cdots)
\]
Let $\Gamma$ be the chosen projection of size $k$ and let $\mathcal{N}^{k}=\cproject{\mathcal{N}^{k+1}}{x}{\Gamma}$ denote the projected maps.
Then:
\[
    \mathcal{N}_l^{k} = \mathcal{N}_l^{k+1}|^x_{\Gamma} = [m_1^{k+1}|^x_{\Gamma}, m_2^{k+1}|^x_{\Gamma}, \cdots] = [m_1^k, m_2^k, \cdots]
\]
The corresponding fold $f^k = \mathsf{fold}(g,\mathcal{N}^{k})$ in $C^k$ has the form:
\[
    \mathsf{fold}(g,\mathcal{N}^{k}) = \bigwedge_{j \in \Gamma} g(m_1^{k}(j), m_2^{k}(j), \cdots)
\]

Since $f^{k+1}$ is a conjunction over a superset of indices, if the larger conjunction if \true{}, then sub-conjunction is \true{}, i.e.:
\[
    f^{k+1} \Rightarrow f^k
\]

Thus every $(k\!+\!1)$-fold atom implies its projected $k$-fold atom.
Applying \lemmaref{substitution} repeatedly to substitute each $(k\!+\!1)$-fold atom by its projected counterpart in the whole precondition implies $C^{k+1} \Rightarrow C^{k}$, as required.
The same holds for validity conditions returned by $\valexp$.
\end{proof}

\section{Discussion} \seclabel{discussion-app}

\paragraph{Rewrite Rules with Reductions}
So far, we have assumed that rewrite rules do not contain any of the following operators:  $\reduce, \tdot, \conv$, and $\convbase$. 
These operators share a common characteristic: they perform \emph{reductions} or  \emph{accumulations} over one or more axes. 
Because the axes being reduced may have unbounded sizes, we cannot simply expand  the reduction and operate over all elements explicitly. Instead, we treat the reduction sum symbolically and provide special handling for such elements during verification. 
Moreover, the user supplies a bijection between the \textsf{LHS} and \textsf{RHS} reduction elements, which allows us to reason about them abstractly.
Once this is done, the rule can be symbolically evaluated as usual, and all of our previous results regarding sufficient rank and counterexample lowering continue to hold.

\paragraph{Counting Non-Trivial Conditions}
We introduced a simplifying assumption (for notational ease) in \secref{scalarf} that every condition in a rule is uniformly represented as a conjunction of one $\fold$ per \agg{}-axis, regardless of whether that \agg{}-axis contributes meaningful conditions.
In practice, some conditions may be \emph{trivial}: their associated fold is degenerate and therefore impose no constraints.
Such conditions should not contribute to the sufficient rank computation.

To account for this, we introduce a routine \textsc{NumConds} which takes an \rclass{} $c$ and returns the number of conditions having an aggregated-axis belonging to the \rclass{} $c$ and whose $\fold$ is non-trivial.
Only these conditions are \emph{relevant} and contribute to the sufficient rank of $c$.

\paragraph{Putting it all Together}
After relaxing all the assumptions from \secref{assumptions}, we come up with the following lemma:
\begin{lemma}\lemmalabel{inferbound-app}
Let $R$ be any rewrite rule written in \dsl{}.
Let $m$ be an \rcrank{} map for $R$ and $c$ be an \rclass{} appearing in $R$.
If $k = \textsc{InferBound}(R, c)$, then
\begin{equation*}
    \forall i \geq k, ~ \mathsf{valid}(m[c \mapsto i]) \rightarrow \mathsf{valid}(m[c \mapsto i\!+\!1])
\end{equation*}
\end{lemma}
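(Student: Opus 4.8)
The plan is to obtain \lemmaref{inferbound-app} as the capstone of the lemma chain in \secref{proof}, by discharging the two ingredients still missing at the level of \lemmaref{arbitrary-tensor}: the excluded reduction operators, and the bookkeeping that equates the abstract counts $m$ and $\sum_i \binom{n_{\mathsf{X}_i}}{2}$ with the concrete subroutines $\textsc{NumConds}$ and $\textsc{NumTensorAccess}$ invoked by $\textsc{InferBound}$. Since \lemmaref{arbitrary-tensor} already accommodates arbitrarily many input tensors and \rclasses{}, and already folds in the precondition-preservation result of \theoremref{precond-proj}, the single remaining assumption from \secref{assumptions} to remove is the absence of $\reduce$, $\tdot$, $\conv$, and $\convbase$.

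First I would handle reductions, following \secref{discussion-app}. A rule containing these operators is rewritten so that every reduction result becomes an uninterpreted element $\mathsf{Red}^\oplus_X f(X)$, and the user-supplied si-relation is verified by the SMT solver to be a bijection between the \textsf{LHS} and \textsf{RHS} reduction indices, with the summands proved equal pointwise (as in \secref{verifyreduction}). Once the reductions are abstracted, the key observation is that $\sem{\mathsf{LHS}}[A] = \sem{\mathsf{RHS}}[A]$ again collapses to the canonical $\scalarf$ form of \eqnref{scalar-form}: the reduction element is simply another scalar argument to $\scalarf$, and the reduced \agg{}-axes of \rclass{} $c$ behave exactly like the $\expand$-introduced axes already treated in \lemmaref{arbitrary-agg} and \lemmaref{arbitrary-rclass}. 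I would then argue that counterexample lowering extends verbatim to the reduction indices: projecting a reduced \agg{}-axis of \rclass{} $c$ along $\Gamma$ is well-defined because the established bijection is rank-polymorphic and commutes with the canonical map $\mathsf{MapAxes}(c,\cdot,\cdot)$ under projection. Consequently none of the case analyses of the earlier lemmas are disturbed, and \lemmaref{arbitrary-tensor} applies to the abstracted rule.

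Second, I would match the sufficient rank of \lemmaref{arbitrary-tensor}, namely $\mathsf{max}(m + \sum_{i=1}^q \binom{n_{\mathsf{X}_i}}{2}, 1)$, to the value returned by $\textsc{InferBound}(R,c)$. The tensors $\mathsf{X}_1,\dots,\mathsf{X}_q$ carrying an \agg{}-axis of \rclass{} $c$ are exactly $\textsc{TensorsWithRClass}(R,c)$, and $n_{\mathsf{X}_i} = \textsc{NumTensorAccess}(R,\mathsf{X}_i)$; since $\binom{n}{2}=0$ for $n \leq 1$, the algorithm's guard $n>1$ leaves the sum unchanged. For the condition count I would appeal to the ``Counting Non-Trivial Conditions'' paragraph of \secref{discussion-app}: trivial folds fall into the $r=p$ branch of the case analysis in \lemmaref{arbitrary-agg}/\lemmaref{arbitrary-rclass} and contribute no constraint, so replacing $m$ by the number of \emph{non-trivial} folds over \rclass{} $c$, which is precisely $\textsc{NumConds}(R,c)$, cannot weaken the bound. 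These identifications show $k = \textsc{InferBound}(R,c)$ equals the sufficient rank, and the induction step $\forall i \geq k,\ \mathsf{valid}(m[c \mapsto i]) \rightarrow \mathsf{valid}(m[c \mapsto i+1])$ follows immediately from \lemmaref{arbitrary-tensor}.

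The main obstacle is the first step. The slogan that ``all previous results continue to hold'' once reductions are abstracted hides a genuine verification: one must check that the reduction-index bijection is preserved by the projection operator $\cproject{\alpha}{x}{\Gamma}$, and that projecting a reduced axis of \rclass{} $c$ yields a consistent $k$-ranked reduction element, so that the strengthened set-equivalence condition still witnesses equality at the lowered rank. Because this bijection is the only manual hint and may fail to exist when the cardinalities of valid reduction indices differ across the two sides, I expect the cleanest route is to restrict attention to rules whose reductions have already passed the bijection check of \secref{verifyreduction}, and to establish a short auxiliary lemma stating that projection commutes with that bijection; the remainder of the argument is then routine reuse of \lemmaref{arbitrary-tensor}.
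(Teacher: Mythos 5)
Your proposal follows essentially the same route as the paper: it caps the lemma chain of \secref{proof} at \lemmaref{arbitrary-tensor}, removes the remaining operator restriction by treating reductions as uninterpreted elements with the user-supplied bijection (exactly as in \secref{discussion-app}), and identifies the abstract counts $m$ and $\sum_i\binom{n_{\mathsf{X}_i}}{2}$ with $\textsc{NumConds}$ and the $\textsc{TensorsWithRClass}$/$\textsc{NumTensorAccess}$ subroutines, including the observation that trivial folds contribute no constraint. Your flagged obstacle—that projection must be shown to commute with the reduction-index bijection—is precisely the step the paper itself leaves informal ("all of our previous results \ldots continue to hold"), so your proposed auxiliary lemma is a strengthening of, not a deviation from, the paper's argument.
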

In other words, for all $i \geq k$ and for \emph{any} ranks of the other \rclasses{}, if the rule is valid when $c$ has rank $i$, then it implies that the rule is valid when $c$ has rank $i\!+\!1$.
The \textsc{InferBound} routine is described in \algref{infer-bound-app}.

\SetKwComment{Comment}{/*}{ */}
\SetKwInOut{Input}{Inputs}
\SetKwInOut{Output}{Output}
\begin{algorithm}[t]
    \caption{Sufficient Rank for an \rclass{}} \alglabel{infer-bound-app}
    \Input{$\,$ Rewrite rule $R$ \& \rclass{} $c$}
    \Output{$\,$ $bound$, i.e., a sufficient rank for $c$}
    \SetKwProg{Fn}{Function}{ :}{}
    \Fn{\textsc{InferBound} ($R$, $c$)}{
        $bound \gets 0$\;
        $tensors \gets \textsc{TensorsWithRClass}(R, c)$\;
        \For{$t \in tensors$}{
            $n \gets \textsc{NumTensorAccess}(R, t)$\;
            \If{$n > 1$}{
                $bound \gets bound + \binom{n}{2}$\;
            }
        }
        $bound \gets bound + \textsc{NumConds}(R, c)$\;
        \textbf{return} $\textsc{Max}(bound, 1)$ 
    }
    \textbf{End Function}
\end{algorithm}

\paragraph{\textsf{RHS} Expression Validity}
Our definition of rewrite rule validity (\defref{semantic-valid}) states that the \textsf{LHS} and \textsf{RHS} expressions should be equal for all variable valuations, given the precondition $C$ holds and the \textsf{LHS} expression is valid, i.e., $\valexp(\mathsf{LHS})$ holds.
In reality, we require a stronger postcondition that the \textsf{RHS} expression should be valid, i.e., $\valexp(\mathsf{RHS})$ holds.
On taking contrapositives and considering $(k\!+\!1)$- and $k$-ranked instantiations, we get:
\begin{align*}
    & \exists ~ v^{k+1} \in \vars^{k+1}, C^{k+1} \wedge \valexp(\mathsf{LHS}^{k+1}) \\
    &\phantom{\cdots\cdots\cdots\cdots} \wedge \left(\sem{\mathsf{LHS}^{k+1}} \neq \sem{\mathsf{RHS}^{k+1}} \vee \neg\valexp(\mathsf{RHS}^{k+1})\right) \\
    &\phantom{\scalarf(\mathsf{X}^{k+1}[x^{k+1} \mapsto \fmap}{\big\downarrow} \\
    &\qquad\exists ~ v^k \in \vars^k, C^k \wedge \valexp(\mathsf{LHS}^{k}) \\
    &\phantom{\cdots\cdots\cdots\cdots} \wedge \left(\sem{\mathsf{LHS}^{k}} \neq \sem{\mathsf{RHS}^{k}} \vee \neg\valexp(\mathsf{RHS}^{k})\right)
\end{align*}

Thus there are two distinct kinds of counterexamples to lower from rank $k\!+\!1$ to $k$:
\begin{itemize}
    \item Mismatch counterexamples:
    If $\sem{\mathsf{LHS}} \neq \sem{\mathsf{RHS}}$ in rank $k\!+\!1$, then we lower to a witness of $\sem{\mathrm{LHS}} \neq \sem{\mathrm{RHS}}$ in rank $k$.
    This is handled in \secref{proof}: the sufficient rank given by \algref{infer-bound-app} guarantees a counterexample lowering for this case.

    \item \textsf{RHS} invalidity counterexamples:
    If $\neg\valexp(\mathsf{RHS})$ holds in rank $k\!+\!1$, then we lower to a witness of $\neg\valexp(\mathsf{RHS})$ in rank $k$.
    The condition $\valexp(\mathsf{RHS})$ lies in the fragment $\Phi_\wedge$, i.e., it is a conjunction of $\fold$-atoms.
    % Thus, $\neg\valexp(\mathsf{RHS})$ is a disjunction of negated $\fold$-atoms.
    As in the condition-analysis (\lemmaref{arbitrary-cond}), any $(k\!+\!1)$-ranked falsifying $\fold$-conjunction contains at least one $\fold$, say $f^{k+1}$, whose truth value if \false{}.
    To ensure that $\valexp(\mathsf{RHS}^{k})$ is \false{}, we ensure that $f^k$ is \false{}.
    This can be done by choosing an appropriate \ind{}-axis, giving a sufficient rank of 1.
\end{itemize}

For worst–case bounds we take the maximum of the two cases.
Since each branch is already at least 1, the sufficient rank given by \algref{infer-bound-app} remains unchanged.

\paragraph{Input Tensor Aliasing in Tensor Graph Rewrites}

Our analysis in \lemmaref{arbitrary-tensor} assumes that distinct input tensors do not alias.
For example, in the tensor expression $e = \binary(+,\mathsf{X}, \mathsf{Y})$, we assume that \textsf{X} and \textsf{Y} are different tensors.
In such a case, during counterexample construction, accesses to different tensors are treated as independent, and no cross-tensor constraints arise.
Therefore, the sufficient rank decomposes cleanly into the sum of per-tensor contributions.

However, a compiler matching a symbolic expression $e$ inside a tensor computation graph, may encounter a match such as $\binary(+, t, t)$, where both operands refer to the same concrete tensor.
This breaks the independence assumption: accesses that originate syntactically from different tensors may in fact refer to the same underlying source, thereby introducing additional cross-tensor interactions.
To account for such cases, (1) one can verify a different rewrite capturing the aliasing relation, or (2) adjust the sufficient rank to account for additional cross-tensor interactions and constraints introduced from them.
More specifically, for tensors $\mathsf{X}_1, \cdots, \mathsf{X}_q$ with $n_{\mathsf{X}_1}, \cdots, n_{\mathsf{X}_q}$ number of accesses, respectively, the new constraints are bounded by $\binom{\sum_{i}n_{\mathsf{X}_i}}{2}$, instead of $\sum_i \binom{n_{\mathsf{X}_i}}{2}$.

\end{document}